\documentclass[11pt]{article}
\usepackage[margin=1in]{geometry}
\usepackage[utf8]{inputenc}
\usepackage{outlines}
\usepackage{amsmath}
\usepackage{amsthm}
\usepackage{amssymb}
\usepackage{amsfonts}
\usepackage{comment}
\usepackage[table, dvipsnames]{xcolor}
\usepackage{graphicx}
\usepackage{capt-of}
\usepackage{mathtools}
\usepackage[font=small]{caption}
\usepackage[labelformat=simple]{subcaption}

\usepackage{mathdots}
\usepackage{algorithm}
\usepackage{algpseudocode}
\usepackage{enumitem}
\usepackage[normalem]{ulem}
\usepackage{physics}
\usepackage{slashed}
\usepackage{braket}
\usepackage{nccmath}
\usepackage{float}
\usepackage{dcolumn}
\usepackage{tikz}
\usepackage{complexity}
\usetikzlibrary{quantikz2}
\usepackage{bm}
\usepackage{array}
\usepackage{color}
\usepackage{wrapfig}
\usepackage{microtype}
\usepackage{empheq}

\usepackage{hyperref}
\hypersetup{
    colorlinks=true,
    linkcolor=blue,
    filecolor=magenta,      
    urlcolor=cyan,
    citecolor=cyan,
    pdftitle={Quantum lower bound for simulating fluid dynamics},
    pdfpagemode=FullScreen,
}
\usepackage{todonotes}
\usepackage{tablefootnote}
\usepackage{thmtools}
\usepackage{thm-restate}
\usepackage[noabbrev,capitalise]{cleveref}
\usepackage{tabularx}

\theoremstyle{plain}
\newtheorem{problem}{Problem}
\newtheorem{thm}{Theorem}

\newtheorem{corr}[thm]{Corollary}

\newtheorem{lem}[thm]{Lemma} 
\Crefname{lem}{Lemma}{Lemmas}
\crefname{lem}{Lemma}{Lemmas}
\Crefname{fact}{Fact}{Facts}
\crefname{fact}{Fact}{Facts}

\theoremstyle{definition}
\newtheorem{defn}[thm]{Definition}

\newtheorem{remark}{Remark}

\let\originalleft\left
\let\originalright\right
\renewcommand{\left}{\mathopen{}\mathclose\bgroup\originalleft}
\renewcommand{\right}{\aftergroup\egroup\originalright}

\usepackage{authblk}

\title{Quantum lower bounds for simulating fluid dynamics}
\author[1,*]{Abtin Ameri\rlap{,}}
\author[2,**]{Joseph Carolan\rlap{,}}
\author[2,\dag]{Andrew M.~Childs\rlap{,}}
\author[3,\ddag]{Hari Krovi}
\affil[1]{Plasma Science and Fusion Center, Massachusetts Institute of Technology, Cambridge, MA\vspace{2pt}}
\affil[2]{Department of Computer Science, Institute for Advanced Computer Studies, and Joint Center for Quantum Information and Computer Science, University of Maryland, College Park, MD\vspace{2pt}}
\affil[3]{IBM Research, Cambridge, MA\vspace{2pt}}
\affil[ ]{
$^{*}$aameri@mit.edu
 \quad
$^{**}$jcarolan@umd.edu  \quad
$^{\dag}$amchilds@umd.edu \quad
$^{\ddag}$hari.krovi@ibm.com
}
\date{\today}

\begin{document}

\maketitle

\begin{abstract}
    Developing quantum algorithms to simulate fluid dynamics has become an active area of research, as accelerating fluid simulations could have significant impact in both industry and fundamental science. While many approaches have been proposed for simulating fluid dynamics on quantum computers, it is largely unclear whether these algorithms will provide speedup over existing classical approaches. In this paper we give evidence that quantum computers cannot significantly outperform classical simulations of fluid dynamics in general. We study two models of fluids: the Korteweg-de Vries (KdV) equation, which models shallow water waves, and the incompressible Euler equations, which model ideal, inviscid fluids. We show that any quantum algorithm simulating the KdV equation or the Euler equations for time $T$ requires $\Omega(T^2)$ and $e^{\Omega(T)}$ copies of the initial state in the worst case, respectively. These lower bounds hold for the task of preparing the final state, and similar bounds hold for history state preparation. We prove the lower bound for the KdV equation by investigating divergence of solitons. For the Euler equations, we show that instabilities enable fast state discrimination.
\end{abstract}
\vspace{20pt}
\begin{center}
\noindent\rule{14cm}{0.5pt}\\
\vspace{5pt}
    \parbox{14cm}{\textit{This paper is dedicated to the memory of Nuno Filipe Gomes Loureiro (1977--2025). An exceptional plasma physicist, colleague, and mentor, Nuno was very enthusiastic about understanding the impact of quantum computing on simulating differential equations. His passion for bringing together different areas of research made this collaboration possible.}}\\ 
    \vspace{5pt}
\noindent\rule{14cm}{0.5pt}
\end{center}

\newpage

\tableofcontents
\newpage

\section{Introduction}

Computational fluid dynamics (CFD) is an active area of research with applications in many fields of science and engineering, including aerospace, astrophysics, atmospheric science, automotive design, and medicine, among others. While CFD has diverse applications, implementing it can be challenging, especially for two- and three-dimensional flows, complicated geometries, and high spatial resolution. A plethora of different approaches can be used to solve fluid equations, from grid-based methods such as finite difference, finite element, finite volume, and the discontinuous Galerkin method~\cite{toro2013riemann,hesthaven2008nodal,durran2010numerical,ferziger2019computational}, to kinetic approaches such as lattice Boltzmann methods~\cite{kruger2017lattice} and gridless methods such as smoothed-particle hydrodynamics~\cite{monaghan1992smoothed}. Some of these numerical methods are heavily employed in other fields such as plasma dynamics, where fluid models of plasmas (e.g., magnetohydrodynamics) are commonly solved using CFD techniques~\cite{jardin2010computational}.

In almost all cases of interest, supercomputers are required to run fluid simulations in a reasonable time. Despite access to high-performance computing, state-of-the-art simulations remain time- and energy-intensive. Thus, accelerating them could be a significant breakthrough with major economic impact.

Quantum computers have been considered as a possible way to speed up fluid simulations, considering their ability to accelerate other problems like prime factorization~\cite{shor1999polynomial} and quantum simulation~\cite{lloyd1996universal}. Indeed, designing quantum algorithms to simulate differential equations has been an active area of research over the past two decades. Such algorithms can achieve an exponential reduction in space, and potentially in running time, by storing the solution as a quantum state instead of as an explicit vector.
The first efficient quantum algorithm for linear ordinary differential equations (ODEs) was proposed by Berry~\cite{berry2014high}. Subsequent work reduced the complexity of the algorithm and relaxed certain constraints~\cite{berry2017quantum,krovi2023improved}. These algorithms rely on discretizing time and recasting the discretized differential equation as a system of linear equations, for which quantum linear system algorithms (QLSAs) can be used~\cite{harrow2009quantum,childs2017quantum,costa2022optimal}. The output of the QLSA is a state encoding the history of the dynamics,
from which the normalized solution at the final time can be obtained by post-selection.

Several other approaches to quantum algorithms for linear ODEs were later proposed that did not rely on QLSAs. The first is the time-marching algorithm~\cite{fang2023time}, which adapts the classical approach of updating the solution by marching forward in time for quantum computers. Another approach is based on linear combination of Hamiltonian simulations~\cite{an2023linear,an2023quantum,low2025optimal}, which recasts the solution of a linear inhomogeneous ODE as an integral of unitary operations, which is then discretized and computed through linear combination of unitaries. Other proposed approaches include quantum eigenvalue processing~\cite{low2024quantum} and using Lindbladians to simulate dissipative differential equations~\cite{shang2025designing}.

While quantum algorithms for linear ODEs have significantly matured, developing efficient quantum algorithms for nonlinear ODEs is much less explored. There are several reasons for this. First, the unitarity of quantum computation makes performing nonlinear operations expensive. In fact, before Berry's pioneering work on linear ODEs~\cite{berry2014high}, Leyton and Osborne~\cite{leyton2008quantum} showed that a simple time-marching quantum algorithm for general nonlinear dynamics (i.e., a quantization of existing classical algorithms) up to time $T$ uses $e^{O(T)}$ copies of the initial state. The main obstacle in their approach is the no-cloning theorem, which prevents one from copying arbitrary quantum states. Second, it has been shown that if quantum mechanics were nonlinear, one would be able to solve $\mathsf{NP}$- and $\#\mathsf{P}$-hard problems efficiently~\cite{abrams1998nonlinear}, which is considered to be unlikely. Furthermore, tasks like unstructured search and state discrimination can be accelerated under nonlinear quantum mechanics~\cite{abrams1998nonlinear,aaronson2005np,childs2016optimal}. This implies that one should not be able to develop an efficient quantum algorithm for simulating \textit{general} nonlinear dynamical systems. 

However, this leaves open the possibility of efficient quantum algorithms for special classes of nonlinear differential equations.
Such an algorithm was given by Liu et al.\ for dissipative ODEs with quadratic nonlinearity~\cite{liu2021efficient}:
\begin{equation} \label{eq:quadratic-nonlin}
    \frac{d\mathbf u}{dt} = F_1 \mathbf u + F_2 \mathbf u^{\otimes 2} + F_0(t), \ \mathbf{u}(0) = \mathbf{u}_0.
\end{equation}
The strength of the nonlinearity is characterized by the parameter
\begin{equation} \label{eq:R}
    \mathrm{R} \coloneq \frac{1}{|\mathrm{Re}(\lambda_1)|}\left (\|\mathbf u_0\|\|F_2\| + \frac{\|F_0\|}{\|\mathbf{u}_0\|} \right ),
\end{equation}
where $\lambda_1$ is the least dissipative eigenvalue of $F_1$. Provided $\mathrm{R}<1$---meaning that the nonlinearity is sufficiently weak compared to the dissipation and driving---and $\|F_0\| \le \|F_2\|$, the authors gave an algorithm running in time polynomial in $T$ and $\log N$, where $N$ is the dimension of $\mathbf{u}$, that produces a quantum state proportional to $\mathbf{u}(T)$. Subsequent papers generalized the quantum algorithm and improved its complexity~\cite{krovi2023improved,costa2025further,wu2025quantum,jennings2025quantum}. On the other hand, \cite{liu2021efficient} shows that for strong nonlinearities (specifically, for $\mathrm{R}\geq \sqrt{2}$), any quantum algorithm simulating Eq.~\eqref{eq:quadratic-nonlin} requires $e^{\Omega (T)}$ copies of the initial state. This lower bound was later improved to only require $\mathrm{R} \geq 1$~\cite{lewis2024limitations}.

The aforementioned lower bound for nonlinear ODEs is worst-case. Thus, it is unclear if it applies to specific nonlinear dynamical systems of interest, such as the equations of fluid dynamics. Indeed, following the results of~\cite{liu2021efficient}, there have been many proposed approaches for quantum simulation of fluid dynamics with high Reynolds numbers, with the Reynolds number being analogous to $\mathrm R$ in Eq.~\eqref{eq:R}. For instance, the lattice Boltzmann method gained popularity as it was conjectured to be able to simulate subsonic flows with arbitrarily high Reynolds numbers~\cite{itani2022analysis,itani2024quantum,li2025potential}. A more thorough analysis of lattice Boltzmann algorithms claims that polynomial quantum advantage in terms of the Reynolds number is achievable in an end-to-end setting~\cite{jennings2025end}. 

While the proposed algorithms above seem promising, Lewis et al.\ argue that turbulent flow (i.e., flow at high Reynolds number) cannot be simulated efficiently since such dynamics have chaotic behavior~\cite{lewis2024limitations}. As chaotic dynamics has at least one positive Lyapunov exponent~\cite{strogatz2024nonlinear}, two nearby trajectories can be pulled apart quickly in the linearized regime of the dynamics, allowing for accelerated state discrimination. However, there are a few limitations to this argument. First, its analysis considers only the linearized regime. This approximation is not accurate for all times, and there is no guarantee that the solution remains close to the linearized dynamics as the states significantly separate. In fact, bounding the error from the linearization is a highly nontrivial task. Second, the argument considers general chaotic dynamical systems, and the authors argue qualitatively that such features should arise in fluid dynamical simulations with high Reynolds number flows. However, they do not provide a rigorous proof of this. Such a proof might be very difficult, for two reasons. First, it is not known how to analytically identify the chaotic regime in phase space (i.e., the location of the chaotic attractors) for fluid flows without heuristics. Second, once the region is identified, the Lyapunov exponent also cannot be calculated analytically.

\subsection{Technical contribution}
Below we outline the main contributions of this work.
\subsubsection{Advances over prior work}
We improve upon the results of~\cite{lewis2024limitations} in several ways. First, we show lower bounds for equations specific to fluid dynamics, namely the Korteweg-de Vries (KdV) equation
\begin{equation}
    \frac{\partial \phi}{\partial t} + \frac{\partial^3 \phi}{\partial x^3} - 6\phi \frac{\partial \phi}{\partial x} = 0,
\end{equation}
that models shallow water waves, and the incompressible Euler equations
\begin{equation}
    \begin{aligned}
        \frac{\partial \mathbf u}{\partial t} + (\mathbf u \cdot \nabla)\mathbf u &= -\frac{1}{\rho}\nabla P \\
        \nabla \cdot \mathbf u &= 0,
    \end{aligned}
\end{equation}
that model inviscid fluids. Second, our proofs are rigorous and do not invoke any heuristics or assumptions. Last, for the Euler equations, we show that an exponential lower bound can be obtained by considering instabilities (specifically, unstable fixed points in phase space) instead of chaotic dynamics. Considering unstable fixed points as opposed to chaotic regions of phase space increases the range of systems for which similar lower bounds can be proven because there are many systems that have unstable fixed points but are not chaotic. A classic example is the Kuramoto-Sivashinsky (KS) equation, which is a 1D nonlinear partial differential equation (PDE) of the form
\begin{gather}
    \frac{\partial u}{\partial t} + \frac{\partial^2 u}{\partial x^2} + \nu \frac{\partial^4 u}{\partial x^4}+ u\frac{\partial u}{\partial x} = 0.
\end{gather}
The KS equation is one of the simplest nonlinear PDEs that has been extensively studied, and can be used to model various physical phenomena such as flame propagation~\cite{michelson1977nonlinear,kuramoto1978diffusion,sivashinsky1988nonlinear}. The form of its nonlinearity is also identical to that of the KdV equation and similar to that of the Euler equations. It is well known that the dynamics of the KS equation become increasingly complex as $\nu$ decreases. For large $\nu$, the dynamics are relatively simple and stable. For $\nu$ sufficiently small, the dynamics become chaotic. However, in some intermediate regime, the dynamics exhibit instabilities without becoming chaotic~\cite{papageorgiou1991route,smyrlis1991predicting}. Thus, our lower bound proof technique may also extend to the KS equation in this unstable regime.

\subsubsection{Summary of main results and high-level overview of proofs}

Our main result gives lower bounds that show hardness of simulating fluid dynamics on quantum computers. For the KdV equation, we prove a lower bound for final state preparation that is polynomial in simulation time $T$. For the Euler equations, the lower bound for preparing the final state is exponential in $T$. As a corollary to these results, we provide lower bounds for history state preparation as well; a similar exponential lower bound applies to the Euler equations, though we obtain a weaker bound for the KdV equation.

We prove the lower bounds by investigating how fast states can separate when simulating fluid dynamics. While similar approaches have been used previously in the context of nonlinear quantum mechanics~\cite{lloyd1996universal,aaronson2005np,childs2016optimal} and general nonlinear dynamics~\cite{liu2021efficient}, our present goal is to show hardness results for specific equations of fluid dynamics, which is more technically challenging.

For the KdV equation, we use well-known analytical solutions, namely solitons. By examining how quickly two solitons with large initial overlap separate from each other, we constrain how efficiently a quantum algorithm can reproduce the dynamics.

\begin{figure}[ht!]
    \centering
    \includegraphics[width=0.7\linewidth]{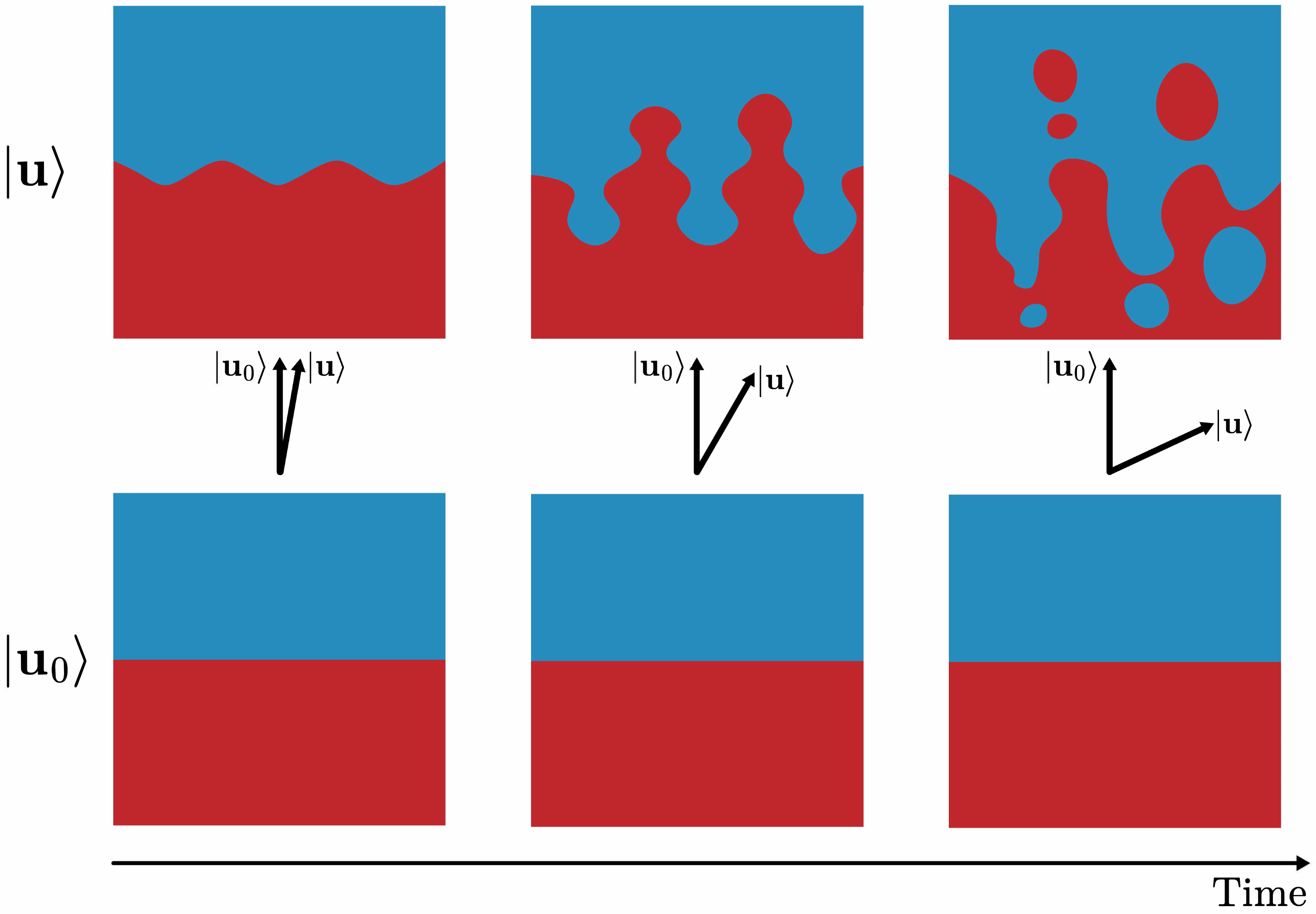}
    \caption{Visual demonstration of the technique used for proving our lower bound. Given an unstable equilibrium (bottom row) and a small perturbation to it (top row), the two states will be driven apart from each other very quickly. Visualized here is the Rayleigh-Taylor instability, whereby a dense fluid resting on top of a less dense fluid (with gravity pointing downwards) becomes unstable to perturbations~\cite{drazin2004hydrodynamic,chandrasekhar2013hydrodynamic}.}
    \label{fig:lower-bound-visual}
\end{figure}

For the Euler equations, we take advantage of well-known fluid instabilities. The Euler equations admit a wide variety of equilibria (i.e., fixed points of the dynamics). Some of these equilibria are stable, where small perturbations about the equlibrium relax and return the flow back to the equilibrium. However, there are also unstable equilibria, where a slight deviation from the equilibrium is amplified and the flow departs from the original fixed point. Such instabilities are characterized by an initial linear phase, where the perturbation grows at an exponential rate, followed by a transition to fully nonlinear dynamics. We show that instabilities can cause states with large initial overlap to become highly distinguishable, and in fact, can do so exponentially fast: for an initial overlap of $1-\epsilon$, the time to reduce the overlap to constant is only $\poly(\log(1/\epsilon))$. This is visually demonstrated in Figure~\ref{fig:lower-bound-visual}.
The main challenge with this approach is that we do not know how to give analytical solutions describing instabilities.\footnote{While some nonlinearly unstable solutions of the Navier-Stokes equations are known~\cite{polyanin2009nonlinearnavier,polyanin2009nonlinearhydro}, they are highly unphysical. We seek a lower bound for a physically meaningful scenario.} To circumvent the inaccessibility of exact nonlinear solutions, we first linearize the equations and solve those new equations. Then, we bound the error from linearization. This is a challenging task and cannot be done for general nonlinear dynamical systems. Fortunately, we find that it is possible to obtain an error bound for the incompressible Euler equations by taking advantage of a few key properties, namely the incompressibility condition, the two-dimensionality of the flow, as well as the boundary conditions we impose. Combining the linearized solution with the linearization error bound gives a bound on the inner product of the equilibrium and the nonlinear solution. It is crucial that the error bound from linearization is tight enough, as one would not otherwise be able to accurately pinpoint the behavior of the nonlinear solution, as illustrated in Figure~\ref{fig:error-bound-visual}. We find the bound to be sufficiently tight, allowing us to demonstrate the desired lower bound. 

\begin{figure}[ht!]
    \centering
    \includegraphics[width=0.7\linewidth]{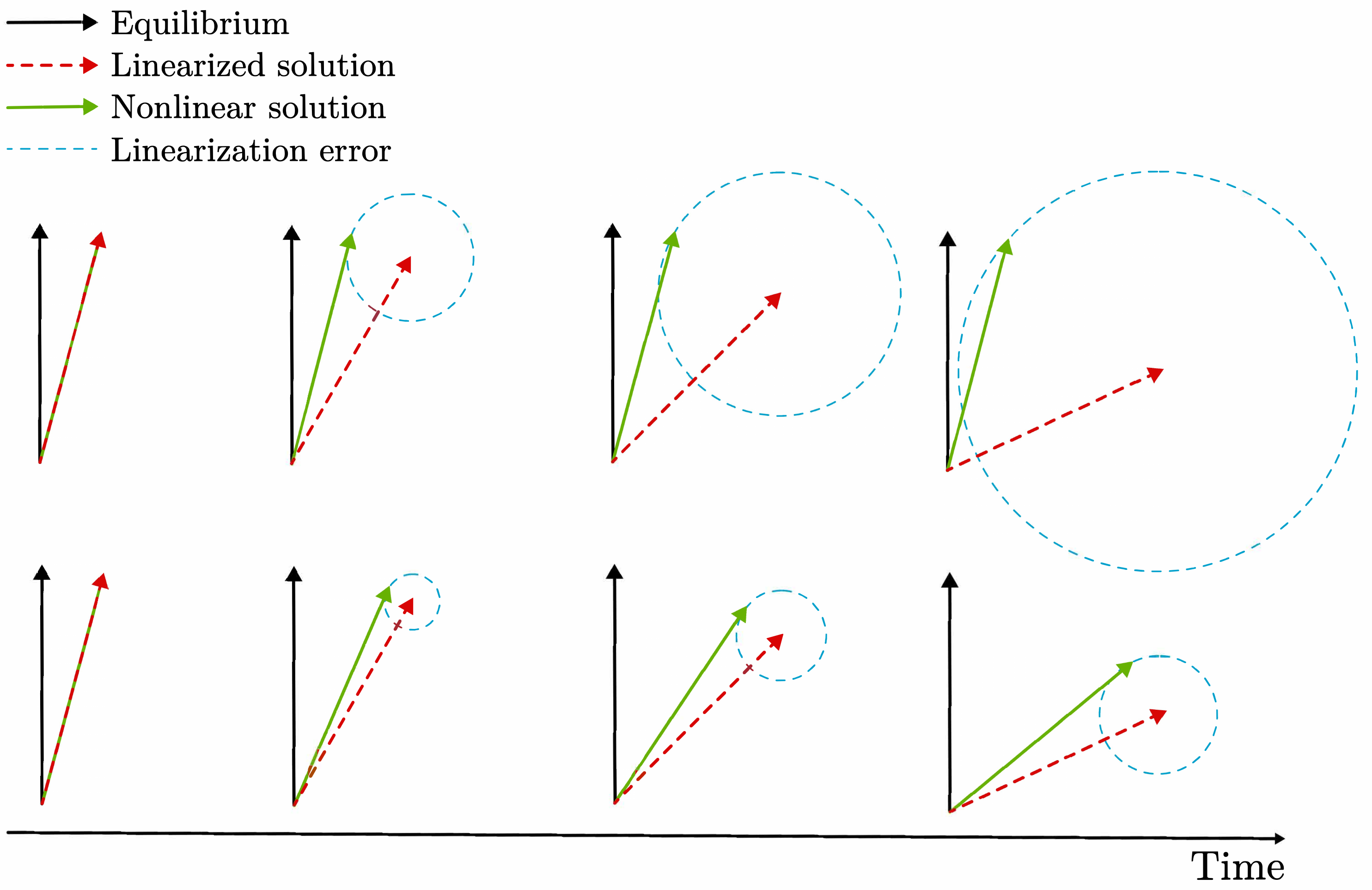}
    \caption{As the nonlinear solution (green) cannot be analytically obtained, we rely on the linearized solution (dashed red) to prove the lower bound. By bounding the error from linearization (dashed blue), we can bound the inner product between the equilibrium and the nonlinear solution. However, if the linearization error bound is not sufficiently tight, as shown in the top row, we would not be able to show the desired lower bound for simulating the nonlinear dynamics. Thus, it is crucial to obtain a sufficiently tight bound, as shown in the bottom row, to prove the desired result.
    }
    \label{fig:error-bound-visual}
\end{figure}

While our exponential lower bound is for the Euler equations, we expect that a similar bound should also apply to the Navier-Stokes equations (which model viscous fluids) with sufficiently high Reynolds numbers. Furthermore, we note that our lower bound is in terms of the simulation time $T$ only. Much of research in quantum algorithms for fluid dynamics is focused on attaining speedups with respect to the Reynolds number, as that determines the grid resolution required and is a resource bottleneck. Our work does not provide a lower bound in terms of the Reynolds number. However, if an exponential lower bound in $T$ indeed holds for high-Reynolds number viscous flows, this is likely to render long-time simulation of many fluid dynamics problems of interest intractable for quantum computers despite advantage with respect to the Reynolds number.

We also point out that both the KdV equation and the Euler equations are dissipationless. Thus, for any faithful discretization of these equations, the $\mathrm{R}$ parameter defined in~\cite{liu2021efficient} is infinite. However, this does not necessarily mean that the simulation of these equations should be intractable, as the $\mathrm{R}<1$ condition for the convergence of Carleman linearization is sufficient but not necessary. Our results show that simulation of some dissipationless equations is intractable, but of course there may still be special classes of fluid equations that can be simulated efficiently.

\subsection{Organization of paper}
The rest of the paper is structured as follows. Section~\ref{sec:preleminaries} clarifies the notation and provides background information on the KdV and Euler equations, state discrimination, as well as definitions of the problems for which we show lower bounds. Section~\ref{sec:kdv} establishes a polynomial lower bound for final state preparation of the KdV equation, by investigating how fast nearby solitons diverge. Section~\ref{sec:lb_final_state} proves the exponential lower bound for final state preparation of the Euler equations. This is done by using the smooth Kelvin-Helmholtz instability as a starting point, then showing an exponential lower bound for simulating the linearized Euler equations, and finally proving the lower bound for simulating the fully nonlinear equations by bounding the error from linearization. Section~\ref{sec:history-state} presents the lower bounds for history state preparation. Finally, Section~\ref{sec:future-work} discusses limitations of the lower bounds and whether one can circumvent them, lays out future research directions, and concludes.

\section{Preliminaries} \label{sec:preleminaries}
\subsection{Notation}
\paragraph{Sets.}Let $\mathbb N_0$, $\mathbb{N}$, $\mathbb Z$, $\mathbb R$, and $\mathbb C$ denote the set of nonnegative integers, positive integers, integers, real numbers, and complex numbers, respectively. For $d \in \mathbb N$, we define $[d]_0 \coloneq \{0,1,\dots,d\}$ and $[d]\coloneq \{1,2,\ldots,d\}$. Let $\Omega \subset \mathbb R^d$ be some some arbitrary subset of $\mathbb R^d$ with volume
\begin{equation}
    |\Omega| \coloneq \int_{\Omega} d\Omega.
\end{equation}
Finally, we denote the $d$-torus as $\mathbb T^d\coloneq [0,2\pi)^d$.
\paragraph{Vector fields and vector calculus.}Let $\boldsymbol{f}\colon\Omega \rightarrow \mathbb R^d$ be a vector-valued function. We denote the 2-norm of $\boldsymbol{f}$ as
\begin{equation}
    \|\boldsymbol{f}\| \coloneq \left (\sum_{j=1}^d |f_j|^2 \right)^{1/2}.
\end{equation}
We denote the $L^p$-norm of $\boldsymbol{f}$ for $p\in[1,\infty)$ by
\begin{equation}
    \|\boldsymbol{f}\|_{L^p}\coloneq \left (\int_\Omega \|\boldsymbol{f}\|^p \, d\Omega \right )^{1/p}.
\end{equation}
Furthermore, we define the $L^\infty$-norm as
\begin{equation}
    \|\boldsymbol{f}\|_{L^\infty}\coloneq \sup_{\mathbf{r}\in\Omega} \|\boldsymbol{f}\|.
\end{equation}
We also use the gradient operator, which, in $\mathbb T^d$ with coordinates $(x_1,\dots,x_d)$ and basis $\{\mathbf{e}_1,\dots,\mathbf{e}_d\}$, is defined as
\begin{equation}
    \nabla \coloneq \sum_{j=1}^d \frac{\partial}{\partial x_i} \mathbf{e}_i.
\end{equation}
For any vector field $\boldsymbol{f}$, we define the Jacobian
\begin{align}
    \mathcal D\boldsymbol{f} &\coloneq \begin{bmatrix}
        \frac{\partial f_1}{\partial x_1} & \dots & \frac{\partial f_1}{\partial x_d}  \\
        \vdots  & \ddots & \vdots \\
        \frac{\partial f_d}{\partial x_1} & \dots &  \frac{\partial f_d}{\partial x_d}
    \end{bmatrix}
\end{align}
and its transpose
\begin{equation}
    \nabla \boldsymbol{f} \coloneq \begin{bmatrix}
        \frac{\partial f_1}{\partial x_1} & \dots & \frac{\partial f_d}{\partial x_1}  \\
        \vdots  & \ddots & \vdots \\
        \frac{\partial f_1}{\partial x_d} & \dots &  \frac{\partial f_d}{\partial x_d}
    \end{bmatrix} = [\mathcal D\boldsymbol{f}]^T.
\end{equation}
The vorticity of a vector field $\boldsymbol f$ is defined as the skew-symmetric matrix
\begin{equation}
    \mathcal W_{\boldsymbol f} \coloneq \mathcal D\boldsymbol f - \nabla \boldsymbol f.
\end{equation}
When $d=3$, the vorticity matrix can be associated with the vector field
\begin{equation}
    \boldsymbol{\omega} \coloneq \nabla \times \boldsymbol{f},
\end{equation}
and when $d=2$, this reduces to the scalar
\begin{equation} \label{eq:vort-def}
    \omega \coloneq (\nabla \times \boldsymbol{f}) \cdot \hat{\mathbf k} = \frac{\partial f_2}{\partial x_1} - \frac{\partial f_1}{\partial x_2},
\end{equation}
where $\hat{\mathbf k}$ is the unit vector normal to $\Omega$.

We use the following powerful result.
\begin{lem}[\cite{bahouri2011fourier}, Proposition 7.5] \label{lem:calderon-zygmund}
Let $\mathbf u\colon\Omega\to \mathbb R^d$ be a vector field with gradient in $L^p$. Suppose $\nabla \cdot \mathbf u = 0$. Then there exists a constant $C$, depending only on the dimension $d$, such that for any $1<p<\infty$ we have
    \begin{equation}
        \|\nabla \mathbf u\|_{L^p} \leq C \frac{p^2}{p-1} \|\mathcal W_{\mathbf u} \|_{L^p}.
    \end{equation}
\end{lem}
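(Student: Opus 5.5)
The plan is to reduce the estimate to the $L^p$-boundedness of Riesz transforms and to track the $p$-dependence of the Calderón--Zygmund constant. I will work on $\Omega=\mathbb R^d$ or $\mathbb T^d$; on the torus I assume $\mathbf u$ has zero mean, so that only nonzero Fourier modes occur.

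\textbf{Step 1: the algebraic identity.} Differentiate the entries of the vorticity matrix and use $\nabla\cdot\mathbf u=0$. This gives
\begin{equation*}
\sum_{j}\partial_j (\mathcal W_{\mathbf u})_{ij} \;=\; \sum_j \partial_j\partial_j u_i - \partial_i \sum_j \partial_j u_j \;=\; \Delta u_i .
\end{equation*}
Hence
\begin{equation*}
u_i=\Delta^{-1}\sum_j \partial_j (\mathcal W_{\mathbf u})_{ij},
\qquad
\partial_k u_i = \sum_j \partial_k\partial_j\Delta^{-1} (\mathcal W_{\mathbf u})_{ij} = -\sum_j R_kR_j (\mathcal W_{\mathbf u})_{ij},
\end{equation*}
where $R_k$ is the Riesz transform with symbol $-i\xi_k/|\xi|$. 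This is the Biot--Savart law in matrix form.

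\textbf{Step 2: reduction to Riesz transforms.} Each entry of $\nabla\mathbf u$ is a sum of at most $d$ terms $R_kR_j$ applied to entries of $\mathcal W_{\mathbf u}$. It therefore suffices to prove
\begin{equation*}
\|R_kR_j f\|_{L^p}\le C_d\,\frac{p^2}{p-1}\,\|f\|_{L^p}.
\end{equation*}
Summing the entrywise bounds then gives the lemma, with a $d$-dependent factor absorbed into $C$.

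\textbf{Step 3: the $p$-dependence.} The operator $R_kR_j$ is a Calderón--Zygmund operator: its symbol $\xi_k\xi_j/|\xi|^2$ is homogeneous of degree $0$ and smooth away from the origin, so Mikhlin--Hörmander applies. The constant is tracked in three stages:
\begin{itemize}
\item Plancherel gives the $L^2$ bound with constant $1$.
\item A Calderón--Zygmund decomposition gives a weak-$(1,1)$ bound with a constant depending only on $d$.
\item Marcinkiewicz interpolation between weak-$(1,1)$ and $L^2$ gives, for $1<p\le 2$, a norm bounded by $C/(p-1)$.
\end{itemize}
The operator is self-adjoint, so duality gives, for $p\ge2$, a norm bounded by $C\,p'/(p'-1)=Cp$. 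Both regimes are dominated by $C\,p^2/(p-1)$: it is at least $C/(p-1)$ when $p\le 2$, and at least $Cp$ when $p\ge2$. On the torus, the periodic case follows by transference, or by applying the same kernel estimates to the periodized kernel on zero-mean functions.

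\textbf{Main obstacle.} I expect the hardest step to be tracking the explicit $p$-dependence in the Marcinkiewicz interpolation, together with the periodic or bounded-domain adaptation. For a general bounded $\Omega$, the representation in Step 1 needs boundary terms. I would avoid this by restricting to $\mathbb T^d$, which is the setting the paper works in; failing that, I would cite the result directly as Proposition 7.5 of \cite{bahouri2011fourier}.
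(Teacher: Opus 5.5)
Your argument is correct and is the standard proof of the result that the paper simply cites from Bahouri--Chemin--Danchin without proving it. The matrix Biot--Savart identity $\Delta u_i=\sum_j\partial_j(\mathcal W_{\mathbf u})_{ij}$ writes each entry of $\nabla\mathbf u$ as a sum of operators $R_kR_j$ applied to entries of $\mathcal W_{\mathbf u}$. Calder\'on--Zygmund bounds of size $\max(p,p')\asymp p+p'=p^2/(p-1)$ then finish the proof.

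Two small repairs are needed:
\begin{enumerate}
\item Marcinkiewicz interpolation between weak-$(1,1)$ and $L^2$ gives $C/(p-1)$ only on, say, $(1,3/2]$, because its constant blows up as $p\to 2^-$. Cover $[3/2,2]$ by Riesz--Thorin instead.
\item Your restriction to $\mathbb R^d$ or $\mathbb T^d$ is necessary, not merely convenient. On a bounded domain, $\mathbf u=\nabla(x_1x_2)$ has $\mathcal W_{\mathbf u}=0$ but $\nabla\mathbf u\neq 0$. On $\mathbb R^d$, it is precisely the hypothesis $\nabla\mathbf u\in L^p$ that kills the harmonic remainder when you write $u_i=\Delta^{-1}\sum_j\partial_j(\mathcal W_{\mathbf u})_{ij}$, since an $L^p$ harmonic function vanishes.
\end{enumerate}
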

\paragraph{Laplace equation.} When working with the Euler equations, we will encounter the Laplace equation
\begin{equation}\label{eq:laplace}
    \nabla^2 \phi = f.
\end{equation}
Here, $\phi,f\colon\Omega \to \mathbb R^d$. We will need the following result on the $L^2$-norm of the solution to the Laplace equation.

\begin{restatable}{lem}{laplace}\label{lem:laplace}
    Consider the Laplace equation in Eq.~\eqref{eq:laplace} on $\Omega = \mathbb T^d$. Then, 
    \begin{equation}
        \|\phi\|_{L^2} \leq \bar \phi + \|f\|_{L^2},
    \end{equation}
    where $\bar \phi\geq 0$ is a constant.
\end{restatable}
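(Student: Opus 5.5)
We will work in Fourier series on $\mathbb T^d$. Expand $\phi(\mathbf r)=\sum_{\mathbf k\in\mathbb Z^d}\hat\phi_{\mathbf k}e^{i\mathbf k\cdot\mathbf r}$ and $f(\mathbf r)=\sum_{\mathbf k}\hat f_{\mathbf k}e^{i\mathbf k\cdot\mathbf r}$; for vector-valued $\phi,f$ we apply the argument componentwise. The equation $\nabla^2\phi=f$ becomes $-\|\mathbf k\|^2\hat\phi_{\mathbf k}=\hat f_{\mathbf k}$ for every $\mathbf k$. At $\mathbf k=\mathbf 0$ this forces $\hat f_{\mathbf 0}=0$, which is the solvability condition since $\int_{\mathbb T^d} f=\int\nabla^2\phi=0$ by periodicity. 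It leaves $\hat\phi_{\mathbf 0}$ completely undetermined, which reflects the fact that $\phi$ is only defined up to an additive constant. For $\mathbf k\neq\mathbf 0$ we have $\hat\phi_{\mathbf k}=-\hat f_{\mathbf k}/\|\mathbf k\|^2$.

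Next we split $\phi=\phi_0+\tilde\phi$, where $\phi_0=\hat\phi_{\mathbf 0}$ is the mean and $\tilde\phi$ is the zero-mean part. By the triangle inequality, $\|\phi\|_{L^2}\le\|\phi_0\|_{L^2}+\|\tilde\phi\|_{L^2}$. We set $\bar\phi:=\|\phi_0\|_{L^2}=|\hat\phi_{\mathbf 0}|\,|\mathbb T^d|^{1/2}\ge 0$, which is a constant.

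For the zero-mean part we use Parseval's identity on the torus:
\begin{equation}
\|\tilde\phi\|_{L^2}^2=(2\pi)^d\sum_{\mathbf k\neq\mathbf 0}\frac{|\hat f_{\mathbf k}|^2}{\|\mathbf k\|^4}\le(2\pi)^d\sum_{\mathbf k\neq\mathbf 0}|\hat f_{\mathbf k}|^2=\|f\|_{L^2}^2.
\end{equation}
The inequality holds because every nonzero integer vector satisfies $\|\mathbf k\|\ge1$; this is the Poincaré inequality with the optimal constant $1$ for the torus of side $2\pi$. The last equality uses $\hat f_{\mathbf 0}=0$. Combining this with the previous paragraph gives $\|\phi\|_{L^2}\le\bar\phi+\|f\|_{L^2}$.

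The only delicate point is regularity: we need $\phi$ and $f$ to lie in $L^2$, so that their Fourier series converge in $L^2$ and Parseval applies. Equivalently, the equation should be understood in the distributional or weak sense. This is standard and is guaranteed by the smoothness of the flows considered later in the paper. Beyond that, the argument has no real obstacle. The constant $\bar\phi$ is exactly the $L^2$ norm of the mean of $\phi$, so it vanishes whenever $\phi$ is normalized to have zero mean.
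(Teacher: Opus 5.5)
Your proposal is correct and follows essentially the same route as the paper: split $\phi$ into its constant mean and zero-mean part, solve mode by mode in Fourier space via $\hat\phi_{\mathbf k}=-\hat f_{\mathbf k}/\|\mathbf k\|^2$, use $\|\mathbf k\|\ge 1$ with Parseval to bound the zero-mean part by $\|f\|_{L^2}$, and finish with the triangle inequality. Your version is slightly more careful in a few places: you take $\bar\phi=|\hat\phi_{\mathbf 0}|\,|\mathbb T^d|^{1/2}$ with an absolute value, whereas the paper sets $\bar\phi$ to the constant part times $|\Omega|^{1/2}$, which need not be nonnegative. You also state the solvability condition $\hat f_{\mathbf 0}=0$ and track the $(2\pi)^d$ Parseval factor explicitly.
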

The proof of this Lemma can be found in Appendix~\ref{app:laplace}.

\paragraph{Hölder's inequality.}
Let $p,q\in[1,\infty]$ with $1/p + 1/q = 1$ and let $f,g\colon\Omega \to \mathbb R$ be scalar functions. Then
\begin{equation}
    \|fg\|_{L^1} \leq \|f\|_{L^p} \|g\|_{L^q}.
\end{equation}

\paragraph{Quantum.} For vectors in $\mathbb C^n$, we use $\norm{\cdot}$ to denote the standard Euclidean norm. For operators over the same space, we use $\norm{\cdot}_1$ to denote the trace norm, defined as
\begin{equation}
    \|A\|_{1} \coloneq \mathrm{Tr}\left[\sqrt{A^\dagger A}\right].
\end{equation}
Pure quantum states are described by unit-norm vectors. Given two pure states $\ket{\psi}, \ket{\phi}$ we denote the Euclidean distance by \begin{align}
    d(\ket{\psi}, \ket{\phi}) &\coloneqq \min_{z, |z|=1} \norm{\ket{\psi} - z\ket{\phi}}.
\end{align}
This distance can also be defined through the overlap as $d(\ket{\psi}, \ket{\phi})=\sqrt{2-2\abs{\braket{\psi \, | \, \phi}}}$.
Given two density matrices (i.e., unit-trace positive semidefinite matrices) $\rho, \sigma$, we denote their trace distance \begin{align}
    D(\rho, \sigma) &\coloneqq \frac{1}{2}\norm{\rho - \sigma}_1.
\end{align}
We exclusively use these distance metrics for quantum states. 

For a pure state $\ket{\psi}$, we denote the corresponding mixed state as $\rho_{\psi} = \ket{\psi}\bra{\psi}$. A quantum channel is a completely positive trace preserving linear map on the space of operators. These characterize all physical operations that can be performed on a quantum device.

We use the following standard facts. Lemmas~\ref{fact:euc-tr-rel} and~\ref{fact:tr-prod-rel} are proven in Appendix~\ref{app:std-proofs}.

\begin{restatable}{lem}{euclidtrace}
    For any two pure states $\ket{\psi}, \ket{\phi}$, the Euclidean and trace distance between them satisfy the inequalities \begin{align}
        \frac{1}{\sqrt{2}}d(\ket{\psi}, \ket{\phi}) \leq D(\rho_\psi, \rho_\phi) \leq d(\ket{\psi}, \ket{\phi}).
    \end{align}
    \label{fact:euc-tr-rel}
\end{restatable}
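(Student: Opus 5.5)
The plan is to express both quantities in terms of the overlap $F \coloneq \abs{\braket{\psi \,|\, \phi}} \in [0,1]$ and then compare them by elementary inequalities. By definition $d(\ket{\psi},\ket{\phi}) = \sqrt{2-2F}$, so the main step is computing $D(\rho_\psi,\rho_\phi)$ in closed form.

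To do this, I will restrict to the (at most) two-dimensional subspace spanned by $\ket{\psi}$ and $\ket{\phi}$. If they are parallel, both distances vanish and there is nothing to prove. Otherwise:
\begin{itemize}
\item Write $\ket{\phi} = \braket{\psi|\phi}\ket{\psi} + \sqrt{1-F^2}\,\ket{\psi^\perp}$, after absorbing a phase into $\ket{\psi^\perp}$.
\item The operator $\rho_\psi - \rho_\phi$ is Hermitian, traceless, and has rank at most $2$. Its eigenvalues are therefore $\pm\lambda$.
\item From $\mathrm{Tr}[(\rho_\psi-\rho_\phi)^2] = 2 - 2F^2$ we get $2\lambda^2 = 2-2F^2$, so $\lambda = \sqrt{1-F^2}$.
\end{itemize}
Since $\|\rho_\psi-\rho_\phi\|_1 = 2\lambda$, this gives
\begin{equation*}
D(\rho_\psi,\rho_\phi) = \sqrt{1-F^2}.
\end{equation*}
This computation is the only nontrivial step. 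The key point is that the traceless rank-two structure forces symmetric eigenvalues, which avoids diagonalizing explicitly.

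With both distances in hand, I will compare squares, since all quantities are nonnegative.

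\emph{Upper bound.} We need $1-F^2 \le 2-2F$. Rearranging, this is $F^2 - 2F + 1 = (1-F)^2 \ge 0$, which always holds.

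\emph{Lower bound.} We need $\tfrac12(2-2F) \le 1-F^2$, i.e., $1-F \le (1-F)(1+F)$. This holds because $1-F \ge 0$ and $1+F \ge 1$.

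Taking square roots yields
\begin{equation*}
\frac{1}{\sqrt 2}\, d(\ket{\psi},\ket{\phi}) \le D(\rho_\psi,\rho_\phi) \le d(\ket{\psi},\ket{\phi}),
\end{equation*}
which is the claimed inequality.
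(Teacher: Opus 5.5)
Your proof is correct and follows essentially the same route as the paper. Both arguments express the two distances through the overlap $F$ via $d=\sqrt{2-2F}$ and $D=\sqrt{1-F^2}$; the paper's factorization $D=d\sqrt{1-d^2/4}$ with $0\le d\le\sqrt2$ is the same observation as your $D^2/d^2=(1+F)/2\in[1/2,1]$. The one addition in your version is that you derive $D=\sqrt{1-F^2}$ from the traceless rank-two structure of $\rho_\psi-\rho_\phi$, whereas the paper simply recalls this formula.
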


\begin{restatable}{lem}{kcopies}
    For any pure states $\ket{\psi}, \ket{\phi}$, trace distance between $k$ copies of the two states satisfies \begin{align}
        D(\rho_\psi^{\otimes k}, \rho_\phi^{\otimes k}) \leq \sqrt{k}\cdot D(\rho_\psi, \rho_\phi).
    \end{align}
    \label{fact:tr-prod-rel}
\end{restatable}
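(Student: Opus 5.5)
We reduce the statement to a computation with overlaps, using the closed form of the trace distance between pure states. For pure states $\ket{\psi},\ket{\phi}$ the operator $\rho_\psi-\rho_\phi$ has rank at most two and trace zero. Its two nonzero eigenvalues are therefore $\pm\lambda$, where $2\lambda^2=\mathrm{Tr}[(\rho_\psi-\rho_\phi)^2]=2-2\abs{\braket{\psi|\phi}}^2$. Hence
\begin{equation*}
D(\rho_\psi,\rho_\phi)=\sqrt{1-\abs{\braket{\psi|\phi}}^2}.
\end{equation*}
This is the standard fact also underlying Lemma~\ref{fact:euc-tr-rel}, and it is the only tool we need.

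Next we apply this formula to the $k$-fold tensor powers. These are again pure states, namely $\ket{\psi}^{\otimes k}$ and $\ket{\phi}^{\otimes k}$, and their overlap is $\braket{\psi|\phi}^k$. Set $c\coloneqq\abs{\braket{\psi|\phi}}^2\in[0,1]$. Then
\begin{equation*}
D(\rho_\psi^{\otimes k},\rho_\phi^{\otimes k})=\sqrt{1-c^k},\qquad D(\rho_\psi,\rho_\phi)=\sqrt{1-c}.
\end{equation*}
After squaring both sides, the claim becomes $1-c^k\le k(1-c)$.

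To prove this inequality, factor
\begin{equation*}
1-c^k=(1-c)(1+c+\dots+c^{k-1}).
\end{equation*}
Each of the $k$ terms in the second factor is at most $1$ because $0\le c\le 1$, so the second factor is at most $k$. This gives $1-c^k\le k(1-c)$. Taking square roots yields $D(\rho_\psi^{\otimes k},\rho_\phi^{\otimes k})\le\sqrt{k}\,D(\rho_\psi,\rho_\phi)$, which is the statement.

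There is no real obstacle in this argument. The only step that needs care is the eigenvalue computation for the rank-two operator $\rho_\psi-\rho_\phi$. If one prefers not to diagonalize, one can instead restrict to the two-dimensional span of $\ket{\psi}$ and $\ket{\phi}$ and compute the trace norm of the resulting explicit $2\times 2$ matrix.
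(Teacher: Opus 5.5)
Your proof is correct and follows essentially the same route as the paper: both use the closed form $D(\rho_\psi,\rho_\phi)=\sqrt{1-|\langle\psi|\phi\rangle|^2}$ and the multiplicativity of the overlap under tensor powers, and both finish with an elementary Bernoulli-type inequality. The one difference is that you apply the inequality directly to $c=|\langle\psi|\phi\rangle|^2$ via the geometric sum, which is slightly cleaner than the paper's route (substituting $|\langle\psi|\phi\rangle|=1-\epsilon$, using $(1-\epsilon)^k\ge 1-k\epsilon$, then squaring), since that squaring step tacitly needs $k\epsilon\le 1$, while the complementary case is trivial because then $\sqrt{k}\,D(\rho_\psi,\rho_\phi)\ge 1$.
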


\begin{lem}[\cite{nielsen2010quantum}, Theorem 9.2]
    Quantum channels are contractive, meaning for any quantum channel $\mathcal E$ and mixed states $\rho, \sigma$, we have \begin{align}
        D(\rho, \sigma) \geq D(\mathcal E(\rho), \mathcal E(\sigma)).
    \end{align}
    \label{fact:maps-contract}
\end{lem}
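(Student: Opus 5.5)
The plan is the standard argument via the Jordan decomposition of $\rho-\sigma$ together with the variational characterization of trace distance. It uses only that $\mathcal E$ is linear, positive and trace preserving; complete positivity is not needed.

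\textbf{Step 1 (Jordan decomposition).} The operator $\rho-\sigma$ is Hermitian, so its spectral decomposition splits it as $\rho-\sigma = P - Q$. Here $P,Q\succeq 0$ have orthogonal supports: $P$ collects the positive eigenvalues and $Q$ the negative ones. Orthogonality of the supports gives $|\rho-\sigma| = P+Q$, hence $\|\rho-\sigma\|_1 = \mathrm{Tr}\,P + \mathrm{Tr}\,Q$. Since $\mathrm{Tr}(\rho-\sigma)=0$, we get $\mathrm{Tr}\,P=\mathrm{Tr}\,Q$, and therefore $D(\rho,\sigma)=\mathrm{Tr}\,P$.

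\textbf{Step 2 (variational formula).} Next I would show that for any traceless Hermitian $A = P-Q$ decomposed as above,
\begin{equation}
    \tfrac12\|A\|_1 = \max_{0\preceq \Pi\preceq I} \mathrm{Tr}[\Pi A].
\end{equation}
The maximum is attained by the projector onto the support of $P$. For the upper bound, take any $0\preceq\Pi\preceq I$. Then $\mathrm{Tr}[\Pi A] = \mathrm{Tr}[\Pi P]-\mathrm{Tr}[\Pi Q] \le \mathrm{Tr}[\Pi P] \le \mathrm{Tr}\,P$. This uses $\mathrm{Tr}[\Pi Q]\ge 0$ and $\mathrm{Tr}[(I-\Pi)P]\ge 0$, both of which follow because the trace of a product of two positive semidefinite operators is nonnegative.

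\textbf{Step 3 (apply the channel).} By linearity, $\mathcal E(\rho)-\mathcal E(\sigma) = \mathcal E(P)-\mathcal E(Q)$. This is again traceless and Hermitian, and $\mathcal E(P),\mathcal E(Q)\succeq 0$ by positivity of $\mathcal E$. Let $\Pi$ be an optimizer in Step 2 for $\mathcal E(\rho)-\mathcal E(\sigma)$. Then
\begin{equation}
    D(\mathcal E(\rho),\mathcal E(\sigma)) = \mathrm{Tr}[\Pi\,\mathcal E(P)] - \mathrm{Tr}[\Pi\,\mathcal E(Q)] \le \mathrm{Tr}\,\mathcal E(P) = \mathrm{Tr}\,P = D(\rho,\sigma).
\end{equation}
The inequality uses the same two positivity facts as in Step 2. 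The middle equality uses trace preservation.

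The only mildly delicate point is Step 2. Note that Step 3 does not apply Step 1 to $\mathcal E(P)-\mathcal E(Q)$ itself: these operators need not have orthogonal supports. This is exactly why the variational formula is required, and why it is stated for arbitrary positive $P,Q$ in the upper-bound direction. Everything else is routine.
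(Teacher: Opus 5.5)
Your proposal is correct and is essentially the proof of the cited result (Nielsen and Chuang, Theorem 9.2), which the paper invokes without reproving. That proof also splits $\rho-\sigma$ into orthogonally supported positive parts, uses the variational characterization of trace distance for the image $\mathcal E(\rho)-\mathcal E(\sigma)$, and bounds it using only positivity and trace preservation of $\mathcal E$, exactly as you do.
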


\subsection{Korteweg-de Vries equation}
The Korteweg-de Vries (KdV) equation is a one-dimensional dispersive nonlinear PDE that takes the form
\begin{align} \label{eq:kdv}
    \frac{\partial \phi}{\partial t} + \frac{\partial^3 \phi}{\partial x^3} - 6\phi \frac{\partial \phi}{\partial x} = 0,
\end{align}
where $\phi\colon \mathbb \Omega \times \mathbb R_+ \to \mathbb R$. The KdV equation can model a wide variety of phenomena, such as shallow water waves and nonlinear Langmuir waves in plasmas~\cite{korteweg1895xli,chen1984introduction,linares2014introduction}. On the left-hand side, the second term is a linear dispersive term that causes waves of different wavenumbers to move at different velocities. The third term is the nonlinear term that introduces non-trivial phenomena, such as wave steepening, into the dynamics.

We consider $C^\infty$ solutions on $\Omega = \mathbb R$ that vanish at infinity, i.e., $\lim_{x\rightarrow \pm \infty} \phi(x,t) = 0$. With this property, the KdV equation preserves the norm.

\begin{lem} [Conservation of norm, KdV equation] \label{lem:kdv-norm-cons}
    Consider the KdV equation in Eq.~(\ref{eq:kdv}) with $\phi\in C^\infty$ and the additional property that $\lim_{x\to\pm \infty}\phi(x,t) = 0$. Then
    \begin{align}
        \frac{d\|\phi\|_{L^2}}{dt} = 0.
    \end{align}
    
\end{lem}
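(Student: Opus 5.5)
The plan is to differentiate the squared norm rather than the norm itself, and to show that each term of the KdV equation contributes an exact spatial derivative that vanishes at infinity. Since $\|\phi\|_{L^2}\ge 0$, we have $\frac{d}{dt}\|\phi\|_{L^2}^2 = 2\|\phi\|_{L^2}\frac{d}{dt}\|\phi\|_{L^2}$ wherever $\|\phi\|_{L^2}\neq 0$. The case $\phi\equiv 0$ is trivial, since that solution stays zero. So it suffices to show
\begin{equation*}
\frac{d}{dt}\int_{\mathbb R}\phi^2\,dx = 0 .
\end{equation*}
Differentiating under the integral sign, using smoothness and assuming enough decay to justify it, gives $\int 2\phi\,\partial_t\phi\,dx$. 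We then substitute $\partial_t\phi = -\partial_x^3\phi + 6\phi\,\partial_x\phi$ from Eq.~\eqref{eq:kdv}.

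The nonlinear term is the easy one. We have $12\phi^2\partial_x\phi = \partial_x(4\phi^3)$, whose integral is $4\phi^3\big|_{-\infty}^{\infty}=0$ by the vanishing condition. For the dispersive term we use the identity
\begin{equation*}
\phi\,\partial_x^3\phi = \partial_x\!\left(\phi\,\partial_x^2\phi - \tfrac12(\partial_x\phi)^2\right).
\end{equation*}
This is checked directly: the derivative of the bracket is $\partial_x\phi\,\partial_x^2\phi + \phi\,\partial_x^3\phi - \partial_x\phi\,\partial_x^2\phi$. The integral of $-2\phi\,\partial_x^3\phi$ is therefore a pure boundary term, $-\big[2\phi\,\partial_x^2\phi - (\partial_x\phi)^2\big]_{-\infty}^{\infty}$.

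The main obstacle is that the hypothesis only says $\phi\to 0$ at infinity. The boundary term also involves $\partial_x\phi$ and $\partial_x^2\phi$, and differentiating under the integral needs integrability. I would handle this by interpreting the assumption, as is standard for KdV and as the soliton solutions used later satisfy, to mean that $\phi$ and its spatial derivatives decay at infinity, e.g.\ $\phi(\cdot,t)$ lies in the Schwartz class or in $H^3$ locally uniformly in $t$. Under that reading, $\phi\,\partial_x^2\phi\to0$ and $(\partial_x\phi)^2\to 0$, so both boundary terms vanish. Alternatively one can integrate over $[-L,L]$ and let $L\to\infty$ along a sequence where the boundary terms tend to zero. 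Either way the time derivative of $\|\phi\|_{L^2}^2$ is zero, hence $\|\phi\|_{L^2}$ is constant, which proves the lemma.
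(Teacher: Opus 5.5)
Your proposal is correct and follows essentially the paper's argument: differentiate $\|\phi\|_{L^2}^2$, write the nonlinear contribution as the exact derivative $\partial_x(4\phi^3)$, and eliminate the dispersive term by integration by parts. The paper does this last step by showing $\int\phi\,\partial_x^3\phi\,dx=-\int\partial_x^3\phi\,\phi\,dx$, whereas you write the same computation as the exact derivative $\partial_x\bigl(\phi\,\partial_x^2\phi-\tfrac12(\partial_x\phi)^2\bigr)$, and your remark that the boundary terms need $\partial_x\phi$ and $\partial_x^2\phi$ to decay, not merely $\phi\to 0$, makes explicit an assumption the paper uses without stating it.
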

\begin{proof}
    A straightforward calculation shows
    \begin{align}
        \frac{d\|\phi\|_{L^2}^2}{dt} &= \frac{d}{dt}\int_{\mathbb R} \phi^2 \, dx \\
        &= 2\int_{\mathbb R} \phi \frac{\partial \phi}{\partial t} \, dx \\
        &= 2\int_{\mathbb R} \phi \left [ 6\phi \frac{\partial \phi}{\partial x} - \frac{\partial^3 \phi}{\partial x^3}  \right] dx \\
        &= 2\int_\mathbb R \left [2\frac{\partial \phi^3}{\partial x} - \phi \frac{\partial^3 \phi}{\partial x^3} \right] dx.
    \end{align}
    The first term vanishes by using the divergence theorem and the fact that $\phi$ vanishes at infinity. The second term also vanishes, as we can show using integration by parts:
    \begin{align}
        \int_\mathbb R \phi \frac{\partial^3 \phi}{\partial x^3} \, dx &= -\int_\mathbb R \frac{\partial \phi}{\partial x} \frac{\partial^2 \phi}{\partial x^2} \, dx \\
        &= \int_\mathbb R \frac{\partial^2 \phi}{\partial x^2} \frac{\partial \phi}{\partial x} \, dx \\
        &= - \int_\mathbb R  \frac{\partial^3 \phi}{\partial x^3} \phi \, dx.
    \end{align}
    Since $\|\phi\|_{L^2} \neq 0$,
    \begin{equation}
        \frac{d\|\phi \|_{L^2}^2}{dt} = 0 \Rightarrow \frac{d\|\phi \|_{L^2}}{dt}=0.
    \end{equation}
\end{proof}

\subsection{Euler equations of fluid dynamics}

The simplest equations that capture the essence of fluid dynamics are the Euler equations. Specifically, we consider the incompressible Euler equations, given by
\begin{equation} \label{eq:incomp-euler}
    \begin{aligned}
        \frac{\partial \mathbf u}{\partial t} + (\mathbf u \cdot \nabla)\mathbf u &= -\frac{1}{\rho}\nabla P \\
        \nabla \cdot \mathbf u &= 0,
    \end{aligned}
\end{equation}
where $\mathbf{u}\colon \Omega \times \mathbb R_+ \to \mathbb R^{d}$ is the fluid velocity, $\rho\colon \Omega \times \mathbb R_+ \to \mathbb R$ is the density, and $P\colon \Omega \times \mathbb R_+ \to \mathbb R$ is the pressure, with $\Omega$ being the physical domain and $d\in[3]$ being the physical dimension of the problem. The first equation is the momentum equation, and the second is the incompressibility condition. Nonlinearity appears through the advection term, which is the second term on the left-hand side of the momentum equation. The incompressible Euler equations describe the behavior of an ideal (inviscid) fluid. Since the flow is incompressible, $\rho$ is constant for a single-constituent fluid; we set $\rho=1$ henceforth. 

As the momentum equation only governs the behavior of $\mathbf{u}$, we need an equation to solve for $P$. We can obtain this by taking the divergence of the momentum equation and using the incompressibility condition to get
\begin{equation} \label{eq:pressure-laplace}
    \nabla^2 P = -\nabla \cdot ( (\mathbf{u}\cdot \nabla ) \mathbf u ).
\end{equation}

Since Eq.~\eqref{eq:pressure-laplace} is a Laplace equation and only the gradient of the pressure appears in Eq.~\eqref{eq:incomp-euler}, we are free to choose a constant gauge for the pressure. The pressure gauge is defined as
\begin{equation}
    \bar p \coloneq \frac{1}{|\Omega|} \int_\Omega P \, d\Omega.
    \label{eq:pressure_gauge}
\end{equation}
The value of the gauge is not important as long as we are consistent.

Under certain constraints, the Euler equations possess conserved quantities, which will be useful later. The first conserved quantity is the $L^2$-norm of $\mathbf u$, which corresponds to total energy.

\begin{lem}[Conservation of norm, Euler equations] \label{lem:euler-norm-cons}
Consider the Euler equations in Eq.~(\ref{eq:incomp-euler}) on $\Omega = \mathbb T^d$. For $\mathbf{u},p\in C^\infty$ we have 
\begin{equation}
    \frac{d\|\mathbf u \|_{L^2}}{dt} = 0.
\end{equation}

\end{lem}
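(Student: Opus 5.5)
We will mirror the proof of \cref{lem:kdv-norm-cons}: differentiate the squared norm, substitute the momentum equation, and show that each resulting term integrates to zero. On the torus there is no boundary, so every integral of a divergence vanishes by periodicity. With $\rho=1$,
\begin{equation}
    \frac{d\|\mathbf u\|_{L^2}^2}{dt} = 2\int_{\mathbb T^d} \mathbf u\cdot\frac{\partial \mathbf u}{\partial t}\, d\Omega = -2\int_{\mathbb T^d} \mathbf u\cdot\big((\mathbf u\cdot\nabla)\mathbf u\big)\, d\Omega - 2\int_{\mathbb T^d} \mathbf u\cdot \nabla P\, d\Omega .
\end{equation}
Exchanging the time derivative and the integral is justified by the $C^\infty$ assumption and the compactness of $\mathbb T^d$.

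For the advection term, write $\mathbf u\cdot\big((\mathbf u\cdot\nabla)\mathbf u\big) = \tfrac12 (\mathbf u\cdot\nabla)\|\mathbf u\|^2$. Incompressibility, $\nabla\cdot\mathbf u=0$, then gives
\begin{equation}
    \tfrac12 (\mathbf u\cdot\nabla)\|\mathbf u\|^2 = \tfrac12\nabla\cdot\big(\mathbf u\,\|\mathbf u\|^2\big).
\end{equation}
This is a divergence of a smooth periodic field, so its integral over $\mathbb T^d$ vanishes by the divergence theorem.

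For the pressure term, incompressibility similarly gives $\mathbf u\cdot\nabla P = \nabla\cdot(P\mathbf u)$, whose integral also vanishes by periodicity. The pressure gauge $\bar p$ plays no role here. Both terms are therefore zero, so $d\|\mathbf u\|_{L^2}^2/dt=0$.

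Finally, we pass from the squared norm to the norm itself. If $\|\mathbf u\|_{L^2}\neq 0$, then
\begin{equation}
    \frac{d\|\mathbf u\|_{L^2}}{dt} = \frac{1}{2\|\mathbf u\|_{L^2}}\,\frac{d\|\mathbf u\|_{L^2}^2}{dt} = 0 .
\end{equation}
If $\|\mathbf u\|_{L^2}=0$ at some time, the squared norm stays identically zero, so the norm is constant in that case too.

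There is no real obstacle. The only care needed is to use incompressibility to rewrite both integrands as exact divergences, and to note that periodicity removes all boundary terms. No decay at infinity is required, unlike in the KdV case.
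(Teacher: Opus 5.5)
Your proof is correct and follows essentially the same route as the paper: differentiate $\|\mathbf u\|_{L^2}^2$, use $\nabla\cdot\mathbf u=0$ to rewrite the advection and pressure contributions as exact divergences, and drop them by periodicity on $\mathbb T^d$. The only difference is that you use the componentwise identity $\mathbf u\cdot((\mathbf u\cdot\nabla)\mathbf u)=\tfrac12(\mathbf u\cdot\nabla)\|\mathbf u\|^2$ rather than the paper's curl identity $\tfrac12\nabla(\mathbf u\cdot\mathbf u)=(\mathbf u\cdot\nabla)\mathbf u+\mathbf u\times(\nabla\times\mathbf u)$, which is slightly cleaner because it holds in any dimension $d$ without a cross product, and you also handle the zero-norm case that the paper simply assumes away.
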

\begin{proof}
    Consider the time evolution of $\|\mathbf u \|_{L^2}^2$. We use the following vector calculus identity:
    \begin{equation}
        \frac{1}{2}\nabla(\mathbf{u}\cdot \mathbf{u}) = (\mathbf{u}\cdot\nabla)\mathbf{u} + \mathbf{u}\times(\nabla\times \mathbf{u}).
    \end{equation}
    A straightforward calculation shows
    \begin{align}
        \frac{d\|\mathbf u \|_{L^2}^2}{dt} &= \frac{d}{dt} \int_{\mathbb T^d} \mathbf{u}\cdot \mathbf{u} \,  d^3 \mathbf{r} \\
        &= 2 \int_{\mathbb T^d} \mathbf u \cdot \frac{\partial \mathbf u}{\partial t} \, d^3 \mathbf r \\
        &= -2 \int_{\mathbb T^d} \mathbf u \cdot [(\mathbf u \cdot \nabla)\mathbf u + \nabla P] \, d^3 \mathbf r \\
        &= -2 \int_{\mathbb T^d} \mathbf u \cdot \left ( \frac{1}{2} \nabla (\mathbf u \cdot \mathbf u )  - \mathbf u \times (\nabla \times \mathbf u) + \nabla P \right ) d^3 \mathbf r \\
        &= - \int_{\mathbb T^d} \mathbf u \cdot \nabla (u^2 + 2P) \, d^3 \mathbf r,
    \end{align}
    with $u^2 = \mathbf u \cdot \mathbf u$. Using vector calculus identities, we can write the integrand as
    \begin{equation}
        \mathbf u \cdot \nabla (u^2 + 2P)= \nabla \cdot ((u^2+2P) \mathbf u) - (u^2+2P) \nabla \cdot \mathbf u.
    \end{equation}
   The integral of the first term vanishes by the divergence theorem (since the domain $\mathbb T^d$ has no boundary), and the second term vanishes by the incompressibility condition. Given that $\|\mathbf{u}\|_{L^2} \neq 0$,
    
    \begin{equation}
        \frac{d\|\mathbf u \|_{L^2}^2}{dt} = 0 \Rightarrow \frac{d\|\mathbf u \|_{L^2}}{dt}=0.
    \end{equation}
\end{proof}

By narrowing down our focus to $\Omega = \mathbb T^2$, we obtain several more results. The first is the conservation of vorticity.
\begin{lem} [Conservation of vorticity] \label{lem:cons-vorticity}
    Consider the Euler equations in Eq.~\eqref{eq:incomp-euler} on $\Omega = \mathbb T^2$. Let $\omega$ be the vorticity of $\mathbf u$, as defined in Eq.~\eqref{eq:vort-def}. Then
    \begin{equation}
        \frac{d\|\omega \|_{L^4}}{dt} = 0.
    \end{equation}
\end{lem}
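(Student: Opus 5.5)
The plan is to derive the vorticity transport equation in two dimensions and then show that every $L^p$ norm of $\omega$ is conserved, in particular $p=4$. We work with smooth ($C^\infty$) solutions on $\mathbb T^2$, as in \cref{lem:euler-norm-cons}.

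\textbf{Step 1: the vorticity equation.} Take the scalar curl of the momentum equation in Eq.~\eqref{eq:incomp-euler}, i.e.\ apply $\partial_{x_1}$ to the second component, subtract $\partial_{x_2}$ of the first, and use $\rho=1$. The pressure term drops out because $\partial_{x_1}\partial_{x_2}P-\partial_{x_2}\partial_{x_1}P=0$. For the advection term, expand the curl of $(\mathbf u\cdot\nabla)\mathbf u$. This gives
\begin{equation}
(\mathbf u\cdot\nabla)\omega + \omega\,(\nabla\cdot\mathbf u).
\end{equation}
In two dimensions there is no vortex-stretching term. By incompressibility the second term vanishes, so
\begin{equation}
\frac{\partial \omega}{\partial t} + (\mathbf u\cdot\nabla)\omega = 0 .
\end{equation}
The only computation to check is the cross terms: $\partial_1 u_1\,\partial_1 u_2 + \partial_1 u_2\,\partial_2 u_2 - \partial_2 u_1\,\partial_1 u_1 - \partial_2 u_2\,\partial_2 u_1$. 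This equals $(\partial_1 u_2-\partial_2 u_1)(\partial_1u_1+\partial_2u_2)=\omega\,\nabla\cdot\mathbf u$.

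\textbf{Step 2: conservation of $\|\omega\|_{L^4}^4$.} Differentiate under the integral:
\begin{equation}
\frac{d}{dt}\int_{\mathbb T^2}\omega^4\,d\Omega = 4\int_{\mathbb T^2}\omega^3\partial_t\omega\,d\Omega = -\int_{\mathbb T^2}\mathbf u\cdot\nabla(\omega^4)\,d\Omega .
\end{equation}
Write the integrand as $\nabla\cdot(\omega^4\mathbf u)-\omega^4\nabla\cdot\mathbf u$, exactly as in the proof of \cref{lem:euler-norm-cons}. The divergence term integrates to zero because $\mathbb T^2$ has no boundary, and the second term vanishes by incompressibility. Hence $\frac{d}{dt}\|\omega\|_{L^4}^4=0$.

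\textbf{Step 3: from the fourth power to the norm.} If $\|\omega\|_{L^4}\neq 0$, the chain rule gives $\frac{d}{dt}\|\omega\|_{L^4}=\frac{1}{4}\|\omega\|_{L^4}^{-3}\frac{d}{dt}\|\omega\|_{L^4}^4=0$. If $\|\omega\|_{L^4}=0$ at some instant, then $\|\omega\|_{L^4}^4$ is constant and therefore identically zero, so the norm is again constant.

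I do not expect a genuine obstacle; the argument is a direct analogue of \cref{lem:euler-norm-cons}. The step needing the most care is Step 1, the curl of the nonlinear term. There one must use both two-dimensionality (no stretching term) and $\nabla\cdot\mathbf u=0$, and justify the commuting derivatives and the differentiation under the integral, which is immediate for smooth solutions.
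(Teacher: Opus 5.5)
Your proof is correct and follows essentially the same route as the paper. You derive the two-dimensional transport equation $\partial_t\omega + (\mathbf u\cdot\nabla)\omega = 0$, then show $\frac{d}{dt}\|\omega\|_{L^4}^4 = -\int_{\mathbb T^2}\mathbf u\cdot\nabla(\omega^4)\,d\Omega = 0$ via the divergence theorem and incompressibility. The differences are cosmetic: you compute the curl of the advection term in components rather than via the identity for $\nabla\times(\mathbf u\times\boldsymbol\omega)$, and you also handle the case $\|\omega\|_{L^4}=0$, which the paper simply assumes away.
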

\begin{proof}
    Let
    \begin{equation}
        \boldsymbol \omega \coloneq \nabla \times \mathbf u
    \end{equation}
    and
    \begin{equation}
        \omega = \boldsymbol{\omega}\cdot \hat{\mathbf k}.
    \end{equation}
    First we obtain an equation for $\omega$. We take the curl of the momentum equation in Eq.~\eqref{eq:incomp-euler}:
    \begin{align}
        \frac{\partial \boldsymbol{\omega}}{\partial t} &=-\nabla \times [(\mathbf u \cdot \nabla ) \mathbf u ] -  \nabla \times (\nabla P).
    \end{align}
    The second term is zero, and the first can be rewritten via vector calculus identities to give
    \begin{align}
        \frac{\partial \boldsymbol{\omega}}{\partial t} &=-\nabla \times \left[\frac{1}{2}\nabla (\mathbf u \cdot \mathbf u) - \mathbf u \times (\nabla \times \mathbf u) \right ] \\
        &= \nabla \times ( \mathbf u \times \boldsymbol{\omega}) \\
        &= \mathbf u \cdot (\nabla \cdot \boldsymbol{\omega}) - \boldsymbol{\omega} (\nabla \cdot \mathbf u) + (\boldsymbol{\omega} \cdot \nabla) \mathbf u - (\mathbf u \cdot \nabla ) \boldsymbol{\omega} \label{eq:vort-line-3} \\
        &= -(\mathbf u \cdot \nabla ) \boldsymbol{\omega}.
    \end{align}
    In Eq.~\eqref{eq:vort-line-3}, the first term vanishes as $\nabla \cdot \boldsymbol{\omega}$ is the divergence of a curl, the second vanishes due to incompressibility, and the third vanishes since $\boldsymbol{\omega}\cdot \nabla = 0$. Taking the dot product with $\hat{\mathbf k}$ gives
    \begin{equation}
        \frac{\partial \omega}{\partial t} + (\mathbf u \cdot \nabla ) \omega = 0.
    \end{equation}
    
    We now obtain an equation for $\|\omega\|_{L^4}^4$:
    \begin{align}
        \frac{d\|\omega\|_{L^4}^4}{dt} &= \frac{d}{dt}\int_{\mathbb T^2} |\omega^4| \, d^2 \mathbf r \\
        &= 4\int_{\mathbb T^2} \omega^3 \frac{\partial \omega}{\partial t} \, d^2 \mathbf r \\
        &= -4\int_{\mathbb T^2} \omega^3 (\mathbf u \cdot \nabla ) \omega \, d^2 \mathbf r \\
        &= -\int_{\mathbb T^2} \mathbf u \cdot \nabla \omega^4 \, d^2 \mathbf r.
    \end{align}
    We can rewrite the integrand as
    \begin{align}
        \mathbf{u}\cdot \nabla \omega^4 &= \nabla \cdot (\omega^4 \mathbf u) - \omega^4 \nabla \cdot \mathbf u.
    \end{align}
    The second term vanishes due to incompressibility and the integral of the first vanishes due to the divergence theorem. Given that $\|\omega\|_{L^4} \neq 0$,
    
    \begin{equation}
        \frac{d\|\omega \|_{L^4}^4}{dt} = 0 \Rightarrow \frac{d\|\omega \|_{L^4}}{dt}=0
    \end{equation}
    as claimed.
\end{proof}
\begin{remark}
    When $\Omega = \mathbb T^2$, one can show that $\|\omega\|_{L^p}$ is conserved for $1\leq p\leq \infty$~\cite{chemin1998perfect}. In this paper, we only use the $p=4$ case.
\end{remark}

Finally, we prove the following bound on the norm of the pressure.
\begin{lem} \label{lem:pressure-bounds}
Consider the Euler equations in Eq.~\eqref{eq:incomp-euler} on $\Omega = \mathbb T^2$. Let $\omega_0$ be the vorticity at $t=0$. Then
    \begin{equation}
       2\pi \bar p \leq \|P\|_{L^2}  \leq 2\pi\bar p + C \|\omega_0\|_{L^4}^2
    \end{equation}
for some constant $C$ and gauge $\bar p$.
\end{lem}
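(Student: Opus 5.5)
The plan is to split the pressure into its mean and a mean-zero fluctuation, $P = \bar p + \tilde P$ with $\int_{\mathbb T^2}\tilde P\,d^2\mathbf r = 0$. We also fix the gauge $\bar p\ge 0$. Since $|\mathbb T^2| = 4\pi^2$, the constant function $\bar p$ has $L^2$-norm $2\pi\bar p$, and it is $L^2$-orthogonal to $\tilde P$. Hence
\begin{equation}
\|P\|_{L^2}^2 = 4\pi^2\bar p^2 + \|\tilde P\|_{L^2}^2 .
\end{equation}
This identity gives both sides at once. The lower bound $\|P\|_{L^2}\ge 2\pi\bar p$ is immediate. For the upper bound, since $\sqrt{a^2+b^2}\le a+b$, it suffices to show $\|\tilde P\|_{L^2}\le C\|\omega_0\|_{L^4}^2$.

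For the fluctuation we use the pressure Poisson equation, Eq.~\eqref{eq:pressure-laplace}, $\nabla^2\tilde P = f$ with $f = -\nabla\cdot((\mathbf u\cdot\nabla)\mathbf u)$. By incompressibility,
\begin{equation}
f = -\sum_{i,j}\partial_i u_j\,\partial_j u_i .
\end{equation}
The right-hand side is quadratic in $\nabla\mathbf u$ and has zero mean. We apply Lemma~\ref{lem:laplace}, or more precisely its mean-zero version: by Fourier series on $\mathbb T^2$, $|\hat{\tilde P}(\mathbf k)| = |\hat f(\mathbf k)|/|\mathbf k|^2 \le |\hat f(\mathbf k)|$ for $\mathbf k\ne 0$. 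This gives $\|\tilde P\|_{L^2}\le\|f\|_{L^2}$, and the constant $\bar\phi$ in Lemma~\ref{lem:laplace} is exactly the mean we have already separated off. Next, Cauchy--Schwarz applied entrywise, followed by H\"older's inequality with $p=q=2$ on the squared terms, gives
\begin{equation}
\|f\|_{L^2}\le C_1\|\nabla\mathbf u\|_{L^4}^2 .
\end{equation}

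Then Lemma~\ref{lem:calderon-zygmund} with $p=4$ and $d=2$ yields $\|\nabla\mathbf u\|_{L^4}\le C\cdot\tfrac{16}{3}\|\mathcal W_{\mathbf u}\|_{L^4}$. In two dimensions $\mathcal W_{\mathbf u}$ has off-diagonal entries $\pm\omega$, so $\|\mathcal W_{\mathbf u}\|_{L^4}=\sqrt2\,\|\omega\|_{L^4}$. Finally, Lemma~\ref{lem:cons-vorticity} gives $\|\omega(t)\|_{L^4}=\|\omega_0\|_{L^4}$ for all $t$. Chaining these estimates gives $\|\tilde P\|_{L^2}\le C\|\omega_0\|_{L^4}^2$, which yields the claim uniformly in time.

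The main obstacle is the Laplace step. As stated, Lemma~\ref{lem:laplace} only gives $\|\phi\|_{L^2}\le\bar\phi+\|f\|_{L^2}$, so we must check that $\bar\phi$ can be identified with $2\pi\bar p$, or rederive the mean-zero Poincar\'e/Fourier bound directly as above. The Fourier bound uses only that the smallest nonzero wavenumber on $[0,2\pi)^2$ has $|\mathbf k|\ge 1$. A secondary point is making sure the quadratic term $f$ is controlled in $L^2$ by the $L^4$ norm of the gradient, rather than requiring higher derivatives. This is why we expand the divergence using $\nabla\cdot\mathbf u=0$ before estimating.
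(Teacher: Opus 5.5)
Your proposal is correct and follows the paper's proof essentially step for step. You split $P=\bar p+p'$ into mean and mean-zero parts, bound $\|p'\|_{L^2}$ by $\|\nabla\cdot((\mathbf u\cdot\nabla)\mathbf u)\|_{L^2}$ via the Fourier argument behind \cref{lem:laplace}, use incompressibility to reduce the source to $\partial_i u_j\,\partial_j u_i$ and hence to $\|\nabla\mathbf u\|_{L^4}^2$, then apply \cref{lem:calderon-zygmund} with $p=4$ and conservation of $\|\omega\|_{L^4}$. The differences are cosmetic: you get both bounds from the orthogonal decomposition $\|P\|_{L^2}^2=4\pi^2\bar p^2+\|p'\|_{L^2}^2$ where the paper uses the triangle inequality for the upper bound, and you make explicit the choice $\bar p\ge 0$ and the factor $\sqrt2$ relating $\mathcal W_{\mathbf u}$ to $\omega$, which the paper leaves implicit.
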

\begin{proof}
    First, we decompose the pressure into a constant gauge and zero-mean part. In other words,
    \begin{equation}
        P = \bar p + p',
    \end{equation}
    such that $\nabla^2 \bar p = 0$, $\nabla^2 p' = -\nabla \cdot ((\mathbf u \cdot \nabla ) \mathbf u)$, and
    \begin{equation}
        \int_{\mathbb T^2} p' \, d^2\mathbf r = 0.
    \end{equation}
    Now we prove the upper bound. Recall that the pressure is solved via Eq.~\eqref{eq:pressure-laplace}, which is a Laplace equation. From Lemma~\ref{lem:laplace}, we find that
    \begin{equation}
        \|p'\|_{L^2} \leq \|\nabla \cdot ((\mathbf u \cdot \nabla) \mathbf u ) \|_{L^2}.
    \end{equation}
    Let us briefly use Einstein summation notation for the right-hand side:
    \begin{align}
        \nabla \cdot ((\mathbf u \cdot \nabla) \mathbf u ) &= \partial_i (u_j \partial_j u_i) \\
        &= \partial_i u_j \partial_j u_i + u_j \partial_{ji}u_i,
    \end{align}
    with the second term vanishing due to incompressibility. Thus, we get
    \begin{align}
        \|p'\|_{L^2} &\leq \|\partial_i u_j \partial_j u_i\|_{L^2} \\
        &= \left (\int_{\mathbb T^2} (\partial_i u_j \partial_j u_i)^{2} \, d^2\mathbf r \right )^{1/2} \\
        &\leq \left (\int_{\mathbb T^2} \|\nabla \mathbf u \|^4 \, d^2\mathbf r \right )^{1/2} \\
        &= \|\nabla \mathbf u \|_{L^4}^2.
    \end{align}
    From \cref{lem:calderon-zygmund}, we get
    \begin{equation}
        \|p'\|_{L^2} \leq \frac{256}{9} C'\|\omega\|_{L^4}^2
    \end{equation}
    for some constant $C'$. Since $\|\omega\|_{L^4}$ is conserved from \cref{lem:cons-vorticity}, $\|\omega\|_{L^4} = \|\omega_0 \|_{L^4}$. The upper bound follows from $\|P\|_{L^2} \leq \|\bar p\|_{L^2} + \|p'\|_{L^2}$.

    Now we prove the lower bound. A simple calculation shows
    \begin{align}
        \|P\|_{L^2}^2  &= \int_{\mathbb T^2} \bar p^2 + 2\bar p p' + p'^2 \, d^2 \mathbf r \\
        &= (2\pi)^2 \bar p ^2 + \int_{\mathbb T^2} p'^2 \, d^2 \mathbf r\\
        &\geq (2\pi)^2 \bar p ^2,
    \end{align}
    and the lower bound follows.
\end{proof}

While the Euler equations are a natural starting point for understanding fluid dynamics, the Navier-Stokes equations provide a more accurate depiction of fluids. The incompressible Navier-Stokes equations (with $\rho = 1$) are
\begin{equation} \label{eq:incomp-NS}
    \begin{aligned}
        \frac{\partial \mathbf u}{\partial t} + (\mathbf u \cdot \nabla)\mathbf u &= -\nabla P + \nu\nabla^2 \mathbf u  \\
        \nabla \cdot \mathbf u &= 0 ,
    \end{aligned}
\end{equation}
where $\nu$ is the viscosity. A key dimensionless quantity in the Navier-Stokes equations is the Reynolds number
\begin{equation}
    \mathrm{Re} \coloneq  \frac{UL}{\nu},
\end{equation}
where $U$ is a characteristic flow velocity (typically the maximum flow velocity) and $L$ is a characteristic length (related to the physical dimensions of the system being considered). As briefly discussed earlier, the Reynolds number characterizes the strength of the nonlinearity compared to the dissipation, similar to $\mathrm{R}$ in~\cite{liu2021efficient}, with large $\mathrm{Re}$ resulting in turbulent and highly nontrivial flows. While the lower bound we develop is for the Euler equations, which have infinite Reynolds number, we expect our lower bound to apply to the incompressible Navier-Stokes for sufficiently larger $\mathrm{Re}$, but we leave a detailed exploration of this for future work. In particular, can one establish a concrete value of the Reynolds number above which the Navier-Stokes equations are intractable for quantum algorithms? 

\subsection{Problem definition}
\label{sec:probdef}
Quantum algorithms for differential equations generally represent a system by encoding the phase-space variables of the system in amplitudes of a quantum state. 
This raises the challenge that quantum evolution preserves the norm of the state and is invariant under scaling the state by a constant factor, features that do not hold for general differential equations. However, there are many ways around this problem.

For instance, one can always rescale an equation so that all states of interest have norm at most one, leaving only subnormalized states. These can be represented by a quantum state with some component on a flag state, say $\ket{\bot}$, holding the remaining amplitude. In particular, if a system is described by a vector $\boldsymbol v \in \mathbb R^n$ with $\norm{\boldsymbol v} \leq 1$, this can be represented as a quantum state
\begin{align}
    \ket{\psi_{\boldsymbol v}} &= \sqrt{1-\norm{\boldsymbol v}^2} \ket{\bot} + \sum_{x \in [n]} v_x \ket{x} \\
    &= \sqrt{1-\norm{\boldsymbol v}^2} \ket{\bot} + \ket{\boldsymbol v}.
    \label{eq:flag_encoding}
\end{align}

With this setup in mind, we can define the time evolution problem for the KdV equation as follows.
Note that this equation conserves norm, as established in \Cref{lem:kdv-norm-cons}, so a promise of subnormalization on the input norm suffices.

\begin{problem} [Final-state preparation of KdV equation] \label{prob:final-state-kdv}
    Let $T$ be some final simulation time, and consider the KdV equation on the one-dimensional domain $\mathbb R$. For some initial condition $\phi(x,0) \colon \mathbb R \rightarrow \mathbb R$ that is promised to be subnormalized (i.e., $\norm{\phi(x,0)}_{L^2} \leq 1$),
    we are given $k$ copies of the encoding $\ket{\psi_{\phi}(0)}$, as in Eq.~\eqref{eq:flag_encoding}. The goal is to produce a final state $|\psi'\rangle$ such that $\||\psi'\rangle - \ket{\psi_\phi(T)}\| \leq \epsilon$, where $\ket{\psi_\phi(T)} = \sqrt{1-\norm{\phi(T)}^2} \ket{\bot} + \ket{\phi(T)}$.
\end{problem}

For the Euler equations, the subnormalized encoding is not the most natural choice because, while $\|\mathbf{u}\|_{L^2}$ remains normalized by \cref{lem:euler-norm-cons}, $\|P\|_{L^2}$ can fluctuate. Instead, we consider two different encodings. The first is a tensor-product encoding, where the velocity and pressure are kept in separate quantum registers.
\begin{problem} [Final-state preparation of Euler equations, tensor-product encoding] \label{prob:final-state-euler-tensor-prod} 
    Let $\mathbf{r}\in \mathbb T^2$ and $T$ be some final simulation time. For some initial condition $\mathbf{u}(\mathbf{r},0)$ and $P(\mathbf{r},0)$ which are normalized, i.e., $\|\mathbf{u}(\mathbf r,0)\|_{L^2} = \|P(\mathbf{r},0)\|_{L^2} = 1$, we are given $k_{\mathbf u}$ copies of $|\mathbf{u}(0)\rangle$ and $k_P$ copies of $|P(0)\rangle$. The goal is to produce a normalized encoding of $\mathbf{u}(\mathbf{r},T)$ and $P(\mathbf{r},T)$ under evolution of the incompressible Euler equations, i.e., a normalized state $\ket{\psi'(T)}$ such that $\norm{\ket{\psi'(T)} - |\mathbf{u}(T)\rangle \otimes |P(T)\rangle} \leq \epsilon$.
\end{problem}
The tensor-product encoding above is quite natural for the Euler equations. This is because pressure and velocity are inherently separate entities, and they are computed separately from each other. Classical algorithms that simulate the Euler equations via time marching solve the momentum equation at each timestep to update $\mathbf{u}$, and then solve for $P$ via the Laplace equation of Eq.~\eqref{eq:pressure-laplace}. However, it is also natural to consider a direct-sum encoding, where the output of the quantum algorithm is a state that contains both the velocity and the pressure in its components. We define the direct-sum encoding problem below.
\begin{problem} [Final-state preparation of Euler equations, direct-sum encoding] \label{prob:final-state-euler-direct-sum} 
    Let $\mathbf{r}\in \mathbb T^2$ and $T$ be some final simulation time.  Let
    \begin{equation}
        \mathbf w(\mathbf r,t) \coloneq \begin{bmatrix}
            \mathbf u (\mathbf r,t) \\
            P (\mathbf r,t)
        \end{bmatrix}.
    \end{equation}
    For some initial condition $\mathbf{u}(\mathbf{r},0)$ and $P(\mathbf{r},0)$, we are given $k$ copies of $|\mathbf w(0)\rangle$, which is an encoding of $\mathbf w(\mathbf r,0)$ such that $\|\mathbf w(\mathbf r,0)\|_{L^2} = 1$. The goal is to produce a normalized encoding of $\mathbf w(\mathbf r,T)$ under evolution of the incompressible Euler equations, i.e., a normalized state $|\psi'(T)\rangle$ such that $\norm{\ket{\psi'(T)} - |\mathbf w(T)\rangle  } \leq \epsilon$.
\end{problem}

\begin{remark}
    Note that we consider a continuous-variable encoding of the solution. For instance, in two spatial dimensions, the encoded velocity field $|\mathbf{u}(t)\rangle$ for the tensor-product encoding of the Euler equations takes the form
    \begin{align}
        \langle0,\mathbf{r}|\mathbf{u}(t)\rangle = u_x(\mathbf{r},t)/\|\mathbf{u}\|_{L^2} \\
        \langle1,\mathbf{r}|\mathbf{u}(t)\rangle = u_y(\mathbf{r},t)/\|\mathbf{u}\|_{L^2} .
    \end{align}
    In practice, a digital quantum computer would discretize these states over a discrete spatial domain. Our results apply to this setting as well, provided the discretized dynamics approximate the continuous dynamics sufficiently well. The continuous-variable encoding we consider is advantageous as it remains agnostic about the specifics of the discretization method.
\end{remark}

While many quantum algorithms for differential equations output the final state, another form of output is a history state, which contains the solution not just at the final time $T$, but also at previous times. Typically, the time interval $[0,T]$ is divided into $m$ intervals of equal length $h$ such that $T=mh$, and the history state is a superposition of a clock register entangled with the solution at times $jh$ for $j\in[m]$. We also show hardness results for history state preparation as a consequence of the hardness of final state preparation.

\subsection{State discrimination lower bounds} \label{sec:statediscrimination}

Close quantum states can only be reliably distinguished given many copies. This fact can be used to derive lower bounds on quantum algorithms for differential equations. Let $M$ be a (potentially non-unitary) operator on quantum states that maps pure states to pure states. Viewed as an operator on mixed states, $M$ maps rank-one operators to rank-one operators. In the setting of simulating differential equations, this map takes a pure-state encoding of an initial condition to a pure-state encoding of its time evolution.
We say that a quantum algorithm $\mathcal A$ \emph{implements} $M$ to precision $\delta$ with copy complexity $k$ if, for any $\ket{\psi} \in \mathsf{dom}(M)$, we have 
\begin{align}
    D(\mathcal A(\rho_\psi^{\otimes k}), M(\rho_{\psi})) \leq \delta
\end{align}
(recall that $\rho_\psi$ denotes the density matrix of $\ket{\psi}$).
If $M$ is far from unitary, then the copy complexity of implementing $M$ must be high. One particular way $M$ can be far from unitary is if it separates close quantum states. This can be used to lower bound the copy complexity as follows.

\begin{lem}
    Let $M$ be an operator taking pure states to pure states, and $\ket{\psi}, \ket{\phi}$ be quantum states satisfying $D(\rho_{\psi},\rho_{\phi}) = \epsilon_0$. Suppose that $D(M(\rho_\psi), M(\rho_\phi)) = \epsilon_f$. Then any quantum algorithm implementing $M$ to precision $\delta < \epsilon_f/2$ uses at least
    \begin{align}
        k &\geq \left(\frac{\epsilon_f-2\delta}{\epsilon_0}\right)^2.
    \end{align}
    copies.
    \label{lem:state-dist-comp}
\end{lem}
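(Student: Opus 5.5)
The plan is a triangle-inequality argument combined with the contractivity and tensor-power facts stated above. Assume the algorithm $\mathcal A$ implements $M$ to precision $\delta$ with copy complexity $k$. Then it is a quantum channel acting on $k$ copies of its input, and on the two inputs $\rho_\psi^{\otimes k}$ and $\rho_\phi^{\otimes k}$ its outputs are $\delta$-close in trace distance to $M(\rho_\psi)$ and $M(\rho_\phi)$, respectively.

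First, I would lower bound the distance between the two outputs. Applying the triangle inequality for the trace distance $D$ gives
\begin{align}
    \epsilon_f = D(M(\rho_\psi), M(\rho_\phi)) \leq D(M(\rho_\psi), \mathcal A(\rho_\psi^{\otimes k})) + D(\mathcal A(\rho_\psi^{\otimes k}), \mathcal A(\rho_\phi^{\otimes k})) + D(\mathcal A(\rho_\phi^{\otimes k}), M(\rho_\phi)).
\end{align}
The first and third terms are each at most $\delta$, so
\begin{align}
    D(\mathcal A(\rho_\psi^{\otimes k}), \mathcal A(\rho_\phi^{\otimes k})) \geq \epsilon_f - 2\delta.
\end{align}
This quantity is positive precisely because we assumed $\delta < \epsilon_f/2$.

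Second, I would upper bound the same quantity. By \cref{fact:maps-contract}, the channel $\mathcal A$ is contractive, so the output distance is at most $D(\rho_\psi^{\otimes k}, \rho_\phi^{\otimes k})$. By \cref{fact:tr-prod-rel}, this in turn is at most $\sqrt{k}\, D(\rho_\psi,\rho_\phi) = \sqrt{k}\,\epsilon_0$.

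Combining the two bounds gives $\sqrt{k}\,\epsilon_0 \geq \epsilon_f - 2\delta > 0$. Squaring both sides, which is valid since both are nonnegative, yields $k \geq ((\epsilon_f-2\delta)/\epsilon_0)^2$.

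None of these steps should be a real obstacle. The only point needing care is the modeling assumption that $\mathcal A$ is a single fixed channel applied to the $k$-copy input, which is what allows contractivity to be invoked. I would state this explicitly, noting that any ancillas or internal measurements can be absorbed into the channel.
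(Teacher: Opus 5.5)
Your proposal is correct and matches the paper's proof step for step. The paper also lower-bounds $D(\mathcal A(\rho_\psi^{\otimes k}), \mathcal A(\rho_\phi^{\otimes k}))$ by $\epsilon_f - 2\delta$ via the triangle inequality, upper-bounds it by $\sqrt{k}\,\epsilon_0$ using \cref{fact:maps-contract} and \cref{fact:tr-prod-rel}, and then squares. Your pairing of the $\delta$-terms in the triangle inequality is the correct one; the paper's displayed version pairs $\mathcal A(\rho_\psi^{\otimes k})$ with $M(\rho_\phi)$, which is a typo.
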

\begin{proof}
    By the correctness of the given algorithm $\mathcal A$, we have \begin{align}
        D(\mathcal A(\rho_\psi^{\otimes k}), \mathcal A(\rho_\phi^{\otimes k}) &\geq \underbrace{D(M(\rho_\psi), M(\rho_\phi))}_{=\epsilon_f} - \underbrace{D(\mathcal A(\rho_\psi^{\otimes k}), M(\rho_\phi))}_{\leq \delta} - \underbrace{D(M(\rho_\psi), \mathcal A(\rho_\phi^{\otimes k}))}_{\leq \delta} \\
        &\geq \epsilon_f - 2\delta
    \end{align}
    where the first line follows from triangle inequality. On the other hand, we have \begin{align}
        D(\mathcal A(\rho_\psi^{\otimes k}), \mathcal A(\rho_\phi^{\otimes k})) &\leq D(\rho_\psi^{\otimes k}, \rho_\phi^{\otimes k}) & \text{(\Cref{fact:maps-contract})} \\
        &\leq \sqrt{k} \cdot \epsilon_0. & \text{(\Cref{fact:tr-prod-rel})}
    \end{align}
    Combining these two inequalities gives $\sqrt{k}\cdot\epsilon_0 \geq \epsilon_f-2\delta$, which is equivalent to the claim.
\end{proof}

These results can be used to lower bound the resources needed to implement time evolution. 
As a corollary of \Cref{lem:state-dist-comp}, differential equations that separate close states require many copies of the initial state to simulate.

\begin{corr}
    Let $\ket{v(0)}, \ket{v'(0)}$ and $\ket{v(T)}, \ket{v'(T)}$ be initial and final states, respectively, of a differential equation. Suppose $\norm{\ket{v(0)}} = \norm{\ket{v(T)}} = \Omega(1)$, and similarly for $v'$. 
    Then exactly evolving for time $T$, as in \Cref{prob:final-state-kdv}, requires
    \begin{align}
        k &= \Omega\left(\left(\frac{d(v(T), v'(T))}{d(v(0),v'(0))}\right)^2\right)
    \end{align}
    copies of the initial state, under the encoding $\ket{\psi_v}$ from \Cref{eq:flag_encoding}.
    \label{cor:eq-evol-lower}
\end{corr}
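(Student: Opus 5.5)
The plan is to reduce \Cref{cor:eq-evol-lower} directly to \Cref{lem:state-dist-comp}. Take $M$ to be the map sending the flag encoding $\ket{\psi_{v(0)}}$ of an initial condition, as in \Cref{eq:flag_encoding}, to the flag encoding $\ket{\psi_{v(T)}}$ of its time-$T$ evolution. This map takes pure states to pure states on its domain, which is all that \Cref{lem:state-dist-comp} needs. Exact evolution corresponds to precision $\delta=0$ in the sense defined above. More realistically, we take $\delta$ to be a small constant fraction of $\epsilon_f$, say $\delta\le\epsilon_f/4$, so that $\epsilon_f-2\delta\ge\epsilon_f/2$ and only a constant factor is lost. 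Here the closeness of the output to the target is measured by the Euclidean precision $\epsilon$ of \Cref{prob:final-state-kdv}, and this is converted to trace distance via \Cref{fact:euc-tr-rel}. \Cref{lem:state-dist-comp} then gives
\[
k\ \ge\ \Big(\tfrac{\epsilon_f}{2\epsilon_0}\Big)^2, \qquad \epsilon_0 = D(\rho_{\psi_{v(0)}},\rho_{\psi_{v'(0)}}),\quad \epsilon_f = D(\rho_{\psi_{v(T)}},\rho_{\psi_{v'(T)}}).
\]

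The remaining work is to replace trace distances between flag encodings by Euclidean distances between the unnormalized vectors $v$ and $v'$. By \Cref{fact:euc-tr-rel}, $\epsilon_f\ge d(\psi_{v(T)},\psi_{v'(T)})/\sqrt2$ and $\epsilon_0\le d(\psi_{v(0)},\psi_{v'(0)})$. So it suffices to show that, up to constants, the Euclidean distance between flag encodings is comparable to $d(v,v')$. The upper bound at time $0$ is immediate:
\[
\norm{\ket{\psi_v}-\ket{\psi_{v'}}}^2=\norm{v-v'}^2+\Big(\sqrt{1-\norm v^2}-\sqrt{1-\norm{v'}^2}\Big)^2 .
\]
For the flag term, when the norms are bounded away from $1$ the square root is Lipschitz, so this term is at most $O(\norm{v-v'}^2)$ and $d(\psi_{v(0)},\psi_{v'(0)})=O(\norm{v(0)-v'(0)})$. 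For the lower bound at time $T$, the flag component only adds nonnegative contributions. Any phase $z$ in the minimization defining $d(\psi_{v(T)},\psi_{v'(T)})$ therefore satisfies $\norm{\psi_v-z\psi_{v'}}\ge\norm{v-zv'}\ge d(v,v')$ in the phase-minimized sense. This gives $d(\psi_{v(T)},\psi_{v'(T)})\ge d(v(T),v'(T))$.

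Two points need care, and I expect them to be the main obstacle. First, the quantity $d(v,v')$ for subnormalized vectors must be read consistently; I would read it as $\min_{|z|=1}\norm{v-zv'}$. Second, the time-$0$ upper bound must compare flag encodings with this phase-minimized $d(v(0),v'(0))$ rather than with $\norm{v(0)-v'(0)}$. To handle the second point, I would choose the minimizing phase for $v,v'$ and note that the flag amplitudes are real and nonnegative. Since the phase matches the data registers while the flag registers already agree up to the Lipschitz term, the same bound holds. The hypothesis $\norm{v}=\Omega(1)$ (and the implicit strict subnormalization, or norm conservation as in \Cref{lem:kdv-norm-cons}) is what lets the constants be absorbed. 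In the KdV case, where norms are conserved and equal for the two solutions, the flag term vanishes identically and the comparison is exact. Combining the two comparisons with the displayed bound yields $k=\Omega\big((d(v(T),v'(T))/d(v(0),v'(0)))^2\big)$.
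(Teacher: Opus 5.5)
Your overall reduction is the one the paper intends. The paper gives no separate argument; the corollary is meant to follow from \Cref{lem:state-dist-comp} with $\delta=0$ together with \Cref{fact:euc-tr-rel}. Your time-$T$ comparison $d(\psi_{v(T)},\psi_{v'(T)})\ge\min_{|z|=1}\|v(T)-zv'(T)\|$ is also correct, since the flag component only adds a nonnegative term.

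\textbf{The gap is your handling of the phase at $t=0$.} The flag encoding is not invariant under a phase applied to the data register. If $z^*$ minimizes $\|v-zv'\|$, then $\|\psi_v-z^*\psi_{v'}\|^2=\|v-z^*v'\|^2+|a-z^*b|^2$, where $a=\sqrt{1-\|v\|^2}$ and $b=\sqrt{1-\|v'\|^2}$ are real and nonnegative. The term $|a-z^*b|$ is small only when $z^*\approx 1$. Concretely, take $\|v\|=1/2$ and $v'=-v$. Then $\min_{|z|=1}\|v-zv'\|=0$, but $\langle\psi_v|\psi_{v'}\rangle=3/4-1/4=1/2$, so $d(\psi_v,\psi_{v'})=1$. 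A nonlinear flow need not map $-v(0)$ to $-v(T)$. So under your phase-minimized reading at $t=0$, the ratio in the corollary can be infinite for a pair whose encodings are at constant trace distance, while \Cref{lem:state-dist-comp} then yields only a constant. Your claim that the flag registers ``already agree up to the Lipschitz term'' is therefore false, and no argument along these lines can close this case.

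\textbf{Repair.} Measure the initial separation by the unphased distance $\|v(0)-v'(0)\|$, which is exactly what \Cref{lem:kdv-init} bounds. Your first computation with $z=1$ already gives $d(\psi_{v(0)},\psi_{v'(0)})\le\|\psi_{v(0)}-\psi_{v'(0)}\|=O(\|v(0)-v'(0)\|)$, provided the norms are bounded away from $1$. You are right to add that assumption; the stated hypothesis only bounds the norms away from $0$. Combined with your time-$T$ bound, this proves the corollary in the form the paper actually uses. For real vectors with nonnegative inner product, such as the two negative soliton profiles, the minimizing phase is $z=1$ anyway, so the two readings coincide in the application.

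\textbf{One further correction.} The KdV pair does not have equal norms: $\|\phi\|_{L^2}^2=2/3$ versus $\|\phi'\|_{L^2}^2=\tfrac23(1+\delta)^{3/2}$. So the flag term does not vanish identically. It is $O(\delta)$ by your Lipschitz estimate, because both norms are near $\sqrt{2/3}\approx 0.82$, and that is all that is needed.
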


Our exponential lower bound for the Euler equation uses the following more specific result.
\begin{corr} \label{corr:state-disc-expon}
    Suppose there is a pair of solutions, $v$ and $v'$, to a differential equation satisfying 
    \begin{equation}
        |\langle v(0)|v'(0)\rangle | = 1-O(\epsilon)
    \end{equation}
    \begin{equation}
        |\langle v(T)|v'(T)\rangle | = 1-\Omega(\epsilon^{\alpha})
    \end{equation}
    with $T = O(\log(1/\epsilon))$, for some constant $0<\alpha<1$. Then any quantum algorithm simulating the differential equation exactly requires
    \begin{equation}
        k = e^{\Omega(T)}
    \end{equation}
    copies of the initial state.
\end{corr}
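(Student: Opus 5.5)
The plan is to apply \Cref{lem:state-dist-comp}, using the relation between overlap and Euclidean distance together with \Cref{fact:euc-tr-rel} to convert the two overlap hypotheses into trace distances.

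First, the initial states. We have $d(\ket{v(0)},\ket{v'(0)}) = \sqrt{2-2|\langle v(0)|v'(0)\rangle|} = \sqrt{2\cdot O(\epsilon)} = O(\sqrt{\epsilon})$. By the upper bound in \Cref{fact:euc-tr-rel}, $\epsilon_0 := D(\rho_{v(0)},\rho_{v'(0)}) = O(\sqrt\epsilon)$.

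Next, the final states. Here $d(\ket{v(T)},\ket{v'(T)}) = \sqrt{2\cdot\Omega(\epsilon^\alpha)} = \Omega(\epsilon^{\alpha/2})$. The lower bound in \Cref{fact:euc-tr-rel} then gives $\epsilon_f := D(\rho_{v(T)},\rho_{v'(T)}) = \Omega(\epsilon^{\alpha/2})$. If the overlap hypotheses are meant as equalities up to these orders, I take $\epsilon_0$ and $\epsilon_f$ to be the actual trace distances and keep only the inequalities. \Cref{lem:state-dist-comp} is stated with equalities, but its proof uses only $D(M\rho_\psi,M\rho_\phi)\ge\epsilon_f$ and $D(\rho_\psi,\rho_\phi)\le\epsilon_0$, so this is harmless.

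Now apply \Cref{lem:state-dist-comp}, with $M$ the exact time-$T$ evolution map. ``Exactly'' means precision $\delta = 0$, or more generally any $\delta \le \epsilon_f/4$. This yields
\begin{equation}
k \ge \left(\frac{\epsilon_f - 2\delta}{\epsilon_0}\right)^2 = \Omega\!\left(\frac{\epsilon^{\alpha}}{\epsilon}\right) = \Omega\!\left(\epsilon^{-(1-\alpha)}\right).
\end{equation}
Since $\alpha<1$, this is a positive power of $1/\epsilon$.

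Finally, convert to $T$. The hypothesis $T = O(\log(1/\epsilon))$ means $T \le c\log(1/\epsilon)$ for small $\epsilon$, so $1/\epsilon \ge e^{T/c}$. Therefore
\begin{equation}
k = \Omega\!\left(e^{(1-\alpha)T/c}\right) = e^{\Omega(T)}.
\end{equation}

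The main subtlety is not technical but interpretive. The statement must be read as a family of solution pairs indexed by $\epsilon\to 0$, with the associated times $T(\epsilon)$. The bound is then exponential in $T$ along that family, in the worst-case sense, and this requires $T$ to grow at least like $\log(1/\epsilon)$ (up to constants) for the family to be meaningful.

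The other point to check is the constant in front of the $\Omega(\epsilon^{\alpha/2})$ term. It must dominate $2\delta$, which is why precision $\delta$ has to be small compared with $\epsilon_f$. For exact simulation this holds trivially.
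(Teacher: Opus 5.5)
Your proof is correct and follows the same route as the paper: you convert the overlap hypotheses into distances, apply the state-discrimination bound (Lemma~\ref{lem:state-dist-comp}) to get $k=\Omega(\epsilon^{-(1-\alpha)})$, and then use $T=O(\log(1/\epsilon))$ to conclude $k=e^{\Omega(T)}$. Your distances $O(\sqrt{\epsilon})$ and $\Omega(\epsilon^{\alpha/2})$ are the ones that actually follow from $d=\sqrt{2-2|\langle\psi|\phi\rangle|}$; the paper writes $O(\epsilon)$ and $\Omega(\epsilon^{\alpha})$, but the exponent $\epsilon^{-(1-\alpha)}$ it concludes is exactly what your bounds give.
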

\begin{proof}
    By definition, we have
    \begin{align}
        d(|v(0)\rangle ,|v'(0)\rangle) &= O(\epsilon) \\
        d(|v(T)\rangle ,|v'(T)\rangle) &= \Omega(\epsilon^{\alpha}).
    \end{align}
    Therefore $k = \Omega(\epsilon^{-(1-\alpha)})$. Furthermore, $T = O(\log(1/\epsilon))$ implies $\epsilon = e^{-\Omega(T)}$, so $k = e^{\Omega(T)}$ as claimed.
\end{proof}

\section{Lower bound for simulating the Korteweg-de Vries equation} \label{sec:kdv}

Consider the KdV equation in one dimension, given by Eq.~(\ref{eq:kdv}). It is a standard result that this equation has a single-soliton solution given by \begin{align}
    \phi(x, t) = -\frac{c}{2} \sech^2\left(\frac{\sqrt{c}}{2} (x-ct+a)\right),
\end{align}
for arbitrary constants $a, c$~\cite{drazin1989solitons,linares2014introduction}.

Let $0<\delta\ll 1$ be a small perturbation strength, and consider the divergence of single-soliton solutions with velocity $c=1$ and $c=1+\delta$. We can write down corresponding single-solution solutions (with offset $a=0$) as 
\begin{align}
    \phi(x, t) =& -\frac{1}{2} \sech^2\left(\frac{1}{2}(x-t)\right), & 
    \phi'(x, t) =& -\frac{1+\delta}{2} \sech^2\left(\frac{\sqrt{1+\delta}}{2}(x-(1+\delta)t)\right).
\end{align}
One can compute \begin{align}
    \int_{-\infty}^{\infty} \phi^2(x, t) \, dx =& \frac{2}{3}, &
    \int_{-\infty}^{\infty} \phi'^2(x, t) \, dx =& \frac{2}{3} (1+\delta)^{3/2}, \label{eqn:kdv-norm}
\end{align}
which are clearly subnormalized for small enough $\delta$.

We now show that the distance between these solutions is linear in $\delta$.

\begin{lem}
    The initial distance between $\phi$ and $\phi'$ satisfies \begin{align}
        d(\phi, \phi')\vert_{t=0} = O(\delta).
    \end{align}
    \label{lem:kdv-init}
\end{lem}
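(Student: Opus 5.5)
At $t=0$ both states are flag encodings, so we first reduce the Euclidean distance between them to the $L^2$ distance of the underlying functions plus a term for the flag amplitudes. By \Cref{eq:flag_encoding}, taking $z=1$ in the definition of $d$, we have
\begin{align}
d(\psi_\phi,\psi_{\phi'})^2 \le \norm{\phi(\cdot,0)-\phi'(\cdot,0)}_{L^2}^2 + \left(\sqrt{1-\norm{\phi}^2}-\sqrt{1-\norm{\phi'}^2}\right)^2 .
\end{align}
By \Cref{eqn:kdv-norm}, $\norm{\phi}^2=2/3$ and $\norm{\phi'}^2=\tfrac23(1+\delta)^{3/2}=\tfrac23+O(\delta)$. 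Both values stay bounded away from $1$ for small $\delta$. The map $s\mapsto\sqrt{1-s}$ is Lipschitz on $[0,3/4]$, so the flag term is $O(\delta)$. (If one prefers to read $d$ as the distance between normalized states $\phi/\norm{\phi}$, this is handled the same way: normalizing changes things only by a factor $1+O(\delta)$.)

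The remaining work is to bound $\norm{\phi(\cdot,0)-\phi'(\cdot,0)}_{L^2}=O(\delta)$. Consider the one-parameter family
\begin{align}
f_c(x)=-\tfrac{c}{2}\sech^2\!\left(\tfrac{\sqrt c}{2}x\right), \qquad c\in[1,1+\delta],
\end{align}
so that $\phi(\cdot,0)=f_1$ and $\phi'(\cdot,0)=f_{1+\delta}$. By the fundamental theorem of calculus and Minkowski's integral inequality,
\begin{align}
\norm{f_{1+\delta}-f_1}_{L^2} \le \int_1^{1+\delta}\norm{\partial_c f_c}_{L^2}\,dc \le \delta\sup_{c\in[1,2]}\norm{\partial_c f_c}_{L^2}.
\end{align}

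Differentiating $f_c$ in $c$ gives
\begin{align}
\partial_c f_c = -\tfrac12\sech^2(y) + \tfrac{c}{2}\sech^2(y)\tanh(y)\cdot\tfrac{x}{2\sqrt c}, \qquad y=\tfrac{\sqrt c}{2}x .
\end{align}
Both terms are bounded by a constant times $(1+|x|)e^{-\sqrt{c}\,|x|}\le (1+|x|)e^{-|x|}$ for $c\ge1$. This is square-integrable with a bound uniform in $c\in[1,2]$, so the supremum above is a finite absolute constant.

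The main obstacle is this uniform bound on the $c$-derivative, specifically the $x\sech^2\tanh$ term that arises from the width changing with $c$. Exponential decay of $\sech^2$ handles it. Alternatively, one can compute $\int\phi\phi'\,dx$ directly and Taylor-expand it to second order in $\delta$, but the derivative route avoids explicit integrals. Combining the two estimates gives $d(\phi,\phi')|_{t=0}=O(\delta)$.
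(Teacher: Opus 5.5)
Your proof is correct and follows essentially the same route as the paper: a first-order estimate in the soliton parameter, where the exponential decay of $\sech^2$ makes the $x\,\sech^2\tanh$ term square-integrable. The paper splits off the amplitude factor and applies the mean value theorem to the dilation, using monotonicity of $\sech^2$ to control the intermediate point; you instead use the fundamental theorem of calculus in $c$ together with Minkowski's integral inequality. Your explicit bound on the flag amplitude also fills a small gap, since the paper only bounds $\|\phi-\phi'\|_{L^2}$ and leaves implicit the passage to $d$ on the encoded states.
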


\begin{proof}
    At $t=0$ write $y=x/2$ and set $f(z)=\sech^2 z$. Then
    \begin{align}
    \phi(x,0)=-\tfrac12 f(y),
    \qquad
    \phi'(x,0)=-\tfrac12(1+\delta)f(\alpha y),
    \qquad
    \alpha:=\sqrt{1+\delta}.
    \end{align}
    We have
    \begin{align}
    \|\phi-\phi'\|_{L^2}^2
    &=\int_{\mathbb R}(\phi-\phi')^2\,dx \\
    &=\frac12\int_{\mathbb R}
    \Big((1+\delta)f(\alpha y)-f(y)\Big)^2\,dy.
    \end{align}
    Using $(a+b)^2\le 2(a^2+b^2)$,
    \begin{align}
    \|\phi-\phi'\|_{L^2}^2
    &\le \frac12\int_{\mathbb R}
    \Big(\delta^2 f(y)^2
    +(1+\delta)^2\big(f(\alpha y)-f(y)\big)^2\Big)\,dy.
    \end{align}
    
    Since $f(y)=\sech^2 y$ satisfies
    \begin{align}
    \int_{\mathbb R} f(y)^2\,dy
    =\int_{\mathbb R}\sech^4 y\,dy= O(1),
    \end{align}
    the first term is $O(\delta^2)$.
    
    Since
    \begin{align}
    f'(z)=\frac{d}{dz}\sech^2 z
    =-2\sech^2 z\,\tanh z,
    \end{align}
    we have
    \begin{align}
    |f'(z)|\le 2\sech^2 z.
    \end{align}
    By the mean value theorem, for each $y$ there exists
    $\xi$ between $y$ and $\alpha y$ such that
    \begin{align}
    |f(\alpha y)-f(y)|
    =|(\alpha-1)y|\cdot |f'(\xi)|
    \le 2|\alpha-1|\,|y|\,\sech^2(\xi).
    \end{align}
    For $\delta$ small, $\alpha\in(1,2]$, and since $|\xi|\ge |y|$ and $\sech^2$ is even and
    monotone decreasing on $[0,\infty)$,
    \begin{align}
    \sech^2(\xi)\le \sech^2(|y|).
    \end{align}
    Therefore
    \begin{align}
    |f(\alpha y)-f(y)|
    \le 2|\alpha-1|\,|y|\,\sech^2(|y|),
    \end{align}
    and hence
    \begin{align}
    \int_{\mathbb R}(f(\alpha y)-f(y))^2\,dy
    \le 4|\alpha-1|^2
    \int_{\mathbb R} y^2\sech^4(|y|)\,dy.
    \end{align}
    The latter integral is finite because
    $y^2\sech^4(|y|)\sim y^2e^{-4|y|}$ as $|y|\to\infty$.
    Therefore this term is $O(|\alpha-1|^2)=O(\delta)$.
    Combining the two estimates gives
    $
    \|\phi-\phi'\|_{L^2}^2= O(\delta^2)
    $,
    and the result follows.
\end{proof}

After evolving for a time $t=O(\delta^{-1})$, the states have unit distance.

\begin{lem}
    The distance between the time-evolved states at $t=\frac{2}{\delta}$ satisfies
    \begin{align}
        d(\phi, \phi')|_{t=2/\delta} =\Omega(1).
    \end{align}
    \label{lem:kdv-final}
\end{lem}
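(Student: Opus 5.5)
At $t = 2/\delta$ the two solitons are centred at $x = t = 2/\delta$ and $x = (1+\delta)t = 2/\delta + 2$, respectively. Each has unit-order width, so their centres are separated by a fixed distance of $2$, independent of $\delta$. The plan is to show that, after this fixed shift, the two profiles have overlap bounded away from their norms.

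\textbf{Reduction to an overlap bound.} Since $d(\ket{\psi},\ket{\phi}) = \sqrt{2 - 2\abs{\braket{\psi|\phi}}}$, it suffices to show that the normalized overlap $\abs{\langle\phi,\phi'\rangle}/(\|\phi\|_{L^2}\|\phi'\|_{L^2})$ is at most $1 - c$ for some constant $c > 0$. If one prefers the flag encoding of \Cref{eq:flag_encoding}, the $\ket{\bot}$ components contribute a term of size at most $\sqrt{1-2/3}\sqrt{1-\tfrac23(1+\delta)^{3/2}}$. Even so, a bound of $\Omega(1)$ on $\|\phi - \phi'\|_{L^2}$ gives $d = \Omega(1)$, because the flag parts differ only by $O(\delta)$ in norm. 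I will therefore aim directly at
\begin{equation}
\|\phi-\phi'\|_{L^2}^2\big|_{t=2/\delta} = \Omega(1).
\end{equation}

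\textbf{Key steps.} First, substitute $s = x - 2/\delta$. This gives $\phi = -\tfrac12\sech^2(s/2)$ and $\phi' = -\tfrac{1+\delta}{2}\sech^2\!\big(\tfrac{\sqrt{1+\delta}}{2}(s-2)\big)$, which is exactly the $t=0$ configuration of \Cref{lem:kdv-init} except that $\phi'$ is shifted by $2$. Second, use the triangle inequality together with \Cref{lem:kdv-init} applied to the shifted profile. This replaces $\phi'$ by $\tilde\phi(s) = -\tfrac12\sech^2((s-2)/2)$ at a cost of $O(\delta)$ in $L^2$, since the rescaling in amplitude and width commutes with the translation. Third, compute $\|\phi - \tilde\phi\|_{L^2}^2 = 2\|\phi\|^2 - 2\langle \phi,\tilde\phi\rangle = \tfrac43 - 2g(2)$, where $g(a) = \tfrac14\int \sech^2(s/2)\sech^2((s-a)/2)\,ds$ is the autocorrelation, with $g(0) = 2/3$. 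Finally, this quantity is a strictly positive, $\delta$-independent constant, and for small $\delta$ it dominates the $O(\delta)$ error.

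\textbf{Main obstacle.} The main step is certifying that $g(2) < g(0)$ strictly. By Cauchy--Schwarz, $g(a) \le g(0)$, with equality only if $\phi$ and its translate are proportional, which is impossible for a nonzero $L^2$ function and a shift $a \neq 0$. This already gives a positive constant, so no explicit evaluation is needed. If an explicit number is desired, the closed form $\int \sech^2 u\,\sech^2(u-b)\,du = 4(b\coth b - 1)/\sinh^2 b$ can be used with $b = 1$.
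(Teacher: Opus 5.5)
Your proof is correct. Its skeleton matches the paper's: after time $2/\delta$ the soliton centres sit a fixed distance $2$ apart, so everything reduces to showing that two unit-width $\sech^2$ bumps offset by $2$ have overlap bounded away from $1$.

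The execution differs in two places.
\begin{enumerate}
\item The paper normalizes, writes the overlap as an integral $I(\delta)$, evaluates $I(0)\approx 0.68$ explicitly, and passes to small $\delta$ ``by continuity''. You instead use translation invariance of the $L^2$ norm to reuse \Cref{lem:kdv-init}. This gives an explicit $O(\delta)$ error rather than an unquantified limit.
\item You certify the strict gap $g(2)<g(0)$ through the equality case of Cauchy--Schwarz, so no closed form is needed at all. If you do want the number, your formula at $b=1$ gives $I(0)=24e^2/(e^2-1)^3\approx 0.68$. This agrees with the paper's numerical value; the paper's printed closed form $24e^2/(e^3-1)$ is a typo.
\end{enumerate}
Your route also handles the flag encoding more transparently than the paper's last step $\abs{\braket{\hat\phi,\hat\phi'}}\ge 1-O(d^2)$.

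One small correction to your reduction paragraph. The reason $\|\phi-\phi'\|_{L^2}=\Omega(1)$ forces $d=\Omega(1)$ is not that the flag amplitudes differ by $O(\delta)$. It is that for every $|z|=1$,
\[
\|\ket{\psi_\phi}-z\ket{\psi_{\phi'}}\|\ \ge\ \|\phi-z\phi'\|_{L^2}\ \ge\ \|\phi-\phi'\|_{L^2},
\]
where the last step uses $\langle\phi,\phi'\rangle>0$ (both profiles are negative). The $O(\delta)$ closeness of the norms, from \eqref{eqn:kdv-norm}, matters only if $d$ is read as the distance between the normalized functions $\hat\phi,\hat\phi'$.
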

\begin{proof}
    Let us translate the origin to be about $ct+1$, and normalize each function. This  gives
    \begin{align}
        \hat \phi =& -\sqrt{\frac{3}{8}} \sech^2((x+1)/2), &
        \hat \phi' =& -\sqrt[4]{1+\delta}\sqrt{\frac{3}{8}} \sech^2(\sqrt{1+\delta}(x-1)/2).
    \end{align}
   Consider the overlap between these states, $\abs{\braket{\hat \phi, \hat \phi'}}$.
    If we define \begin{align}
        f(x) = \frac{3\sqrt[4]{1+\delta}}{8} \sech^2( (x+1)/2) \sech^2(\sqrt{1+\delta} (x-1)/2),
    \end{align}
    then we can write the overlap as \begin{align}
        \abs{\braket{\hat \phi, \hat \phi'}} =& \int_{-\infty}^{\infty} f(x) \, dx.
    \end{align}
    If we let $I(\delta) = \int_{-\infty}^{\infty} f(x) dx$, then we have \begin{align}
        I(0) &= \int_{-\infty}^{\infty} \frac{3}{8} \sech^2\left(\frac{x+1}{2}\right) \sech^2\left(\frac{x-1}{2}\right) \\
        &= \frac{24e^2}{e^3-1} \approx 0.68,
    \end{align}
    so for $\delta \ll 1$ sufficiently small we have $I(\delta) = 1 - \Omega(1)$ by continuity.
    Now observe that $\abs{\braket{\hat \phi, \hat \phi'}} \geq 1 - O(d(\phi, \phi')^2)$, as the norms of $\phi, \phi'$ are $\Omega(1)$. Re-arranging, we obtain $O(d(\phi, \phi')^2) \geq \Omega(1)$, which implies the claim. 
\end{proof}

\begin{thm} 
    There exists a universal constant $C$ such that any quantum algorithm for simulating the KdV equation, as outlined in Problem~\ref{prob:final-state-kdv}, for time $T$ to error at most $1/C$ requires $\Omega(T^2)$ copies of the initial state.
    \label{thm:kdv-lowerbound}
\end{thm}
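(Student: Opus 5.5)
The plan is to combine the soliton separation estimates of \Cref{lem:kdv-init} and \Cref{lem:kdv-final} with the state-discrimination bound of \Cref{lem:state-dist-comp}, in the form of \Cref{cor:eq-evol-lower}. Given $T$, set $\delta = 2/T$; for $T$ larger than some constant, $\delta \ll 1$ is small enough for both lemmas to apply. Take the pair of single solitons $\phi$ (speed $c=1$) and $\phi'$ (speed $c=1+\delta$). By Eq.~\eqref{eqn:kdv-norm}, both have squared $L^2$-norm $\tfrac23(1+\delta)^{3/2} \le 1$, so they are valid subnormalized inputs for \Cref{prob:final-state-kdv}. Their norms are also $\Omega(1)$ and constant in time by \Cref{lem:kdv-norm-cons}, so the hypotheses of \Cref{cor:eq-evol-lower} on the norms hold.

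Next we pass from function distances to distances between the flag-encoded states $\ket{\psi_\phi}$ from Eq.~\eqref{eq:flag_encoding}. At $t=0$, $\norm{\ket{\psi_\phi}-\ket{\psi_{\phi'}}}^2$ equals $\|\phi-\phi'\|_{L^2}^2$ plus the squared difference of the flag amplitudes $\sqrt{1-\|\phi\|^2}-\sqrt{1-\|\phi'\|^2}$. The first term is $O(\delta^2)$ by \Cref{lem:kdv-init}. For the second, $\|\phi\|^2=2/3$ is bounded away from $1$, so the square root is Lipschitz there, and $\big|\|\phi\|^2-\|\phi'\|^2\big| = O(\delta)$; hence this term is also $O(\delta^2)$. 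So the initial encoded distance is $\epsilon_0=O(\delta)=O(1/T)$, and by \Cref{fact:euc-tr-rel} the same bound holds for the trace distance.

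At $t=T=2/\delta$, \Cref{lem:kdv-final} gives $d(\phi,\phi')=\Omega(1)$, i.e.\ the normalized overlap of the soliton parts is at most about $0.68+o(1)$. The flag components cannot restore the overlap: the encoded overlap is $\langle\phi,\phi'\rangle+\sqrt{(1-\|\phi\|^2)(1-\|\phi'\|^2)}$. With $\|\phi\|^2\approx\|\phi'\|^2\approx 2/3$, this is at most roughly $\tfrac23\cdot 0.68+\tfrac13<1$. The encoded final states therefore satisfy $d=\Omega(1)$, and \Cref{fact:euc-tr-rel} gives a trace distance $\epsilon_f\ge c_0$ for some absolute constant $c_0>0$.

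Finally, an algorithm with Euclidean error $1/C$ has trace-distance error $\le 1/C$ by \Cref{fact:euc-tr-rel}. Choosing $C$ with $1/C<c_0/4$, \Cref{lem:state-dist-comp} yields $k\ge((\epsilon_f-2/C)/\epsilon_0)^2=\Omega(T^2)$. Small $T$ are handled trivially, since $k\ge1$. The main obstacle is the flag-register bookkeeping in the final-time step. Two things must be checked: that the $\ket{\bot}$ overlap does not push the total overlap back toward $1$, which requires the explicit constant $\approx0.68$ together with the norm $2/3$, and that the implicit constants in \Cref{lem:kdv-final} are uniform for all sufficiently small $\delta$, which follows by continuity of $I(\delta)$.
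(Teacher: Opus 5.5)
Your proposal is correct and follows the same argument as the paper, which combines \Cref{lem:kdv-init,lem:kdv-final} with \Cref{cor:eq-evol-lower} for the soliton pair with speeds $1$ and $1+\delta$, where $\delta=\Theta(1/T)$. Your explicit bookkeeping of the $\ket{\bot}$ flag amplitudes and your choice $1/C<c_0/4$ in \Cref{lem:state-dist-comp} make precise what the paper leaves implicit. The flag term changes the initial encoded distance by only $O(\delta)$, and it keeps the final encoded overlap near $1-\tfrac23(1-0.68)<1$.
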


\begin{proof}
This follows from \Cref{lem:kdv-init,lem:kdv-final,cor:eq-evol-lower}.
\end{proof}

\section{Lower bound for simulating the incompressible Euler equations}
\label{sec:lb_final_state}
In this section, we prove lower bounds for Problems~\ref{prob:final-state-euler-tensor-prod} and~\ref{prob:final-state-euler-direct-sum}. First, we consider a smooth equilibrium that is provably unstable in the linear regime of the Euler equations and prove a lower bound for simulating the linearized Euler equations. Then we bound the error from the linearization of the Euler equations. Finally, we combine these results to prove the lower bound for the fully nonlinear equations. 

\subsection{Smooth Kelvin-Helmholtz instability} \label{sec:SKH}

Consider the incompressible Euler equations given by Eqs.~(\ref{eq:incomp-euler}) with the equilibrium profile
\begin{equation} \label{eq:equil} 
    \begin{aligned} 
    \mathbf u_0(\mathbf r) &=  \begin{bmatrix}
        U_0\sin (my) \\ 0
    \end{bmatrix}, \ m \in \mathbb{N} \\
    p_0(\mathbf r) &= \mathrm{const},
\end{aligned}
\end{equation}
on the domain $[0,2\pi) \times [0,2\pi)$ with periodic boundary conditions (i.e., $\Omega = \mathbb T^2$), where $U_0>0$ is the amplitude of the velocity field. One can think of this flow as a smoothed-out version of the Kelvin-Helmholtz instability, which is a shear-driven instability involving two disjoint fluid layers flowing at different velocities~\cite{chandrasekhar2013hydrodynamic}. This equilibrium is much easier to study for our work because the flow is smooth and the boundary conditions are periodic. Other well-known fluid instabilities, such as the Rayleigh-Taylor and conventional Kelvin-Helmholtz instabilities, have discontinuities in the equilibrium profile, which makes developing an analytical lower bound more challenging. 

A disadvantage of this equilibrium is that we do not know how to find an exact dispersion relation that describes the growth rate as a function of the perturbation wavenumber. However, we establish upper and lower bounds on the growth rate of the instability, which suffices for our application. This allows us to analytically understand the linearized equations. Later, we bound the small-time divergence between the linearized and actual time evolutions.

\subsubsection{Linear stability analysis} \label{subsec:linear-stab-analysis}
In this section we perform stability analysis on the equilibrium of Eq.~\eqref{eq:equil} and identify conditions under which the flow is unstable. This analysis is done by linearizing the Euler equations about the equilibrium and calculating the growth rate of perturbations to the equilibrium. Furthermore, we give upper and lower bounds on the growth rate of the instability, which are useful for lower bounding the copy complexity of the simulation problem.

We start with the incompressible Euler equations (i.e., Eq.~\eqref{eq:incomp-euler}). Let $\mathbf u = \mathbf{u}_0 + \epsilon \tilde{\mathbf{v}}$ and $
P= p_0 + \epsilon \tilde p$, 
where $0<\epsilon < 1$. Substituing this in Eq.~\eqref{eq:incomp-euler} and only keeping the terms that are first order in $\epsilon$, we obtain the linearized Euler equations:
\begin{equation}
    \begin{gathered} \label{eq:lin-euler}
        \frac{\partial \tilde{\mathbf{v}}}{\partial t} = -(\mathbf{u}_0 \cdot \nabla) \tilde{\mathbf{v}} - (\tilde{\mathbf{v}}\cdot \nabla)\mathbf{u}_0 - \nabla \tilde{p} \\
    \nabla \cdot \tilde{\mathbf{v}} = 0.
    \end{gathered}
\end{equation}
Any two-dimensional divergence-free vector field on a simply connected domain can be characterized by a scalar function. In fluid dynamics, this is commonly referred to as the stream function. Let $\psi(\mathbf{r},t)$ be the stream function of $\tilde{\mathbf{v}}$, whose $x$- and $y$-components satisfy
\begin{equation} \label{eq:stream-function-def}
    \begin{aligned}
    \tilde{v}_x &= -\frac{\partial \psi}{\partial y} \\
    \tilde{v}_y &= \frac{\partial \psi}{\partial x}.
\end{aligned}
\end{equation}

By looking at each component of Eq.~\eqref{eq:lin-euler}, taking the $y$-derivative of the $x$-component and the $x$-derivative of the $y$-component, and subtracting the two from each other, one can write Eq.~(\ref{eq:lin-euler}) as
\begin{equation} \label{eq:stream function-realspace}
    \frac{\partial}{\partial t} \nabla^2 \psi + U_0\sin (my) \frac{\partial}{\partial x} (\nabla^2 + m^2)\psi = 0.
\end{equation}
Since there is no spatial variation in the $x$-direction, we can seek a stream function of the form
\begin{equation} \label{eq:stream-function-ansatz}
    \psi = e^{\gamma t + ikx} \chi(y),
\end{equation}
for some $k\in \mathbb N$, which we can choose. Plugging this ansatz into Eq.~(\ref{eq:stream function-realspace}), we get the following generalized eigenvalue problem\footnote{Note that Eq.~\eqref{eq:stream function-realspace} is real-valued, but the ansatz of Eq.~\eqref{eq:stream-function-ansatz} yields Eq.~\eqref{eq:stream function-fourierspace}, which is complex-valued. This is a fairly standard technique in stability analysis, where the perturbations are assumed to be complex-valued to facilitate the calculations~\cite{chen1984introduction,drazin2004hydrodynamic,chandrasekhar2013hydrodynamic}. Once the stability analysis is complete, one can take the real part of Eq.~\eqref{eq:stream-function-ansatz} to obtain the correct physical form of the perturbation.}:
\begin{equation} \label{eq:stream function-fourierspace}
    \gamma \left (-k^2 + \frac{d^2}{dy^2} \right )\chi + ikU_0\sin(my)\left (m^2 - k^2 + \frac{d^2}{dy^2} \right)\chi = 0.
\end{equation}
We now write $\chi(y)$ as a Fourier series:
\begin{equation}
    \chi(y) = \sum_{j\in \mathbb Z} c_j e^{ijy}.
\end{equation}
Substituting this into Eq.~(\ref{eq:stream function-fourierspace}) gives the recurrence relation
\begin{equation}
    \frac{2\gamma}{kU_0} (j^2 + k^2)c_j + (k^2-m^2+(j-m)^2)c_{j-m} - (k^2 - m^2 + (j+m)^2) c_{j+m} = 0.
\end{equation}
We can also seek a similar ansatz for the perturbed pressure in Eq.~\eqref{eq:lin-euler}:
\begin{equation}
    \tilde p = e^{\gamma t + ikx} \sum_{j\in \mathbb Z} b_j e^{ijy}.
\end{equation}
The task now is to find sets of Fourier coefficients $\{b_j\}$ and $\{c_j\}$ that correspond to a smooth $\tilde p$ and $\psi$, respectively, with the additional constraint $\gamma >0$. A similar version of the following has been proven in~\cite{friedlander1997nonlinear}.

\begin{restatable}{thm}{stability} \label{thm:stability}
    Let $m > 1$ and $m^2 \neq m_1^2 + m_2^2$ where $m_1,m_2\in\mathbb Z$ and $m_1m_2 \neq 0$. Let $k\in[m-1]$ be the perturbation wavenumber. Then there exist sets of coefficients $\{b_j\}$ and $\{c_j\}$ such that
    \begin{enumerate}
        \item $\gamma > 0$ when
        \begin{equation}
            k < k_{\mathrm{cutoff}} \coloneq \sqrt{\frac{\sqrt{177}-9}{6}}m \approx 0.85m.
        \end{equation}
        \item When $k<k_{\mathrm{cutoff}}$, $\gamma$ has the bounds
        \begin{equation}
        \gamma_l < \gamma < \gamma_u,
        \end{equation}
        where
        \begin{align}
            \gamma_u &\coloneq U_0 k\sqrt{\frac{1}{2}\frac{m^2 - k^2}{m^2+k^2}}  \\
            \gamma_l &\coloneq U_0k \sqrt{\frac{1}{2} \frac{8m^4-9m^2k^2-3k^4}{(m^2+k^2)(8m^2+2k^2)}},
        \end{align}        
        \item $\psi$ and $\tilde p$ are smooth (analytic).
        \item The coefficients $\{c_j\}$ can be chosen such that
        \begin{align}
            \|\tilde{\mathbf{v}}\|_{L^2} &= \aleph \\
            \|\tilde p \|_{L^2} &= q\aleph,
        \end{align}
        where $\aleph > 0$ can be arbitrarily chosen and $q>0$.
    \end{enumerate}

\end{restatable}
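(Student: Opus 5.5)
Given $\gamma$, the recurrence couples only coefficients $c_j$ with $j$ in a fixed residue class modulo $m$. I will work on the lattice $j = \ell m$, $\ell \in \mathbb Z$, so that $\chi(y) = \sum_\ell c_\ell e^{i\ell m y}$. The substitution $a_\ell \coloneq (\ell^2 m^2 + k^2)\, c_\ell$ turns the recurrence into the infinite tridiagonal eigenvalue problem
\begin{equation}
    \frac{2\gamma}{kU_0}\, a_\ell + \beta_{\ell-1}\, a_{\ell-1} - \beta_{\ell+1}\, a_{\ell+1} = 0, \qquad \beta_\ell \coloneq \frac{k^2 - m^2 + \ell^2 m^2}{\ell^2 m^2 + k^2}.
\end{equation}
This is the classical continued-fraction formulation of Meshalkin--Sinai type used in~\cite{friedlander1997nonlinear}. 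The key sign feature is $\beta_0 = (k^2 - m^2)/k^2 < 0$ when $k < m$, while $\beta_\ell > 0$ and $\beta_\ell \to 1$ for $\ell \neq 0$. The hypothesis $m^2 \neq m_1^2 + m_2^2$ excludes other resonant triads. I will use it only to justify that the perturbation with $x$-wavenumber $k$ lives in this single $m$-lattice and that $k^2 + j^2 \neq m^2$, so that no denominators vanish.

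\textbf{Existence of $\gamma > 0$.} I seek solutions that decay as $\ell \to \pm\infty$. For $\ell \geq 1$, the ratio $r_\ell = a_\ell / a_{\ell-1}$ satisfies the continued fraction
\begin{equation}
    r_\ell = \frac{\beta_{\ell-1}}{\lambda + \beta_{\ell+1} r_{\ell+1}}, \qquad \lambda = \frac{2\gamma}{kU_0}.
\end{equation}
For each $\lambda > 0$ this converges to a positive decaying solution, since all $\beta$ are positive there. The ansatz is symmetric in $\ell \to -\ell$, so I look for an even solution. The $\ell = 0$ equation then becomes the scalar dispersion relation
\begin{equation}
    \lambda = 2\beta_{-1} r_1(\lambda)\cdot(-1) \quad \text{(up to the sign conventions)},
\end{equation}
that is, $\lambda^2 \approx 2|\beta_0|\,\beta_1 / (1 + \text{tail})$. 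Define $F(\lambda) = \lambda - G(\lambda)$, where $G$ is the continued-fraction expression. Then $F$ is continuous and monotone, with $F(\lambda) > 0$ for large $\lambda$. I will show $F(0^+) < 0$ precisely when $k < k_{\mathrm{cutoff}}$. Truncating the continued fraction at depth one gives $\gamma_u$, because dropping the positive tail only increases $G$. Truncating at depth two with the crude bound $r_2 \leq \beta_1/\lambda$, or alternatively using $\lambda \to 0$ in the next level, gives $\gamma_l$.

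Evaluating the positivity of $\gamma_l^2$ gives the cutoff. Writing $s = k^2/m^2$, the condition $8 - 9s - 3s^2 > 0$ yields $s < (\sqrt{177} - 9)/6$, which matches $k_{\mathrm{cutoff}}$ exactly. This confirms the plan: items 1 and 2 both follow from the two-level sandwich of the continued fraction plus the intermediate value theorem. \textbf{The main obstacle} is controlling the monotonicity and sign bookkeeping of the continued fraction so that the depth-two truncation really is a lower bound for $G$ at the root. I would first prove that the tail ratios satisfy $0 < r_\ell < \beta_{\ell-1}/\lambda$ by induction, and then insert these bounds.

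\textbf{Smoothness and normalization.} For analyticity, the ratio bound $r_\ell \lesssim 1/\lambda$ alone is not enough. Instead, I use $r_\ell \sim \beta_{\ell-1}/(\lambda + r_{\ell+1})$ with $\lambda$ fixed. A more careful estimate from the recurrence gives $r_\ell \to \rho$ with $\rho$ the small root of $\rho^2 + \lambda\rho - 1 = 0$, and $\rho < 1$ since $\lambda > 0$. Hence $|a_\ell| \leq C\rho^{|\ell|}$, so the coefficients decay geometrically and $\chi$ is real-analytic in $y$; then $\psi$ is analytic too. The pressure coefficients $b_j$ come from the linearized pressure Poisson equation, whose Fourier solution expresses each $b_j$ as a finite combination of $c_{j\pm m}$ divided by $k^2 + j^2 \neq 0$. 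So the $b_j$ inherit the geometric decay and $\tilde p$ is analytic. Finally, the system is linear, so rescaling all $c_j$ by a common factor scales $\|\tilde{\mathbf v}\|_{L^2}$ and $\|\tilde p\|_{L^2}$ together. Choosing the factor to make $\|\tilde{\mathbf v}\|_{L^2} = \aleph$ fixes $q = \|\tilde p\|_{L^2}/\|\tilde{\mathbf v}\|_{L^2}$, which is positive because $\tilde p \not\equiv 0$ for a non-trivial shear perturbation. Taking real parts as in the footnote preserves all of these properties.
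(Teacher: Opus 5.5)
Your route is the paper's. You restrict to $j\in m\mathbb Z$, solve the one-sided recurrences by continued fractions and match at $j=0$. You then sandwich the continued fraction between its first two convergents to get $\gamma_u$ and $\gamma_l$, and use the intermediate value theorem (the paper phrases this as a slope comparison at $\gamma=0$). Geometric decay comes from the limiting ratio, $b_j$ from the pressure Poisson equation, and normalization from rescaling. Your substitution $a_\ell=(\ell^2m^2+k^2)c_\ell$ moves the sign defect into $\beta_0<0$ instead of the paper's $a_0<0$; that difference is cosmetic.

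The symmetry reduction, however, fails as written. The recurrence is not invariant under $\ell\to-\ell$ at fixed $\lambda$: replacing $a_\ell$ by $a_{-\ell}$ gives a solution with $-\lambda$, since this is $y\mapsto-y$, which reverses the shear. Concretely, for an even sequence the $\ell=0$ equation reads $\lambda a_0=0$. Adding the equations at $\pm\ell$ gives $2\lambda a_\ell=0$, so every even solution with $\lambda>0$ vanishes identically, and likewise every odd one. The symmetry that preserves $\lambda$ is $a_\ell\mapsto(-1)^\ell a_{-\ell}$, i.e. the reflection $y\mapsto\pi/m-y$, which fixes $\sin(my)$. The paper's construction produces exactly $d_{-pm}=(-1)^p d_{pm}$ through $\tilde\rho_0=-\rho_m$.

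A related sign slip: the decaying solution alternates in sign, so your $r_\ell=a_\ell/a_{\ell-1}$ actually satisfies $r_\ell=-\beta_{\ell-1}/(\lambda-\beta_{\ell+1}r_{\ell+1})$. It is $s_\ell=-a_\ell/a_{\ell-1}$ that obeys your continued fraction, with $s_\ell>0$ for $\ell\ge 2$ and $s_1=\beta_0/(\lambda+\beta_2 s_2)<0$.

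With the twisted parity, the $\ell=0$ equation becomes $\lambda a_0=2\beta_1 a_1$, i.e. $\lambda\bigl(\lambda+\beta_2 s_2(\lambda)\bigr)=2|\beta_0|\beta_1$. This is what your displayed relation and ``$\lambda^2\approx 2|\beta_0|\beta_1/(1+\text{tail})$'' are aiming at, and the rest then goes through as you describe:
\begin{itemize}
\item $s_2>0$ gives $\gamma<\gamma_u$.
\item $s_2<\beta_1/\lambda$ gives $\gamma>\gamma_l$.
\item The same bound shows $F(\lambda)<0$ for all sufficiently small $\lambda>0$ whenever $2|\beta_0|>\beta_2$, i.e. $8m^4-9m^2k^2-3k^4>0$.
\end{itemize}
Note that $F\to0$ as $\lambda\to0^+$, so it is the sign of $F$ near $0$, not $F(0^+)$ itself, that matters.

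Two smaller corrections. First, this argument shows only that $k<k_{\mathrm{cutoff}}$ is sufficient, not that it is ``precisely'' the condition; sufficiency is all the theorem claims. Second, you need continuity of $F$, not monotonicity. Continuity follows from uniform convergence of the convergents on compact subsets of $(0,\infty)$.
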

The proof of this theorem can be found in Appendix~\ref{app:proof_stability}. The proof follows the one in~\cite{friedlander1997nonlinear}, but is slightly modified by generalizing it to equilibria of arbitrary amplitude and also developing upper and lower bounds on the growth rate of the instability. The proof also constructs the coefficients for $\tilde p$.

\subsection{Exponential separation of the linear solution}

In this section, we show that two trajectories, one at the equilibrium and another perturbed about it, diverge exponentially fast under linearized dynamics. We will consider the tensor-product encoding of Problem~\ref{prob:final-state-euler-tensor-prod}.

The linearized equations we consider are a modification of Eq.~\eqref{eq:lin-euler} that govern the dynamics of $\tilde{\mathbf{u}}= \mathbf{u}_0 + \epsilon \tilde{\mathbf{v}}$ and $\tilde P = p_0 + \epsilon \tilde p$:
\begin{equation} \label{eq:lin-euler-total-vel}
    \begin{gathered}
        \frac{\partial \tilde{\mathbf{u}}}{\partial t} +(\mathbf{u}_0 \cdot \nabla) \tilde{\mathbf{u}} + (\tilde{\mathbf{u}}\cdot \nabla)\mathbf{u}_0   = - \nabla \tilde P \\
    \nabla \cdot \tilde{\mathbf{u}} = 0.
    \end{gathered}
\end{equation}

\begin{thm} \label{thm:lin-diverges}
    For any $0<\epsilon<\min\{1, \frac{1}{q}\}$, with $q$ defined as in \cref{thm:stability}, there exists a pair of solutions, $\mathbf{u}_0(t),P_0(t)$ and $\tilde{\mathbf u}(t), \tilde P(t)$, to Eq.~\eqref{eq:lin-euler-total-vel} such that 
    \begin{align}
    \langle P_0(0)|\tilde P(0)\rangle &= 1-O(\epsilon^2) \\
    \langle \mathbf{u}_0(0)|\tilde{\mathbf{u}}(0)\rangle  &\geq 1-\epsilon^2,
    \end{align}
    and
    \begin{equation}
        g(t) \leq \langle \mathbf{u}_0(t) | \tilde{\mathbf{u}}(t)\rangle \leq f(t),
    \end{equation}
    where
    \begin{align}
        f(t) &\coloneq \sqrt{\frac{1-\epsilon^2}{1+\epsilon^2(e^{2\gamma_l t}-1)}} \\
        g(t) &\coloneq \sqrt{\frac{1-\epsilon^2}{1+\epsilon^2(e^{2\gamma_u t}-1)}},
    \end{align}
    with $\gamma_l$ and $\gamma_u$ defined as in \Cref{thm:stability}.
\end{thm}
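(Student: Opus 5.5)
We take the two solutions to be the equilibrium itself, $\mathbf u_0(t)\equiv\mathbf u_0$ and $P_0(t)\equiv p_0$, which is a stationary solution of Eq.~\eqref{eq:lin-euler-total-vel}, and the perturbed solution $\tilde{\mathbf u}(t)=\mathbf u_0+\epsilon\tilde{\mathbf v}(t)$, $\tilde P(t)=p_0+\epsilon\tilde p(t)$. Here $\tilde{\mathbf v},\tilde p$ are the real parts of the unstable mode of \cref{thm:stability} with growth rate $\gamma\in(\gamma_l,\gamma_u)$. Since Eq.~\eqref{eq:lin-euler-total-vel} is affine in $\tilde{\mathbf u}$ with the equilibrium as its fixed point, $\tilde{\mathbf v}$ solves Eq.~\eqref{eq:lin-euler}. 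Hence $\tilde{\mathbf v}(t)=e^{\gamma t}\tilde{\mathbf v}(0)$ and $\tilde p(t)=e^{\gamma t}\tilde p(0)$, because the real part of $e^{\gamma t+ikx}\chi(y)$ with real $\gamma$ simply scales. Using item~4 of \cref{thm:stability}, we normalize so that $\|\mathbf u_0\|_{L^2}=1$ (fixing $U_0$) and $\|\tilde{\mathbf v}(0)\|_{L^2}=\aleph$, $\|\tilde p(0)\|_{L^2}=q\aleph$. The choice of $\aleph$ is made below.

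The key structural fact is orthogonality: $\langle\mathbf u_0,\tilde{\mathbf v}(t)\rangle_{L^2}=0$ for all $t$. This holds because $\mathbf u_0$ has no $x$-dependence, while every component of $\tilde{\mathbf v}$ carries the factor $\cos(kx)$ or $\sin(kx)$ with $k\ge1$, so the $x$-integral vanishes. The same argument gives $\langle p_0,\tilde p\rangle=0$ when $p_0$ is a nonzero constant. It then follows that
\begin{equation}
\langle\mathbf u_0(t)|\tilde{\mathbf u}(t)\rangle=\frac{1}{\sqrt{1+\epsilon^2\aleph^2e^{2\gamma t}}}.
\end{equation}
We choose the normalization so that this expression takes exactly the form of $f$ and $g$. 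Writing $\epsilon^2\aleph^2e^{2\gamma t}$ in place of $\epsilon^2 e^{2\gamma t}/(1-\epsilon^2)$ suggests taking $\aleph^2=1/(1-\epsilon^2)$. The ratio then becomes $\sqrt{(1-\epsilon^2)/(1-\epsilon^2+\epsilon^2e^{2\gamma t})}$, which matches $f,g$ after replacing $\gamma$ by $\gamma_l$ or $\gamma_u$. Since this is monotone decreasing in $\gamma$ and $\gamma_l<\gamma<\gamma_u$, we obtain $g(t)\le\langle\cdot|\cdot\rangle\le f(t)$. At $t=0$ the overlap equals $\sqrt{1-\epsilon^2}\ge1-\epsilon^2$. (If the intended normalization instead has $\aleph$ absorbed differently, we adjust $\aleph$ accordingly; the mechanism is the same.)

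For the pressure, normalize $p_0$ as the constant with $\|p_0\|_{L^2}=1$. Orthogonality then gives $\langle P_0|\tilde P(0)\rangle=(1+\epsilon^2q^2\aleph^2)^{-1/2}=1-O(\epsilon^2)$. The hypothesis $\epsilon<1/q$ ensures that $\epsilon q\aleph$ is bounded, so the $O(\cdot)$ constant is uniform.

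The main obstacle is verifying that the physical (real) mode evolves by pure scaling and remains orthogonal to the equilibrium. This requires $\gamma$ to be real; \cref{thm:stability} should give this, since the recurrence has real coefficients after the factor $i k U_0$ is absorbed. If $\gamma$ turned out complex, the real part would oscillate, and we would instead bound $\|\tilde{\mathbf v}(t)\|$ between $e^{\gamma_l t}$ and $e^{\gamma_u t}$ times constants. A secondary point is checking that the ``unit'' normalization conventions for the encoded states match the choice of $\aleph$ above.
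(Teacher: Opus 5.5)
Your proposal is correct and is essentially the paper's proof. Both compare the equilibrium with the equilibrium plus the real unstable mode of \cref{thm:stability}, use that the mode is orthogonal to the shear flow because its $x$-average vanishes, use pure $e^{\gamma t}$ growth for real $\gamma$, and conclude from monotonicity of the explicit overlap in $\gamma\in(\gamma_l,\gamma_u)$. The only difference is normalization: the paper takes $\aleph=1$, shrinks the shear amplitude to $U_\epsilon=\sqrt{1-\epsilon^2}\,U_0$, and sets the pressure gauge to $\bar p=\sqrt{1-q^2\epsilon^2}/(2\pi)$, so that $\|\tilde{\mathbf u}(0)\|_{L^2}=\|\tilde P(0)\|_{L^2}=1$ exactly, as Problem~\ref{prob:final-state-euler-tensor-prod} requires downstream; that gauge choice is the only place $\epsilon<1/q$ is used. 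Your velocity field is the paper's multiplied by $1/\sqrt{1-\epsilon^2}$, which gives identical normalized overlaps by linearity. In your version $\epsilon<1/q$ is not needed at all, since $1-(1+\epsilon^2q^2\aleph^2)^{-1/2}=O(\epsilon^2)$ uniformly for $\epsilon\in(0,1)$.
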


\begin{proof}
We pick two states, the equilibrium $\mathbf{u}_0$ and the perturbed state $\tilde{\mathbf{u}}$, with the following profiles:
\begin{align}
    \mathbf{u}_0 &= \begin{bmatrix}
        U_0 \sin(my) \\
        0
    \end{bmatrix} \ \\
    \tilde{\mathbf{u}} &= \begin{bmatrix}
        U_{\epsilon} \sin(my) + \epsilon \tilde{v}_x \\
        \epsilon \tilde{v}_y
    \end{bmatrix} \eqcolon \mathbf{u}_{\epsilon} + \epsilon\tilde{\mathbf{v}}, 
\end{align}
    with $\tilde{\mathbf{v}}$ defined as in Eq.~\eqref{eq:stream-function-def}. From \cref{thm:stability}, we choose $\aleph = 1$, so $\|\tilde{\mathbf{v}}\|_{L^2}=1$. We let $U_0 = (2\pi^2)^{-1/2}$ to ensure $\|\mathbf u_0\|_{L^2} = 1$. 
    We determine $U_\epsilon$ using the normalization condition
\begin{align}
    1= \|\tilde{\mathbf{u}}\|_{L^2}^2(0) &= \int_{\mathbb T^2} (\mathbf{u}_\epsilon + \epsilon\tilde{\mathbf{v}})\cdot (\mathbf{u}_\epsilon + \epsilon\tilde{\mathbf{v}}) \, d^2 \mathbf r \\
    &= \|\mathbf{u}_\epsilon\|_{L^2}^2 + \epsilon^2\|\tilde{\mathbf{v}}\|_{L^2}^2 + 2\epsilon\int_{\mathbb T^2}  \mathbf{u}_\epsilon \cdot \tilde{\mathbf{v}} \, d^2 \mathbf{r} \\
    &= 2\pi^2U_\epsilon^2 + \epsilon^2 + 2\epsilon\int_{\mathbb T^2}  \mathbf{u}_\epsilon \cdot \tilde{\mathbf{v}} \, d^2 \mathbf{r}.
\end{align}
Evaluating the integral gives
\begin{align}
    \int_{\mathbb T^2} \mathbf{u}_\epsilon \cdot \tilde{\mathbf{v}} \, d^2 \mathbf{r} &= -\sum_{j \in \mathbb Z} U_\epsilon \int_{\mathbb T^2} ijc_j \sin(my) e^{\gamma t + ikx + ijy} \, d^2 \mathbf r \\
    &= 0
\end{align}
using Eq.~\eqref{eq:stream-function-def}. Therefore
\begin{equation}
    U_\epsilon = \sqrt{\frac{1-\epsilon^2}{2\pi^2}}.
\end{equation}
Furthermore, we have
\begin{align}
    \langle \mathbf u_0(0)|\tilde{\mathbf u}(0)\rangle &= \int_{\mathbb T^2} \mathbf{u}_0 \cdot \tilde{\mathbf{u}}(0) \, d^2 \mathbf r \\
    &= \sqrt{1-\epsilon^2} \\
    &\geq 1-\epsilon^2.
\end{align}
For the pressure, we construct $P_0$ for the equilibrium and $\tilde P$ for the perturbation. Since the equilibrium requires constant pressure and we want $\|P_0\|_{L^2} = 1$, we set
\begin{equation}
    P_0 =  \frac{1}{2\pi}.
\end{equation}
For the perturbed pressure, we decompose it into the gauge $\bar p$ and the zero-mean part:
\begin{equation}
    \tilde P = \bar p + \epsilon\tilde p.
\end{equation}
From \cref{thm:stability}, if $\aleph = 1$, $\|\tilde p\|_{L^2} = q$. We then calculate the gauge as follows:
\begin{align}
    1 = \|\tilde P\|^2_{L^2} &= \int_{\mathbb T^2} \bar p^2 + 2\epsilon \bar p \tilde p + \epsilon^2 \tilde p^2 \, d^2 \mathbf r \\
    &= (2\pi)^2 \bar p + q^2\epsilon^2 .
\end{align}
So
\begin{equation}
    \bar p = \frac{\sqrt{1-q^2\epsilon^2}}{2\pi},
\end{equation}
which is valid for $0<\epsilon< \min\left \{1, \frac{1}{q(1)}\right \}$. It is then easy to verify that
\begin{equation}
    \langle P_0(0)|\tilde P(0)\rangle = 1-O(\epsilon^2).
\end{equation}

Now we give bounds on
\begin{equation}
    \langle \mathbf{u}_0(t)|\tilde{\mathbf{u}}(t)\rangle = \frac{1}{\|\tilde{\mathbf{u}}\|_{L^2}(t)}\int_{\mathbb T^2} \mathbf u_0\cdot \tilde{\mathbf{u}}(t) \, d^2 \mathbf r.
\end{equation}
Note that
\begin{align}
    \|\tilde{\mathbf{u}}\|_{L^2}(t) &= \sqrt{2\pi^2U_\epsilon^2 + \epsilon^2 e^{2\gamma t}} \\
    &= \sqrt{1+\epsilon^2(e^{2\gamma t}-1)}.
\end{align}
The following bounds apply:
\begin{equation}
   \sqrt{1+\epsilon^2(e^{2\gamma_l t}-1)} \leq \|\tilde{\mathbf{u}}\|_{L^2}(t) \leq \sqrt{1+\epsilon^2(e^{2\gamma_u t}-1)}.
\end{equation}
Computing the integral gives
\begin{align}
    \langle \mathbf{u}_0|\tilde{\mathbf{u}}\rangle &= \frac{1}{\sqrt{1+\epsilon^2(e^{2\gamma t}-1)}} \int_{\mathbb T^2} \mathbf{u}_0\cdot\tilde{\mathbf{u}}(t) \, d^2 \mathbf r \\
    &= \sqrt{\frac{1-\epsilon^2}{1+\epsilon^2(e^{2\gamma t}-1)}}.
\end{align}
Using \cref{thm:stability}, we find the following bounds for the inner product:
\begin{equation}
    \sqrt{\frac{1-\epsilon^2}{1+\epsilon^2(e^{2\gamma_u t}-1)}} \leq \langle \mathbf{u}_0|\tilde{\mathbf{u}}\rangle \leq \sqrt{\frac{1-\epsilon^2}{1+\epsilon^2(e^{2\gamma_l t}-1)}}.
\end{equation}
This completes the proof.
\end{proof}

\begin{corr}
    Simulating the evolution of the linearized Euler equations, Eq.~\eqref{eq:lin-euler-total-vel}, for time $T$ requires $\exp(\Omega(T))$ copies of the initial state.
    \label{cor:lin-evol}
\end{corr}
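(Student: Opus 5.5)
The plan is to apply \Cref{lem:state-dist-comp} to the pair of solutions built in \cref{thm:lin-diverges}: the equilibrium $\mathbf u_0, P_0$ and the perturbed solution $\tilde{\mathbf u}, \tilde P$. Since Eq.~\eqref{eq:lin-euler-total-vel} is linear and norm-changing, the map taking the encoded initial state to the encoded final state (tensor-product encoding of \Cref{prob:final-state-euler-tensor-prod}, normalized) sends pure states to pure states. We therefore only need the initial distance to be small and the final distance to be constant.

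\textbf{Initial distance.} \cref{thm:lin-diverges} gives $\langle \mathbf u_0(0)|\tilde{\mathbf u}(0)\rangle \ge 1-\epsilon^2$ and $\langle P_0(0)|\tilde P(0)\rangle = 1-O(\epsilon^2)$. Hence the product state $|\mathbf u(0)\rangle\otimes|P(0)\rangle$ has overlap $1-O(\epsilon^2)$ between the two solutions, and $d = \sqrt{2-2|\langle\cdot|\cdot\rangle|} = O(\epsilon)$. The same bound holds if we account separately for the $k_{\mathbf u}$ and $k_P$ copies, using \Cref{fact:tr-prod-rel} on each register. By \Cref{fact:euc-tr-rel}, the trace distance is $\epsilon_0 = O(\epsilon)$.

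\textbf{Final distance.} Choose $T = \frac{1}{\gamma_l}\log(1/\epsilon)$, so that $\epsilon^2 e^{2\gamma_l T}=1$. The upper bound in \cref{thm:lin-diverges} then gives
\[
\langle \mathbf u_0(T)|\tilde{\mathbf u}(T)\rangle \le f(T) = \sqrt{\frac{1-\epsilon^2}{2-\epsilon^2}} \le \frac{1}{\sqrt 2}+o(1).
\]
The pressure overlap is at most $1$ in absolute value, so the product overlap is also at most about $1/\sqrt2$, bounded away from $1$. We also need the absolute value of the overlap to be controlled. Here $\langle \mathbf u_0|\tilde{\mathbf u}\rangle$ is real and positive by the explicit formula in the proof of \cref{thm:lin-diverges}, since the cross term $\int \mathbf u_0\cdot\tilde{\mathbf v}$ vanishes. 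Hence $d = \Omega(1)$, and by \Cref{fact:euc-tr-rel} we get $\epsilon_f = \Omega(1)$.

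\textbf{Conclusion and main obstacle.} \Cref{lem:state-dist-comp} with a constant precision $\delta<\epsilon_f/2$ gives $k \ge ((\epsilon_f-2\delta)/\epsilon_0)^2 = \Omega(\epsilon^{-2})$. Since $\gamma_l>0$ is a constant (fixed $m,k$ with $k<k_{\mathrm{cutoff}}$ and $U_0$ fixed), $T=\Theta(\log(1/\epsilon))$ gives $\epsilon^{-2} = e^{2\gamma_l T}= e^{\Omega(T)}$. This is also the pattern of \Cref{corr:state-disc-expon}. For a given $T$ we simply set $\epsilon=e^{-\gamma_l T}$, which satisfies $\epsilon<\min\{1,1/q\}$ for $T$ large.

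The only delicate point I expect is interpretive: the pressure register does not separate, since $\tilde P$ is only a small perturbation. One must check that the product overlap is governed by the velocity factor and that the copy count (either $k_{\mathbf u}$, or $k=\max(k_{\mathbf u},k_P)$) is what the lemma bounds. I would handle this by treating the input as $k$ copies of the product state with overlap $1-O(\epsilon^2)$, which covers both conventions.
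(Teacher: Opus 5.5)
Your argument is correct. It is essentially the ``cleaner proof'' that the paper only mentions in the remark after this corollary.

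The written proof in the paper evaluates the bound of \cref{thm:lin-diverges} at $t=\gamma_u^{-1}\log(1/\epsilon)$ instead. At that time the upper bound $f(t)$ only gives $\langle \mathbf u_0|\tilde{\mathbf u}\rangle \le 1-\Omega(\epsilon^{2(1-\gamma_l/\gamma_u)})$, which is a polynomially small separation. The conclusion then goes through \Cref{corr:state-disc-expon}, which is stated for exact simulation and in effect tolerates only error $e^{-CT}$.

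By choosing $T=\gamma_l^{-1}\log(1/\epsilon)$ you get a constant final distance. You can then apply \Cref{lem:state-dist-comp} directly, so your bound holds at constant precision $\delta$. Both routes give the same rate $k=\Omega(e^{2\gamma_l T})$, since $\epsilon^{-2\gamma_l/\gamma_u}=e^{2\gamma_l T}$ when $T=\gamma_u^{-1}\log(1/\epsilon)$.

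What the paper's choice buys is reuse in the nonlinear argument. \cref{thm:error_bound_value} controls the linearization error only for $t\le\gamma_u^{-1}\log(1/\epsilon)$. Your time $\gamma_l^{-1}\log(1/\epsilon)$ lies beyond that window because $\gamma_l<\gamma_u$, so your version does not carry over directly to \cref{thm:euler-lower-bound}.

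For the linearized statement itself, your other points are also sound: the well-definedness of the normalized linear map on rays, the pressure register, and the $k_{\mathbf u}$ versus $k_P$ bookkeeping.
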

\begin{proof}
Setting $t = \frac{1}{\gamma_u}\log(1/\epsilon)$ gives
\begin{align}
    \langle \mathbf{u}_0|\tilde{\mathbf{u}}\rangle &\leq \sqrt{\frac{1-\epsilon^2}{1+\epsilon^2(e^{2\gamma_l t}-1)}} \\
    &=\sqrt{\frac{1-\epsilon^2}{1-\epsilon^2 + \epsilon^{2\left (1 - \frac{\gamma_l}{\gamma_u} \right )}}} \\
    &= \left (1 + \frac{\epsilon^{2\left (1 - \frac{\gamma_l}{\gamma_u} \right )}}{1-\epsilon^2} \right )^{-1/2} \\
    &\leq 1- \frac{1}{2} \frac{\epsilon^{2\left (1 - \frac{\gamma_l}{\gamma_u} \right )}}{1-\epsilon^2} + \frac{3}{8}\frac{\epsilon^{4\left (1 - \frac{\gamma_l}{\gamma_u} \right )}}{(1-\epsilon^2)^2} \\
    &= 1-O\left ( \epsilon^{2 \left ( 1 - \frac{\gamma_l}{\gamma_u} \right )} \right ).
\end{align}
Since $0<\gamma_l/\gamma_u <1 $ from \cref{thm:stability}, Corollary~\ref{corr:state-disc-expon} shows that $e^{\Omega(t)}$ copies are needed.
\end{proof}
\begin{remark}
    A cleaner proof can be obtained by setting $t = \gamma_l^{-1}\log(1/\epsilon)$ and taking the limit $\epsilon\rightarrow 0$. However, the above proof maintains consistency with later sections, where we constrain the time interval to $t\in [0,\gamma^{-1}_u \log(1/\epsilon)]$.
\end{remark}

\subsection{Bounding the error from linearization}
In the previous section we demonstrated that the linearized solution deviates from the equilibrium exponentially quickly. However, this does not tell us whether the nonlinear solution behaves similarly. In this section, we establish a bound on the error from linearization, so that we can probe the behavior of the nonlinear solution. First, we derive a PDE for the time evolution of the error.

\begin{lem} \label{lem:error-eqn}
    Consider the incompressible Euler equations in Eq.~(\ref{eq:incomp-euler}). Suppose we have an equilibrium given by $\mathbf{u}_0(\mathbf r)$ and $p_0$ satisfying
    \begin{gather}
        (\mathbf{u}_0 \cdot \nabla)\mathbf{u}_0 = - \nabla p_0.
    \end{gather}
     Let $\mathbf{u}$ and $P$ be the fully nonlinear solutions satisfying Eq.~\eqref{eq:incomp-euler} and $\tilde{\mathbf{u}}=\mathbf{u}_0 + \epsilon \tilde{\mathbf{v}}$ and $\tilde P = p_0 + \epsilon\tilde p$ be the linearized solutions satisfying Eq.~\eqref{eq:lin-euler-total-vel}. Define the velocity error vector field as $\boldsymbol{\eta} \coloneq \mathbf{u} - \tilde{\mathbf u}$ and the pressure error as $\eta_p \coloneq P - \tilde P$. Then $\boldsymbol{\eta}$ satisfies the following PDE:
    \begin{gather}
         \frac{\partial \boldsymbol{\eta}}{\partial t} = -\nabla \left(\tilde{\mathbf {u}} \cdot \boldsymbol{\eta} + \frac{1}{2} \boldsymbol{\eta}\cdot \boldsymbol{\eta} + \eta_p \right) + \boldsymbol{\eta} \times (\nabla \times \tilde{\mathbf{u}} ) + \tilde{\mathbf{u}}\times(\nabla \times \boldsymbol{\eta}) + \boldsymbol{\eta}\times (\nabla \times \boldsymbol{\eta}) - \epsilon^2(\tilde{\mathbf{v}}\cdot \nabla) \tilde{\mathbf{v}} \\
        \nabla \cdot \boldsymbol{\eta} = 0.
    \end{gather}
\end{lem}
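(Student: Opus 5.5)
The plan is to subtract the linearized system, Eq.~\eqref{eq:lin-euler-total-vel}, from the nonlinear system, Eq.~\eqref{eq:incomp-euler} with $\rho=1$, and then rewrite every advective term as a gradient plus cross products.

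\textbf{Divergence condition.} This part is immediate. Both $\mathbf u$ and $\tilde{\mathbf u}$ are divergence-free, so $\nabla\cdot\boldsymbol\eta=\nabla\cdot\mathbf u-\nabla\cdot\tilde{\mathbf u}=0$.

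\textbf{Subtracting the momentum equations.} The difference of the two momentum equations is
\begin{equation*}
\frac{\partial\boldsymbol\eta}{\partial t}+(\mathbf u\cdot\nabla)\mathbf u-(\mathbf u_0\cdot\nabla)\tilde{\mathbf u}-(\tilde{\mathbf u}\cdot\nabla)\mathbf u_0=-\nabla\eta_p .
\end{equation*}
I will expand with $\tilde{\mathbf u}=\mathbf u_0+\epsilon\tilde{\mathbf v}$. This gives the algebraic identity
\begin{equation*}
(\mathbf u_0\cdot\nabla)\tilde{\mathbf u}+(\tilde{\mathbf u}\cdot\nabla)\mathbf u_0=(\tilde{\mathbf u}\cdot\nabla)\tilde{\mathbf u}+(\mathbf u_0\cdot\nabla)\mathbf u_0-\epsilon^2(\tilde{\mathbf v}\cdot\nabla)\tilde{\mathbf v}.
\end{equation*}
Substituting $\mathbf u=\tilde{\mathbf u}+\boldsymbol\eta$ in the nonlinear term gives
\begin{equation*}
(\mathbf u\cdot\nabla)\mathbf u-(\tilde{\mathbf u}\cdot\nabla)\tilde{\mathbf u}=(\tilde{\mathbf u}\cdot\nabla)\boldsymbol\eta+(\boldsymbol\eta\cdot\nabla)\tilde{\mathbf u}+(\boldsymbol\eta\cdot\nabla)\boldsymbol\eta .
\end{equation*}
Putting these together yields
\begin{equation*}
\frac{\partial\boldsymbol\eta}{\partial t}=-\nabla\eta_p-(\tilde{\mathbf u}\cdot\nabla)\boldsymbol\eta-(\boldsymbol\eta\cdot\nabla)\tilde{\mathbf u}-(\boldsymbol\eta\cdot\nabla)\boldsymbol\eta+(\mathbf u_0\cdot\nabla)\mathbf u_0-\epsilon^2(\tilde{\mathbf v}\cdot\nabla)\tilde{\mathbf v}.
\end{equation*}

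\textbf{Handling the leftover $(\mathbf u_0\cdot\nabla)\mathbf u_0$.} This is the step I expect to need care, since this term is absent from the claimed equation and must vanish. I will argue via consistency of Eq.~\eqref{eq:lin-euler-total-vel}: the pair $(\tilde{\mathbf u},\tilde P)=(\mathbf u_0,p_0)$, i.e.\ $\epsilon=0$, must solve it as the unperturbed equilibrium. That requires $2(\mathbf u_0\cdot\nabla)\mathbf u_0=-\nabla p_0$. Together with the equilibrium hypothesis $(\mathbf u_0\cdot\nabla)\mathbf u_0=-\nabla p_0$, this forces $(\mathbf u_0\cdot\nabla)\mathbf u_0=0$ and $\nabla p_0=0$. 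As a sanity check, this holds for the shear flow of Eq.~\eqref{eq:equil}: $\mathbf u_0$ depends only on $y$ and has no $y$-component, and $p_0$ is constant. If one prefers not to rely on that consistency argument, the term can instead be absorbed into the gradient as $-\nabla p_0$ by redefining the pressure error, and the resulting equation is the same for the constant-$p_0$ equilibrium used here.

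\textbf{Vector identities.} I will apply
\begin{equation*}
(\mathbf a\cdot\nabla)\mathbf b+(\mathbf b\cdot\nabla)\mathbf a=\nabla(\mathbf a\cdot\mathbf b)-\mathbf a\times(\nabla\times\mathbf b)-\mathbf b\times(\nabla\times\mathbf a)
\end{equation*}
with $\mathbf a=\tilde{\mathbf u}$ and $\mathbf b=\boldsymbol\eta$, and
\begin{equation*}
(\boldsymbol\eta\cdot\nabla)\boldsymbol\eta=\tfrac12\nabla(\boldsymbol\eta\cdot\boldsymbol\eta)-\boldsymbol\eta\times(\nabla\times\boldsymbol\eta),
\end{equation*}
which is the same identity already used in the proof of Lemma~\ref{lem:euler-norm-cons}. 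Collecting the three gradient terms into $-\nabla\bigl(\tilde{\mathbf u}\cdot\boldsymbol\eta+\tfrac12\boldsymbol\eta\cdot\boldsymbol\eta+\eta_p\bigr)$ and keeping the three cross-product terms together with $-\epsilon^2(\tilde{\mathbf v}\cdot\nabla)\tilde{\mathbf v}$ gives exactly the stated PDE.
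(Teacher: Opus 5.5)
Your proof is correct, but it handles the zeroth-order term differently from the paper. The paper never subtracts Eq.~\eqref{eq:lin-euler-total-vel}. It writes $\mathbf u=\mathbf u_0+\epsilon\mathbf v$, $P=p_0+\epsilon p$, and uses the equilibrium relation once to cancel $(\mathbf u_0\cdot\nabla)\mathbf u_0+\nabla p_0$ in the nonlinear equation. It then subtracts $\epsilon$ times the perturbation equation Eq.~\eqref{eq:lin-euler} for $\tilde{\mathbf v}$. Only after that does it substitute $\mathbf v=\boldsymbol\eta/\epsilon+\tilde{\mathbf v}$ and apply the same vector identities you use. On that route no leftover appears, and the error equation holds for any steady equilibrium.

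Your route takes the statement literally, and in doing so it exposes that Eq.~\eqref{eq:lin-euler-total-vel} double-counts the base advection term. It is therefore compatible with the equilibrium relation only if $(\mathbf u_0\cdot\nabla)\mathbf u_0=0=\nabla p_0$. Your consistency argument is sound and can be stated more sharply. Eq.~\eqref{eq:lin-euler-total-vel} and Eq.~\eqref{eq:lin-euler} are the same homogeneous linear equation. So once $(\tilde{\mathbf v},\tilde p)$ solves Eq.~\eqref{eq:lin-euler}, as it does by construction in \cref{thm:stability}, the pair $(\mathbf u_0+\epsilon\tilde{\mathbf v},\,p_0+\epsilon\tilde p)$ solves Eq.~\eqref{eq:lin-euler-total-vel} if and only if $(\mathbf u_0,p_0)$ does.

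That restriction is genuinely needed rather than merely convenient. Suppose only Eq.~\eqref{eq:lin-euler-total-vel} is assumed, and take the cellular flow $\mathbf u_0=(\sin x\cos y,\,-\cos x\sin y)$ with $p_0=\tfrac14(\cos 2x+\cos 2y)$. Then $\mathbf u=\tilde{\mathbf u}\equiv\mathbf u_0$ with $P=p_0$ and $\tilde P=2p_0$. So $\boldsymbol\eta=0$, while the claimed right-hand side equals $\nabla p_0\neq 0$.

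Since $\eta_p=P-\tilde P$ is fixed by definition up to a function of $t$, your fallback of ``absorbing'' $-\nabla p_0$ amounts to modifying the statement. It agrees with the claim exactly in the constant-$p_0$ case of Eq.~\eqref{eq:equil}, which is all the paper uses later.
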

\begin{proof}
Let $\mathbf{u} = \mathbf{u}_0 + \epsilon\mathbf{v}$ and $P = p_0 + \epsilon p$ be the nonlinear solutions and let $\tilde{\mathbf{u}} = \mathbf{u}_0 + \epsilon \tilde{\mathbf{v}}$ and $\tilde P = p_0 +\tilde p$ be the linearized solutions. Plugging this into Eq.~(\ref{eq:incomp-euler}), we obtain
\begin{gather}
    \frac{\partial \mathbf{v}}{\partial t} = -(\mathbf{u}_0 \cdot \nabla) \mathbf{v} - (\mathbf{v}\cdot \nabla)\mathbf{u}_0 - \epsilon(\mathbf{v}\cdot \nabla)\mathbf{v} -  \nabla p \\
    \nabla \cdot \mathbf{v} = 0,
\end{gather}
and pressure is determined by the Laplace equation
\begin{equation}
    \nabla^2 p = -\nabla \cdot \left[ (\mathbf u_0 \cdot \nabla)\mathbf v + (\mathbf v \cdot \nabla ) \mathbf u_0 + \epsilon (\mathbf v \cdot \nabla)\mathbf v \right ].
\end{equation}
We approximate this by the linear PDE
\begin{gather}
    \frac{\partial \tilde{\mathbf{v}}}{\partial t} = -(\mathbf{u}_0 \cdot \nabla) \tilde{\mathbf{v}} - (\tilde{\mathbf{v}}\cdot \nabla)\mathbf{u}_0 -  \nabla \tilde{p}  \\
    \nabla \cdot \tilde{\mathbf{v}} = 0,
\end{gather}
which is precisely Eq.~(\ref{eq:lin-euler}), with pressure being determined by the following Laplace equation
\begin{equation}
    \nabla^2 \tilde p  = -\nabla \cdot \left[ (\mathbf u_0 \cdot \nabla)\mathbf v + (\mathbf v \cdot \nabla ) \mathbf u_0  \right ].
\end{equation}
Letting $\boldsymbol{\eta}(\mathbf{r},t) \coloneq \mathbf{u}(\mathbf{r},t) - \tilde{\mathbf{u}}(\mathbf{r},t)$ be the error vector, we see that
\begin{equation}
    \boldsymbol{\eta} = \epsilon\mathbf{v} - \epsilon\tilde{\mathbf{v}}.
\end{equation}
The equation for the time evolution of the error, then, is
\begin{gather}
    \frac{\partial \boldsymbol{\eta}}{\partial t} = -(\mathbf{u}_0 \cdot \nabla)\boldsymbol{\eta} - (\boldsymbol{\eta} \cdot \nabla) \mathbf{u}_0 - \epsilon^2(\mathbf{v}\cdot \nabla) \mathbf{v} - \nabla \eta_p \\
    \nabla \cdot \boldsymbol{\eta} = 0.
\end{gather}
with $\eta_p = P-\tilde P$ satisfying
\begin{equation}
    \nabla^2 \eta_p = -\epsilon^2 \nabla \cdot \left [(\mathbf v \cdot \nabla ) \mathbf v \right].
\end{equation}
The initial condition for the PDE is $\boldsymbol{\eta}(\mathbf{r},0) = 0$ and $\eta_p(\mathbf r,0) = 0$. Using the fact that $\mathbf v = \frac{1}{\epsilon}\boldsymbol\eta + \tilde{ \mathbf{v}}$, we obtain
\begin{gather}
    \frac{\partial \boldsymbol{\eta}}{\partial t} = -(\tilde{\mathbf{u}} \cdot \nabla)\boldsymbol{\eta} - (\boldsymbol{\eta} \cdot \nabla) \tilde{\mathbf{u}} - (\boldsymbol{\eta}\cdot \nabla) \boldsymbol{\eta} -  \epsilon^2(\tilde{\mathbf{v}}\cdot \nabla) \tilde{\mathbf{v}} - \nabla \eta_p \\
    \nabla \cdot \boldsymbol{\eta} = 0.
\end{gather}
The first two terms on the right-hand side are linear, the second term is nonlinear, and the final term is a forcing term. Using the vector calculus identity
\begin{equation}
    \nabla(\mathbf{A}\cdot \mathbf{B}) = (\mathbf{A}\cdot \nabla ) \mathbf B + (\mathbf B \cdot \nabla ) \mathbf A + \mathbf{A}\times (\nabla \times \mathbf B) + \mathbf{B}\times (\nabla \times \mathbf A),
\end{equation}
we can rewrite the PDE as
\begin{gather}
    \frac{\partial \boldsymbol{\eta}}{\partial t} = -\nabla \left(\tilde{\mathbf {u}} \cdot \boldsymbol{\eta} + \frac{1}{2} \boldsymbol{\eta}\cdot \boldsymbol{\eta} + \eta_p \right) + \boldsymbol{\eta} \times (\nabla \times \tilde{\mathbf{u}} ) + \tilde{\mathbf{u}}\times(\nabla \times \boldsymbol{\eta}) + \boldsymbol{\eta}\times (\nabla \times \boldsymbol{\eta}) - \epsilon^2(\tilde{\mathbf{v}}\cdot \nabla) \tilde{\mathbf{v}} \\
    \nabla \cdot \boldsymbol{\eta} = 0,
\end{gather}
which completes the proof.
\end{proof}

From \cref{lem:error-eqn}, we can prove the following. Let $\hat{\mathbf k}$ denote the unit vector normal to $\mathbb T^2$.

\begin{restatable}{lem}{errorbound} \label{lem:error_bound}
Let $\boldsymbol{\eta}$ be defined as in \cref{lem:error-eqn}. Then
    \begin{equation}
        \frac{d\|\boldsymbol{\eta}\|_{L^2}^2}{dt} \leq 2\|\boldsymbol{\eta}\|_{L^2} \left [ \|\tilde{\mathbf{u}} \|_{L^\infty} \|\omega_{\boldsymbol{\eta}}\|_{L^2} + \epsilon^2 \|(\tilde{\mathbf{v}}\cdot \nabla)\tilde{\mathbf{v}}\|_{L^2}  \right ],
    \end{equation}
    where
    \begin{equation}
        \omega_{\boldsymbol{\eta}} \coloneq (\nabla \times \boldsymbol{\eta})\cdot \hat{\mathbf{k}}
    \end{equation}
    is the vorticity of the error.
\end{restatable}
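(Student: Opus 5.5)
We take the energy-type estimate suggested by the form of the PDE in \cref{lem:error-eqn}. We compute $\frac{d}{dt}\|\boldsymbol\eta\|_{L^2}^2 = 2\int_{\mathbb T^2}\boldsymbol\eta\cdot\partial_t\boldsymbol\eta\, d^2\mathbf r$ and substitute the right-hand side of the error equation term by term. The goal is to show that every term vanishes except two. The survivors are the one coupling $\tilde{\mathbf u}$ to the vorticity of $\boldsymbol\eta$, and the forcing term $-\epsilon^2(\tilde{\mathbf v}\cdot\nabla)\tilde{\mathbf v}$. Each survivor is then controlled by Hölder's inequality.

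The cancellations go as follows.

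\emph{Gradient term.} For $\nabla(\tilde{\mathbf u}\cdot\boldsymbol\eta + \tfrac12\boldsymbol\eta\cdot\boldsymbol\eta + \eta_p)$, write $\boldsymbol\eta\cdot\nabla F = \nabla\cdot(F\boldsymbol\eta) - F\,\nabla\cdot\boldsymbol\eta$. Since $\nabla\cdot\boldsymbol\eta=0$ and $\mathbb T^2$ has no boundary, the divergence theorem kills it, exactly as in \cref{lem:euler-norm-cons}.

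\emph{Cross-product terms.} The terms $\boldsymbol\eta\times(\nabla\times\tilde{\mathbf u})$ and $\boldsymbol\eta\times(\nabla\times\boldsymbol\eta)$ are pointwise orthogonal to $\boldsymbol\eta$, so their contribution is identically zero.

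What remains is
\begin{equation}
\frac{d\|\boldsymbol\eta\|_{L^2}^2}{dt} = 2\int_{\mathbb T^2}\boldsymbol\eta\cdot\big(\tilde{\mathbf u}\times(\nabla\times\boldsymbol\eta)\big)\,d^2\mathbf r - 2\epsilon^2\int_{\mathbb T^2}\boldsymbol\eta\cdot(\tilde{\mathbf v}\cdot\nabla)\tilde{\mathbf v}\,d^2\mathbf r.
\end{equation}
The second integral is bounded by Cauchy--Schwarz (Hölder with $p=q=2$) by $\epsilon^2\|\boldsymbol\eta\|_{L^2}\|(\tilde{\mathbf v}\cdot\nabla)\tilde{\mathbf v}\|_{L^2}$.

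For the first integral we use two-dimensionality. Both $\boldsymbol\eta$ and $\tilde{\mathbf u}$ lie in the plane, so $\nabla\times\boldsymbol\eta=\omega_{\boldsymbol\eta}\hat{\mathbf k}$. Then $\tilde{\mathbf u}\times(\nabla\times\boldsymbol\eta) = \omega_{\boldsymbol\eta}\,\tilde{\mathbf u}\times\hat{\mathbf k}$, whose pointwise magnitude is $|\omega_{\boldsymbol\eta}|\,\|\tilde{\mathbf u}\|$. Hence the integrand is bounded pointwise by $\|\boldsymbol\eta\|\,\|\tilde{\mathbf u}\|\,|\omega_{\boldsymbol\eta}|$. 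Pulling out $\|\tilde{\mathbf u}\|_{L^\infty}$ and applying Cauchy--Schwarz to $\int\|\boldsymbol\eta\|\,|\omega_{\boldsymbol\eta}|$ gives $\|\tilde{\mathbf u}\|_{L^\infty}\|\boldsymbol\eta\|_{L^2}\|\omega_{\boldsymbol\eta}\|_{L^2}$. Combining the two bounds and factoring out $2\|\boldsymbol\eta\|_{L^2}$ yields the claimed inequality.

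The main point needing care is justifying the manipulations. Integration by parts and the product identities require enough smoothness of $\mathbf u$, $P$, $\tilde{\mathbf u}$, $\eta_p$ on $\mathbb T^2$. For $\tilde{\mathbf u}$ this comes from the analyticity in \cref{thm:stability}, and we assume the nonlinear solution is smooth, as elsewhere in the paper.

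One should also check that no term was mis-signed when rewriting the error equation into the gradient-plus-cross-product form of \cref{lem:error-eqn}. In particular, the term $(\boldsymbol\eta\cdot\nabla)\tilde{\mathbf u}$ is not itself skew. It is handled only because it was absorbed into $\nabla(\tilde{\mathbf u}\cdot\boldsymbol\eta)$ together with the cross-product terms, so we rely on that form exactly as stated rather than on the advective form.
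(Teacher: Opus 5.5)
Your proposal is correct and follows the paper's proof essentially step for step: you remove the gradient term by the divergence theorem using $\nabla\cdot\boldsymbol\eta=0$, drop $\boldsymbol\eta\times(\nabla\times\tilde{\mathbf u})$ and $\boldsymbol\eta\times(\nabla\times\boldsymbol\eta)$ by pointwise orthogonality, and bound the two surviving terms pointwise and then by H\"older, pulling out $\|\tilde{\mathbf u}\|_{L^\infty}$. The only cosmetic difference is that the paper uses the inequality $\|\tilde{\mathbf u}\times(\nabla\times\boldsymbol\eta)\|\le\|\tilde{\mathbf u}\|\,\|\nabla\times\boldsymbol\eta\|$ where you use the exact two-dimensional identity, and both give the same bound.
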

\noindent Next we give a differential inequality for $\|\omega_{\boldsymbol{\eta}}\|_{L^2}$.
\begin{restatable}{lem}{vorticity} \label{lem:vort_error_bound}
    Let $\omega_{\boldsymbol{\eta}}$ be defined as in \cref{lem:error_bound} and $\boldsymbol{\eta}$ defined as in \cref{lem:error-eqn}. Then
    \begin{equation}
         \frac{d \|\omega_{\boldsymbol{\eta}}\|_{L^2}^2}{dt} \leq 2\|\omega_{\boldsymbol{\eta}}\|_{L^2} \left [ \|\nabla \omega_{\tilde{\mathbf{u}}}\|_{L^\infty}\|\boldsymbol{\eta}\|_{L^2} + \epsilon^2 \|\tilde{\mathbf{v}}\cdot \nabla \omega_{\tilde{\mathbf{v}}}\|_{L^2}   \right],
    \end{equation}
    where
    \begin{align}
        \omega_{\tilde{\mathbf u}} &\coloneq (\nabla \times \tilde{\mathbf{u}})\cdot \hat{\mathbf k} \\
        \omega_{\tilde{\mathbf v}} &\coloneq (\nabla \times \tilde{\mathbf{v}})\cdot \hat{\mathbf k}.
    \end{align}
\end{restatable}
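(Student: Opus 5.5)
The plan is to derive a scalar transport equation for $\omega_{\boldsymbol\eta}$ whose right-hand side contains only the two source terms in the claim, and then run an $L^2$ energy estimate exactly as in the proof of \cref{lem:cons-vorticity}. Rather than taking the curl of the PDE in \cref{lem:error-eqn} term by term, it is simpler to work with full vorticities.

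\textbf{Step 1 (nonlinear vorticity).} By the proof of \cref{lem:cons-vorticity}, the nonlinear solution $\mathbf u = \tilde{\mathbf u} + \boldsymbol\eta$ has vorticity $\omega_{\mathbf u} = \omega_{\tilde{\mathbf u}} + \omega_{\boldsymbol\eta}$, which satisfies
\begin{equation*}
\partial_t \omega_{\mathbf u} + \mathbf u\cdot\nabla\omega_{\mathbf u} = 0.
\end{equation*}

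\textbf{Step 2 (linearized vorticity).} Take the curl of Eq.~\eqref{eq:lin-euler-total-vel}, using that the flow is two-dimensional so all vortex-stretching terms vanish. The result is
\begin{equation*}
\partial_t\omega_{\tilde{\mathbf u}} + \mathbf u_0\cdot\nabla\omega_{\tilde{\mathbf u}} + \epsilon\,\tilde{\mathbf v}\cdot\nabla\omega_0 = 0, \qquad \omega_0 \coloneq \omega_{\mathbf u_0}.
\end{equation*}
The equilibrium satisfies $\mathbf u_0\cdot\nabla\omega_0 = 0$, since $\mathbf u_0$ is horizontal and $\omega_0$ depends only on $y$. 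Adding and subtracting $\epsilon^2\,\tilde{\mathbf v}\cdot\nabla\omega_{\tilde{\mathbf v}}$ then rewrites this as
\begin{equation*}
\partial_t\omega_{\tilde{\mathbf u}} + \tilde{\mathbf u}\cdot\nabla\omega_{\tilde{\mathbf u}} = \epsilon^2\,\tilde{\mathbf v}\cdot\nabla\omega_{\tilde{\mathbf v}}.
\end{equation*}

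\textbf{Step 3 (error vorticity).} Subtract the equation of Step 2 from that of Step 1. The convective difference expands as
\begin{equation*}
\mathbf u\cdot\nabla\omega_{\mathbf u} - \tilde{\mathbf u}\cdot\nabla\omega_{\tilde{\mathbf u}} = \mathbf u\cdot\nabla\omega_{\boldsymbol\eta} + \boldsymbol\eta\cdot\nabla\omega_{\tilde{\mathbf u}},
\end{equation*}
which gives
\begin{equation*}
\partial_t\omega_{\boldsymbol\eta} + \mathbf u\cdot\nabla\omega_{\boldsymbol\eta} = -\boldsymbol\eta\cdot\nabla\omega_{\tilde{\mathbf u}} - \epsilon^2\,\tilde{\mathbf v}\cdot\nabla\omega_{\tilde{\mathbf v}}.
\end{equation*}
As a consistency check, this should agree with the curl of the PDE in \cref{lem:error-eqn}. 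There the gradient term is annihilated by the curl, and the three cross-product terms combine in 2D into the advection terms above.

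\textbf{Step 4 (energy estimate).} Multiply the equation of Step 3 by $2\omega_{\boldsymbol\eta}$ and integrate over $\mathbb T^2$. The transport term becomes $\int \mathbf u\cdot\nabla(\omega_{\boldsymbol\eta}^2)$. Writing it as $\int\nabla\cdot(\omega_{\boldsymbol\eta}^2\mathbf u) - \int\omega_{\boldsymbol\eta}^2\,\nabla\cdot\mathbf u$, it vanishes by the divergence theorem on the boundaryless torus together with incompressibility, exactly as in \cref{lem:cons-vorticity}. The remaining two terms are bounded with Hölder's inequality. For the first,
\begin{equation*}
\Bigl|\int\omega_{\boldsymbol\eta}\,\boldsymbol\eta\cdot\nabla\omega_{\tilde{\mathbf u}}\Bigr| \le \|\nabla\omega_{\tilde{\mathbf u}}\|_{L^\infty}\,\|\omega_{\boldsymbol\eta}\|_{L^2}\,\|\boldsymbol\eta\|_{L^2},
\end{equation*}
using $|\boldsymbol\eta\cdot\nabla\omega_{\tilde{\mathbf u}}| \le \|\boldsymbol\eta\|\,\|\nabla\omega_{\tilde{\mathbf u}}\|_{L^\infty}$ pointwise. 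For the forcing term, Cauchy--Schwarz gives the bound $\epsilon^2\|\omega_{\boldsymbol\eta}\|_{L^2}\|\tilde{\mathbf v}\cdot\nabla\omega_{\tilde{\mathbf v}}\|_{L^2}$. Summing the two bounds yields the claimed inequality.

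\textbf{Main obstacle.} I expect the main difficulty to be Step 2, not the estimate itself. One must check carefully that the linearized vorticity equation, with the $\epsilon$-scaling of Eq.~\eqref{eq:lin-euler-total-vel}, can be recast with full advection by $\tilde{\mathbf u}$ at the cost of exactly the forcing term $\epsilon^2\,\tilde{\mathbf v}\cdot\nabla\omega_{\tilde{\mathbf v}}$. This is where the two-dimensionality is used, via the absence of stretching terms, and where the equilibrium property $\mathbf u_0\cdot\nabla\omega_0 = 0$ enters. Smoothness of all fields, from \cref{thm:stability} and the assumed $C^\infty$ nonlinear solution, justifies differentiating under the integral sign.
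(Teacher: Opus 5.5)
Your proposal is correct and follows essentially the paper's argument. You reach the same error-vorticity equation $\partial_t\omega_{\boldsymbol\eta} + \mathbf u\cdot\nabla\omega_{\boldsymbol\eta} = -\boldsymbol\eta\cdot\nabla\omega_{\tilde{\mathbf u}} - \epsilon^2\,\tilde{\mathbf v}\cdot\nabla\omega_{\tilde{\mathbf v}}$ and run the identical $L^2$ estimate (transport term removed by incompressibility and the divergence theorem, H\"older on the rest); the only difference is that the paper curls the PDE of \cref{lem:error-eqn} after rewriting it in cross-product form, whereas you subtract the vorticity equations of $\mathbf u$ and $\tilde{\mathbf u}$, which is a slightly tidier bookkeeping of the same computation.
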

\noindent The proofs for \cref{lem:error_bound} and \cref{lem:vort_error_bound} can be found in Appendix~\ref{app:error_bound}. Combining these results gives us a bound on $\|\boldsymbol{\eta}\|_{L^2}$:
\begin{restatable}{thm}{errorboundvalue}\label{thm:error_bound_value}
Let $\kappa,\alpha, \beta, b_1, b_2$ be positive constants with the following satisfied: 
\begin{align}
    \alpha \beta &> 2\gamma_u \\
     \kappa &= \frac{1}{\alpha \beta - 2\gamma_u} \left (b_1 + \frac{\alpha b_2}{2\beta}\right ),
\end{align}
with $\gamma_u$ being defined in \cref{thm:stability}. Consider the time interval $t\in[0,\gamma_u^{-1}\log(1/\epsilon)]$. Then, for $0<\epsilon<1$, we have
    \begin{equation}
        \|\boldsymbol{\eta}\|_{L^2} \leq \epsilon^2 \kappa \left ( e^{\alpha \beta t} - 1 \right ).
    \end{equation}
\end{restatable}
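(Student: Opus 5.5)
We will combine \cref{lem:error_bound} and \cref{lem:vort_error_bound} into a closed linear system of differential inequalities for $X(t)\coloneq\|\boldsymbol{\eta}\|_{L^2}$ and $Y(t)\coloneq\|\omega_{\boldsymbol{\eta}}\|_{L^2}$. Dividing each inequality by $2X$ (resp.\ $2Y$), where positive, gives
\begin{align}
\dot X &\le \|\tilde{\mathbf u}\|_{L^\infty} Y + \epsilon^2 \|(\tilde{\mathbf v}\cdot\nabla)\tilde{\mathbf v}\|_{L^2}, \\
\dot Y &\le \|\nabla\omega_{\tilde{\mathbf u}}\|_{L^\infty} X + \epsilon^2 \|\tilde{\mathbf v}\cdot\nabla\omega_{\tilde{\mathbf v}}\|_{L^2}.
\end{align}
Points where $X=0$ or $Y=0$, including $t=0$, will be handled by working with $\sqrt{X^2+\delta}$ and letting $\delta\to 0$.

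The next step is to bound the time-dependent coefficients on the window $t\in[0,\gamma_u^{-1}\log(1/\epsilon)]$. The mode of \cref{thm:stability} has the form $\tilde{\mathbf v}=e^{\gamma t}\tilde{\mathbf v}(\cdot,0)$, with $\tilde{\mathbf v}(\cdot,0)$ analytic. Hence $\|\tilde{\mathbf u}\|_{L^\infty}\le U_0+\epsilon e^{\gamma t}\|\tilde{\mathbf v}(0)\|_{L^\infty}$. On this window $\epsilon e^{\gamma t}\le\epsilon e^{\gamma_u t}\le 1$, so $\|\tilde{\mathbf u}\|_{L^\infty}$ is bounded by a constant, which we call $\alpha$. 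By the same reasoning, $\|\nabla\omega_{\tilde{\mathbf u}}\|_{L^\infty}\le \beta$ for some constant $\beta$. The forcing terms are quadratic in $\tilde{\mathbf v}$, so
\begin{equation}
\epsilon^2\|(\tilde{\mathbf v}\cdot\nabla)\tilde{\mathbf v}\|_{L^2}\le b_1\epsilon^2 e^{2\gamma_u t}, \qquad
\epsilon^2\|\tilde{\mathbf v}\cdot\nabla\omega_{\tilde{\mathbf v}}\|_{L^2}\le b_2\epsilon^2 e^{2\gamma_u t},
\end{equation}
with $b_1,b_2$ finite by analyticity. The constants may be enlarged freely, so we can also arrange $\alpha\beta>2\gamma_u$. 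I expect this hypothesis to be used in exactly this way.

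To close the system, I will rescale so that the coupled system reduces to a single scalar inequality. Let $Z\coloneq X+\lambda Y$ with a suitable weight $\lambda$. A natural choice such as $\lambda=\sqrt{\alpha/\beta}$ makes both couplings equal, giving $\dot Z\le \sqrt{\alpha\beta}\,Z+\text{forcing}$. The stated exponent is $\alpha\beta$ rather than $\sqrt{\alpha\beta}$, though. I will therefore instead use the cruder comparison $\dot Z\le \alpha\beta Z+\epsilon^2(b_1+\tfrac{\alpha}{2\beta}b_2)e^{2\gamma_u t}$, with the weight $\lambda$ chosen to produce the combination $b_1+\alpha b_2/(2\beta)$. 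This bound is still valid after enlarging constants; for example, if $\sqrt{\alpha\beta}\le\alpha\beta$, which can be assumed by taking $\beta\ge 1/\alpha$. Grönwall's inequality with $Z(0)=0$ then gives
\begin{equation}
Z(t)\le \epsilon^2 K\int_0^t e^{\alpha\beta(t-s)}e^{2\gamma_u s}\,ds=\frac{\epsilon^2 K}{\alpha\beta-2\gamma_u}\left(e^{\alpha\beta t}-e^{2\gamma_u t}\right)\le \epsilon^2\kappa\left(e^{\alpha\beta t}-1\right),
\end{equation}
and $X\le Z$ yields the claim.

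The main obstacle is matching the exact constant $\kappa$ stated in the theorem. This depends on choosing the right weight $\lambda$, or a comparison ODE, so that the forcing combination comes out as $b_1+\alpha b_2/(2\beta)$. I would first try the explicit solution of the linear comparison system $\dot X=\alpha Y+b_1\epsilon^2e^{2\gamma_u t}$, $\dot Y=\beta X+b_2\epsilon^2e^{2\gamma_u t}$. I would then bound its solution by the displayed expression, invoking a comparison principle for cooperative systems. This is justified because all coefficients are nonnegative.
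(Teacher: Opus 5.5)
Your proof is correct in substance, but it takes a different route from the paper. The paper keeps the two-component structure. After dividing out the norms, it writes $\dot{\mathbf y}\le_c A\mathbf y+\mathbf b(t)$ with $A=\begin{bmatrix}0&\alpha^2\\ \beta^2&0\end{bmatrix}$ and applies the comparison principle for quasimonotone systems (\cref{thm:monotone-bound}). It then evaluates $\int_0^t e^{A(t-s)}\mathbf b(s)\,ds$ explicitly through $\cosh(\alpha\beta\tau)$ and $\frac{\alpha}{\beta}\sinh(\alpha\beta\tau)$. Two steps give exactly $b_1+\frac{\alpha b_2}{2\beta}$: it discards the negative $-\frac{\alpha}{\beta}b_2e^{-\alpha\beta(t-s)}$ piece, and it bounds the $b_1e^{-\alpha\beta(t-s)}$ piece by $b_1e^{\alpha\beta(t-s)}$.

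In the paper the coefficient bounds are $\alpha^2$ and $\beta^2$, so $\alpha\beta$ is the true eigenvalue of the comparison matrix. That labeling is the source of your $\sqrt{\alpha\beta}$ versus $\alpha\beta$ puzzle.

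Your version calls the coefficient bounds $\alpha,\beta$, forms a weighted sum $Z=X+\lambda Y$, and applies scalar Gr\"onwall. It is shorter, fully elementary, and needs no vector comparison theorem. Your $\sqrt{X^2+\delta}$ regularization also handles the non-Lipschitz point at $t=0$ rigorously; the paper instead argues that the zero branch is unphysical. The price is a cruder rate: the product of the coefficient bounds rather than their geometric mean. This is harmless, because the constants in the statement are existential and the later argument only uses $\alpha\beta>2\gamma_u$.

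With the paper's labeling, the weight is forced to be $\lambda=\alpha/\beta$. That gives the sharp rate $\alpha\beta$ with forcing $b_1+\frac{\alpha}{\beta}b_2$, which is the stated $\kappa$ up to a factor of at most $2$. The exact factor $\frac12$ on $b_2$ is what the paper's two-mode computation buys.

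There is one slip in your side condition. The weight that actually produces $b_1+\frac{\alpha}{2\beta}b_2$ is $\lambda=\frac{\alpha}{2\beta}$. For this weight, the inequality $\dot Z\le\alpha\beta Z+\epsilon^2(b_1+\lambda b_2)e^{2\gamma_u t}$ requires $\lambda\beta\le\alpha\beta$ and $\alpha\le\alpha\beta\lambda$. That is, $\frac{1}{\beta}\le\lambda\le\alpha$, which here means $\alpha\ge 2$ and $\beta\ge\frac12$. The condition $\alpha\beta\ge 1$ you cite belongs to the other weight $\lambda=\sqrt{\alpha/\beta}$, whose forcing $b_1+\sqrt{\alpha/\beta}\,b_2$ is not the target combination. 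Since you may enlarge $\alpha$ and $\beta$ freely, this is a one-line fix.
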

\noindent The proof can be found in Appendix~\ref{app:error_bound_value}.

We have successfully bounded the error from linearization, $\|\boldsymbol{\eta}\|_{L^2}$. However, this bound does not directly apply to the normalized states. Taking into account the normalization of $\tilde{\mathbf{u}}$, we establish the following.

\begin{lem} \label{lem:normalized-error}
    Let
    \begin{equation}
        \hat{\boldsymbol{\eta}} \coloneq \frac{\mathbf{u}}{\|\mathbf{u}\|_{L^2}} - \frac{\tilde{\mathbf{u}}}{\|\tilde{\mathbf{u}}\|_{L^2}}.
    \end{equation}
    Then, for $0<\epsilon < 1$ and $t\in[0,\gamma_u^{-1}\log(1/\epsilon)]$,
    \begin{equation}
        \|\hat{\boldsymbol{\eta}} \|_{L^2} \leq h(t)  
        \coloneq \frac{2\epsilon^2 \kappa \left ( e^{\alpha \beta t} - 1 \right )}{\sqrt{1+\epsilon^2(e^{2\gamma_l t} -1 )}},
    \end{equation}
    where $ \kappa,\alpha, \beta$ are defined as in \cref{thm:error_bound_value} and $\gamma_l$ and $\gamma_u$ are defined as in \cref{thm:stability}.
\end{lem}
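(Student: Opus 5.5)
The plan is to use the standard estimate for the difference of two normalized vectors. For nonzero $\mathbf a,\mathbf b$ in $L^2$,
\begin{equation}
    \left\|\frac{\mathbf a}{\|\mathbf a\|}-\frac{\mathbf b}{\|\mathbf b\|}\right\| \le \frac{2\|\mathbf a-\mathbf b\|}{\|\mathbf b\|}.
\end{equation}
This follows by inserting $\pm\,\mathbf a/\|\mathbf b\|$ and applying the triangle inequality:
\begin{equation}
    \left\|\frac{\mathbf a}{\|\mathbf a\|}-\frac{\mathbf b}{\|\mathbf b\|}\right\| \le \|\mathbf a\|\left|\frac{1}{\|\mathbf a\|}-\frac{1}{\|\mathbf b\|}\right| + \frac{\|\mathbf a-\mathbf b\|}{\|\mathbf b\|} = \frac{\big|\|\mathbf b\|-\|\mathbf a\|\big|}{\|\mathbf b\|}+\frac{\|\mathbf a-\mathbf b\|}{\|\mathbf b\|}.
\end{equation}
The reverse triangle inequality $\big|\|\mathbf b\|-\|\mathbf a\|\big|\le\|\mathbf a-\mathbf b\|$ then gives the claimed factor of $2$. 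I will apply this with $\mathbf a=\mathbf u$ (the nonlinear solution) and $\mathbf b=\tilde{\mathbf u}$ (the linearized solution), so that $\mathbf a-\mathbf b=\boldsymbol\eta$. Dividing by $\|\tilde{\mathbf u}\|_{L^2}$ rather than $\|\mathbf u\|_{L^2}$ is the right choice, because the linearized norm is known explicitly, while the nonlinear norm is merely conserved and equal to $1$ by \cref{lem:euler-norm-cons}.

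Next I bound the numerator and denominator separately.

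\textbf{Numerator.} For $t\in[0,\gamma_u^{-1}\log(1/\epsilon)]$ and $0<\epsilon<1$, \cref{thm:error_bound_value} gives
\begin{equation}
    \|\boldsymbol\eta\|_{L^2}\le \epsilon^2\kappa\,(e^{\alpha\beta t}-1).
\end{equation}

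\textbf{Denominator.} The proof of \cref{thm:lin-diverges} computes exactly
\begin{equation}
    \|\tilde{\mathbf u}\|_{L^2}(t)=\sqrt{1+\epsilon^2(e^{2\gamma t}-1)}.
\end{equation}
Since $\gamma>\gamma_l$ by \cref{thm:stability}, this norm is at least $\sqrt{1+\epsilon^2(e^{2\gamma_l t}-1)}$.

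Combining the two bounds yields exactly $h(t)$.

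The only point needing care is nondegeneracy of the denominator: both norms must be nonzero. This is immediate, since $\|\tilde{\mathbf u}\|_{L^2}\ge 1$ for $t\ge0$ and $\|\mathbf u\|_{L^2}=1$.

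I expect no real obstacle here. The substantive work, namely the error bound on $\|\boldsymbol\eta\|_{L^2}$, is already contained in \cref{thm:error_bound_value}, and this lemma is a bookkeeping step that transfers it to normalized states.
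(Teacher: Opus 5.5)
Your proposal is correct and follows the paper's proof essentially step for step. You insert $\pm\mathbf u/\|\tilde{\mathbf u}\|_{L^2}$ and use the reverse triangle inequality to reach $\|\hat{\boldsymbol\eta}\|_{L^2}\le 2\|\boldsymbol\eta\|_{L^2}/\|\tilde{\mathbf u}\|_{L^2}$, then bound the numerator with \cref{thm:error_bound_value} and the denominator from below using the explicit norm from \cref{thm:lin-diverges} together with $\gamma>\gamma_l$; your check that both norms are nonzero is a small detail the paper leaves implicit.
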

\begin{proof}
    First, note that $\|\mathbf{u}\| - \|\tilde{\mathbf{u}}\| \leq \|\mathbf{u} - \tilde{\mathbf{u}}\|$ by the triangle inequality. Using this, we get
    \begin{align}
        \| \hat{\boldsymbol{\eta}}\| &= \left \| \frac{\mathbf{u}}{\|\mathbf{u}\| } - \frac{\tilde{\mathbf{u}}}{\|\tilde{\mathbf{u}}\|} \right \| \\
        &\leq \left \| \frac{\tilde{\mathbf{u}}}{\|\tilde{\mathbf{u}}\| } - \frac{\mathbf{u}}{\|\tilde{\mathbf{u}}\|} \right \| + \left \| \frac{\mathbf{u}}{\|\tilde{\mathbf{u}}\|}  - \frac{\mathbf{u}}{\|\mathbf{u}\| }   \right \| \\
        &= \frac{\|\mathbf{u} - \tilde{\mathbf{u}} \|}{\|\tilde{\mathbf{u}}\|} + \|\mathbf{u}\| \left | \frac{1}{\|\tilde{\mathbf{u}}\|} - \frac{1}{\|\mathbf{u}\|} \right | \\
        &= \frac{\|\mathbf{u} - \tilde{\mathbf{u}} \|}{\|\tilde{\mathbf{u}}\|} + \frac{|\|\mathbf{u}\| - \|\tilde{\mathbf{u}}\||}{\|\tilde{\mathbf{u}}\|} \\
        &\leq \frac{2\|\mathbf{u} - \tilde{\mathbf{u}} \|}{\|\tilde{\mathbf{u}}\|} \\
        &= \frac{2\|\boldsymbol{\eta}\|}{\|\tilde{\mathbf{u}}\|}.
    \end{align}
    Using \cref{thm:lin-diverges} and \cref{thm:error_bound_value} completes the proof.
\end{proof}

\subsection{Exponential separation of the nonlinear solution}

Finally, we prove lower bounds for simulating the incompressible Euler equations. First we will prove the lower bound for the tensor-product encoding.

\begin{thm} \label{thm:euler-lower-bound}
    There exists a universal constant $C$ such that any quantum algorithm for simulating the incompressible Euler equations, as outlined in Problem~\ref{prob:final-state-euler-tensor-prod}, for time $T$ to error at most $e^{-CT}$ requires $e^{\Omega(T)}$ copies of the initial state.
\end{thm}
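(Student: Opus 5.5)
We will compare the equilibrium $\mathbf u_0,P_0$ of \cref{thm:lin-diverges} with the \emph{fully nonlinear} evolution $\mathbf u(t),P(t)$ of the perturbed initial data $\tilde{\mathbf u}(0)=\mathbf u_\epsilon+\epsilon\tilde{\mathbf v}(0)$, $\tilde P(0)=\bar p+\epsilon\tilde p(0)$. Both are exact solutions of Eq.~\eqref{eq:incomp-euler}; the equilibrium is stationary, so $\ket{\mathbf u_0(T)}\otimes\ket{P_0(T)}=\ket{\mathbf u_0}\otimes\ket{P_0}$. The initial tensor-product states overlap to $(1-O(\epsilon^2))^{2}=1-O(\epsilon^2)$ by \cref{thm:lin-diverges}. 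Hence the initial Euclidean distance is $O(\epsilon)$ and the initial trace distance is $O(\epsilon)$ by \cref{fact:euc-tr-rel}. Treating the $k_{\mathbf u}+k_P$ copies jointly, Lemma~\ref{lem:state-dist-comp} then applies with $\epsilon_0=O(\epsilon)$ per copy pair.

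The key step is to lower bound the final distance. I will set $T=\gamma_u^{-1}\log(1/\epsilon)$, which is the endpoint of the window in \cref{thm:error_bound_value} and \cref{lem:normalized-error}. By Cauchy--Schwarz,
\[
|\langle \mathbf u_0|\mathbf u(T)\rangle|\le \langle \mathbf u_0|\tilde{\mathbf u}(T)\rangle+\|\hat{\boldsymbol\eta}(T)\|_{L^2}\le f(T)+h(T).
\]
From the proof of Corollary~\ref{cor:lin-evol}, $f(T)=1-\Theta(\epsilon^{2(1-\gamma_l/\gamma_u)})$. Since $\|\tilde{\mathbf u}\|_{L^2}\ge 1$, we also have $h(T)\le 2\kappa\epsilon^{2}e^{\alpha\beta T}=2\kappa\epsilon^{2-\alpha\beta/\gamma_u}$. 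We need $h(T)=o(\epsilon^{2(1-\gamma_l/\gamma_u)})$, i.e.\ $2-\alpha\beta/\gamma_u>2-2\gamma_l/\gamma_u$, i.e.\ $\alpha\beta<2\gamma_l$, while \cref{thm:error_bound_value} requires $\alpha\beta>2\gamma_u>2\gamma_l$. These are incompatible, and this is the main obstacle.

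I would try first to exploit the freedom in $\alpha,\beta$ more cleverly. One option is to use a shorter time $T'=c\gamma_u^{-1}\log(1/\epsilon)$ with $c<1$, giving $h\lesssim\epsilon^{2-c\alpha\beta/\gamma_u}$ against $1-f\approx\epsilon^{2-2c\gamma_l/\gamma_u}$. That still requires $\alpha\beta<2\gamma_l$, so it does not help. The real fix is therefore to revisit where the separation is measured. Measuring the distance to $\tilde{\mathbf u}(T)$ itself does not work either. What does work is the fact that the error is $O(\epsilon^2e^{\alpha\beta t})$ while the signal is $\epsilon e^{\gamma t}$. Comparing at the \emph{distance} level (not the overlap level), the distance between the normalized nonlinear and equilibrium states is at least $d(\mathbf u_0,\tilde{\mathbf u}(T))-\|\hat{\boldsymbol\eta}\|\gtrsim \epsilon e^{\gamma_l T}-\epsilon^2e^{\alpha\beta T}$. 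Choosing $T=c\log(1/\epsilon)$ with $c$ small enough that $c(\alpha\beta-\gamma_l)<1$ makes the second term negligible. The separation is then $\epsilon^{1-c\gamma_l}$, which has exponent $1-c\gamma_l<1$ as needed for Corollary~\ref{corr:state-disc-expon}. I expect this distance-level comparison to be the correct route. It only needs $\alpha\beta$ finite, together with $T\le\gamma_u^{-1}\log(1/\epsilon)$, which can be arranged with $c\le\gamma_u^{-1}$.

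For the pressure register, contractivity (partial trace) lets me simply discard it. Alternatively, I can bound the final overlap of the tensor product by the velocity overlap alone, since the pressure overlap is at most $1$. Either way the final tensor-product distance is $\Omega(\epsilon^{1-c\gamma_l})$. Lemma~\ref{lem:state-dist-comp} then gives
\[
k\ge\bigl((\epsilon^{1-c\gamma_l}-2\delta)/\epsilon\bigr)^2=\Omega(\epsilon^{-2c\gamma_l}).
\]
This requires the simulation error $\delta$ to be a small fraction of $\epsilon^{1-c\gamma_l}$. With $\epsilon=e^{-T/c}$, that is $\delta\le e^{-CT}$ for a suitable constant $C=(1-c\gamma_l)/c$ plus slack. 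The bound on $k$ then becomes $k=e^{\Omega(T)}$, which is the statement of the theorem.
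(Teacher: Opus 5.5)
Your final distance-level route is correct and is essentially the paper's argument: the paper also uses $\||\mathbf u\rangle-|\mathbf u_0\rangle\|\ge\||\tilde{\mathbf u}\rangle-|\mathbf u_0\rangle\|-h(t)$, combining the linear separation of \cref{thm:lin-diverges} with the linearization error of \cref{lem:normalized-error}, and it discards the pressure by bounding the tensor-product overlap by the velocity overlap before applying the state-discrimination bound. The only difference is cosmetic: the paper converts back to an overlap bound and minimizes over $t$ to find the balance point $t^*$, whereas you take $T=c\log(1/\epsilon)$ strictly before that point so the error term is lower order; both yield $k=\Omega(e^{2\gamma_l T})$.
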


\begin{proof}
    For the tensor-product encoding of Problem~\ref{prob:final-state-euler-tensor-prod}, we pick two states:
    \begin{align}
        |\psi_0\rangle &= |\mathbf u_0 \rangle \otimes |p_0 \rangle \\
        |\psi\rangle &= |\mathbf u \rangle \otimes |P \rangle,
    \end{align}
    where $\mathbf u_0$ and $P_0$ correspond to the equilibrium of Eq.~\eqref{eq:equil}, and $\mathbf u$ and $P$ are perturbations about the equilibrium. Note that
    \begin{equation}
        \langle \psi_0 |\psi\rangle \leq \langle \mathbf u_0|\mathbf u \rangle,
    \end{equation}
    so it suffices to consider the overlap of the velocity components only.
    
    Using the exponential deviation of the linear solution as well as the bound on the error from the linearization, we show exponential deviation of the nonlinear solution. From \cref{thm:lin-diverges} and \cref{lem:normalized-error} we have the bounds
    \begin{gather}
         g(t) \leq \langle \mathbf{u}_0|\tilde{\mathbf{u}}\rangle \leq f(t) 
         \\
         \|  \hat{\boldsymbol{\eta}}  \| \leq  h(t)
    \end{gather}
    with the latter bound holding for $t\in[0,\gamma_u^{-1}\log(1/\epsilon)]$. Thus, we restrict time to $t\leq \gamma_u^{-1}\log(1/\epsilon)$ for the rest of the analysis.
    
    Now we upper bound $\langle \mathbf{u}_0|\mathbf{u}\rangle$ by lower bounding $\||\mathbf{u}\rangle - |\mathbf{u}_0 \rangle \|$. We have
    \begin{align}
        \| |\tilde{\mathbf{u}}\rangle- |\mathbf{u}_0 \rangle  \| &\leq \| |\mathbf{u}\rangle - |\tilde{\mathbf{u}}\rangle \| +  \||\mathbf{u}\rangle - |\mathbf{u}_0 \rangle \| \\
        &= \| \hat{\boldsymbol{\eta}} \| + \||\mathbf{u}\rangle - |\mathbf{u}_0 \rangle \|,
    \end{align}
    which gives
    \begin{align}
        \||\mathbf{u}\rangle - |\mathbf{u}_0 \rangle \| &\geq \| |\tilde{\mathbf{u}}\rangle- |\mathbf{u}_0 \rangle  \| - \| \hat{\boldsymbol{\eta}} \|  \\
        &\geq \| |\tilde{\mathbf{u}}\rangle- |\mathbf{u}_0 \rangle  \| - h(t).
    \end{align}
    Expanding the norms, we find
    \begin{align}
        \sqrt{2 - 2\langle \mathbf{u}_0|\mathbf{u}\rangle} &\geq \sqrt{2 - 2 \langle\mathbf{u_0}|\tilde{\mathbf{u}}\rangle} - h(t) \\
        \Rightarrow \langle\mathbf{u_0}|{\mathbf{u}}\rangle & \leq \langle \mathbf{u}_0|\tilde{\mathbf{u}}\rangle + h(t)\sqrt{2-2\langle\mathbf{u}_0|\tilde{\mathbf{u}}\rangle} - \frac{h(t)^2}{2} \\
        &\leq f(t) + h(t)\sqrt{2-2g(t)} \\
        &\leq f(t) + \sqrt{2}\epsilon h(t)e^{\gamma_u t} \eqcolon \tilde H(t), \label{eq:inner-product-bound}
    \end{align}
    where in the last line we have used the fact that
    \begin{align}
        \sqrt{1-g(t)} &\leq \sqrt{1- \sqrt{1-\epsilon^2}\left ( 1-\frac{\epsilon^2}{2}\left (e^{2\gamma_u t} -1 \right) \right )} \\
        &\leq \sqrt{\epsilon^2 + \frac{\epsilon^2}{2}\left (e^{2\gamma_u t} - 1\right ) } \\
        &\leq \epsilon e^{\gamma_u t}
    \end{align}
    for $0< \epsilon < 1$. There is competition between the two terms in Eq.~\eqref{eq:inner-product-bound}, as the first term is monotonically decreasing and bounded between $0$ and $1$, and the second is monotonically increasing. Ideally, the first term dominates at the beginning before the second term dominates, meaning that the bound reaches a local minimum. We now find the location and value of the local minimum.
     
    We have the simple bounds 
    \begin{align}
        f(t) &\leq \sqrt{\frac{1}{1+\epsilon^2(e^{2\gamma_l t}-1)}} \\
        &\leq 1-\frac{\epsilon^2}{2}(e^{2\gamma_l t}-1) + \frac{3\epsilon^4}{8}(e^{2\gamma_l t}-1)^2 \\
        &\leq 1-\frac{\epsilon^2}{2}(e^{2\gamma_l t}-1) + \frac{3\epsilon^4}{8}e^{4\gamma_l t}
    \end{align}
    and
    \begin{align}
        h(t) &\leq 2\epsilon^2 \tilde\kappa (e^{\alpha\beta t}-1) \\
        &\leq 2\epsilon^2 \tilde\kappa e^{\alpha\beta t},
    \end{align}
    which gives
    \begin{equation}
        \langle \mathbf{u}_0|\mathbf{u}\rangle \leq 1-\frac{\epsilon^2}{2} (e^{2\gamma_l t} - 1) + \frac{3\epsilon^4}{8}e^{4\gamma_l t} + 2\sqrt{2}\epsilon^3\kappa e^{(\alpha\beta+\gamma_u) t} \eqcolon H(t).
    \end{equation}
    We compute the local minimum as follows. Let $x = e^{2\gamma_l t}$. We solve
    \begin{align}
        0 &= H'(t) \\
        &= -\epsilon^2 \gamma_l x + \frac{3}{2}\epsilon^4\gamma_l x^2 + 4\sqrt{2}\epsilon^3 \kappa K \gamma_l  x^K \\
        &= -1 + \frac{3}{2}\epsilon^2 x + 4\sqrt{2} \epsilon  \kappa K x^{K-1}, \label{eq:min_of_bound}
    \end{align}
    where 
    \begin{equation} \label{eq:K}
        K\coloneq\frac{\alpha\beta + \gamma_u}{2\gamma_l}.
    \end{equation}
    From \cref{thm:error_bound_value}, we have $K> 3/2$, so we can compute two upper bounds on Eq.~\eqref{eq:min_of_bound} by ignoring either term, obtaining $x \leq 2/3\epsilon^2$ and $x \leq (4\sqrt{2}\epsilon \tilde \kappa K)^{-1/(K-1)}$, with the latter being larger. This corresponds to \begin{align}
        x&= (4\sqrt{2}\epsilon  \kappa K)^{-1/(K-1)},&t^*&=\frac{1}{2\gamma_l(K-1)}\log\left(\frac{1}{4\sqrt{2}\epsilon  \kappa K}\right).
    \end{align}
    Note that the time interval we are considering has an upper limit of $t_f=\gamma_u^{-1}\log(1/\epsilon)$. We need to ensure that $t^* \leq t_f$. We see that
    \begin{align}
        2\gamma_l(K-1) = \alpha\beta + \gamma_u - 2\gamma_l > \gamma_u.
    \end{align}
    This shows that
    \begin{equation}
        t^* = \frac{1}{2\gamma_l(K-1)}\log\left (\frac{1}{4\sqrt{2}\kappa K \epsilon} \right ) < \frac{1}{\gamma_u}\log\left (\frac{1}{\epsilon} \right ) = t_f
    \end{equation}
    for
    \begin{equation}
        \epsilon < (4\sqrt{2}\kappa K)^{\frac{\gamma_u}{2\gamma_l(K-1) - \gamma_u}}.
    \end{equation}
    
    Let us now compute 
    \begin{align}
        H(t^*) &= 1 - \frac{\epsilon^2}{2}(x-1) + \frac{3\epsilon^4}{8}x^2 + 2\sqrt{2}\epsilon^3 \kappa x^{K} \\
        &= 1 +\frac{\epsilon^2}{2} - \frac{1}{2}\left (4\sqrt{2}\kappa K \right)^{-1/(K-1)}\epsilon^{2-1/(K-1)} + 2\sqrt{2} \kappa \left ( 4\sqrt{2}  \kappa K \right )^{-K/(K-1)}\epsilon^{3-K/(K-1)}\\
        &+ \frac{3}{8}(4\sqrt{2}\kappa K)^{-2/(K-1)}\epsilon^{2 \left ( 2 - 1/(K-1) \right )}\\
        &= 1 - (\kappa K)^{-1/(K-1)} \left [\frac{1}{2}(4\sqrt{2})^{-1/(K-1)} -  2\sqrt{2}(4\sqrt{2})^{-K/(K-1)}K^{-1} \right ] \epsilon^{2 - 1/(K-1)} \\
        &+ O\left(\epsilon^{2(2-1/(K-1))} \right )\\
        &= 1-O\left (\epsilon^{2-1/(K-1)} \right ). \label{eq:asymptotics_euler}
    \end{align}
    Also at $t=0$ we have
    \begin{align}
        \langle \mathbf{u}_0(0) | \mathbf{u}(0)\rangle &= \langle \mathbf{u}_0(0)|\tilde{\mathbf{u}}(0)\rangle \\
        &= \sqrt{1-\epsilon^2} \\
        &\geq 1-\epsilon^2.
    \end{align} 
    Thus, by Corollary~\ref{corr:state-disc-expon}, we need $e^{\Omega(T)}$ copies of the initial condition. 
    
    This lower bound applies to exact simulation, but we can relax this constraint slightly. \Cref{lem:state-dist-comp} shows that $\Omega(1/\epsilon)$ copies are required to implement evolution for time $t$ with error $o(\epsilon^{1-1/(2K-2)})$, which combined with $t= O(\log \epsilon^{-1})$ shows that $k=\exp(\Omega(t))$ copies are required to achieve error $\exp(-Ct)$ for some universal constant $C$.
\end{proof}
Note that \cref{thm:euler-lower-bound} applies to any pair of $m$ and $k$ that results in an instability. We numerically confirm the claim of \cref{thm:euler-lower-bound} in Figure~\ref{fig:numerics} for $m = 2$ and $k=1$.

Similarly, we obtain a lower bound for the direct-sum encoding of Problem~\ref{prob:final-state-euler-direct-sum}.
\begin{thm} \label{thm:euler-lower-bound-direct-sum}
    There exists a universal constant $C$ such that any quantum algorithm for simulating the incompressible Euler equations, as outlined in Problem~\ref{prob:final-state-euler-direct-sum}, for time $T$ to error at most $e^{-CT}$ requires $e^{\Omega(T)}$ copies of the initial state.
\end{thm}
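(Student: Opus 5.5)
The plan is to reuse the pair of solutions from the proof of \cref{thm:euler-lower-bound}: the equilibrium $(\mathbf u_0,p_0)$ and the nonlinear solution $(\mathbf u,P)$ whose initial data is the linearized perturbation $(\tilde{\mathbf u}(0),\tilde P(0))$. The only change is the encoding. Here the state is the normalized direct sum
\[
|\mathbf w\rangle = \frac{1}{\|\mathbf w\|_{L^2}}\bigl(\mathbf u\oplus P\bigr), \qquad \|\mathbf w\|_{L^2}^2=\|\mathbf u\|_{L^2}^2+\|P\|_{L^2}^2 .
\]
The initial conditions should be rescaled so that $\|\mathbf w(0)\|_{L^2}=1$. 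For example, take $\|\mathbf u(0)\|_{L^2}=\|P(0)\|_{L^2}=1/\sqrt2$ by scaling both $U_0$ and the pressure gauge. Rescaling the velocity amplitude only rescales time and the rates $\gamma_l,\gamma_u$ by constants, so the exponential statements are unaffected.

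\emph{Initial overlap.} The velocity and pressure blocks are orthogonal, so
\[
\langle\mathbf w_0(0)|\mathbf w(0)\rangle=\tfrac12\bigl(\langle\mathbf u_0|\mathbf u\rangle+\langle p_0|P\rangle\bigr)(0).
\]
By \cref{thm:lin-diverges} this equals $1-O(\epsilon^2)$. For the equilibrium, $\|\mathbf w_0(t)\|_{L^2}$ is constant in time.

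\emph{Overlap at the final time.} The overlap at $t$ is
\[
\langle\mathbf w_0|\mathbf w\rangle=\frac{\|\mathbf u_0\|\,\|\mathbf u\|\,\langle\mathbf u_0|\mathbf u\rangle+\|p_0\|\,\|P\|\,\langle p_0|P\rangle}{\|\mathbf w_0\|\,\|\mathbf w\|}.
\]
By \cref{lem:euler-norm-cons}, $\|\mathbf u\|_{L^2}$ is conserved. The pressure norm can drift, but \cref{lem:pressure-bounds} confines it to $[2\pi\bar p,\,2\pi\bar p+C\|\omega_0\|_{L^4}^2]$. Since $\omega_0$ is the vorticity of an $O(1)$ smooth field, this gives $\|P\|_{L^2}=\Theta(1)$ for all $t$. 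I would bound $\langle p_0|P\rangle\le1$ and write $a=\|\mathbf u\|/\|\mathbf w\|$ and $b=\|P\|/\|\mathbf w\|$, with $a^2+b^2=1$ and both bounded below by a constant. Cauchy--Schwarz then gives
\[
\langle\mathbf w_0|\mathbf w\rangle\le a_0a\,\langle\mathbf u_0|\mathbf u\rangle+b_0b\le \sqrt{(a_0a+b_0b)^2}-a_0a\bigl(1-\langle\mathbf u_0|\mathbf u\rangle\bigr)\le 1-c\bigl(1-\langle\mathbf u_0|\mathbf u\rangle\bigr),
\]
where $c>0$ is a constant coming from the lower bounds on $a$ and $a_0$. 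Plugging in the bound from the proof of \cref{thm:euler-lower-bound} at $t=t^*$, namely $\langle\mathbf u_0|\mathbf u\rangle\le H(t^*)=1-\Omega(\epsilon^{2-1/(K-1)})$, yields
\[
\langle\mathbf w_0|\mathbf w\rangle=1-\Omega\bigl(\epsilon^{2-1/(K-1)}\bigr) \quad\text{with } t^*=O(\log(1/\epsilon)).
\]

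\emph{Conclusion.} In terms of $\epsilon'=\epsilon^2$, the initial overlap is $1-O(\epsilon')$ and the final overlap is $1-\Omega(\epsilon'^{\alpha})$ with $\alpha=1-\tfrac{1}{2(K-1)}<1$, since $K>3/2$. \Cref{corr:state-disc-expon} then gives $e^{\Omega(T)}$ copies. The robust version, with error $e^{-CT}$, follows from \cref{lem:state-dist-comp} exactly as at the end of the proof of \cref{thm:euler-lower-bound}.

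\emph{Main obstacle.} The step I expect to need the most care is controlling the pressure component: it must not restore overlap or shrink the weight $a$ on the velocity block. This relies on \cref{lem:pressure-bounds} giving uniform two-sided $\Theta(1)$ bounds. In particular, the gauge $\bar p$ must be chosen consistently for both solutions and bounded away from zero, and the constant $C\|\omega_0\|_{L^4}^2$ must be uniform in $\epsilon$. The latter holds because $\omega_0=\omega_{\mathbf u_0}+O(\epsilon)$.
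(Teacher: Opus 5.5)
Your proposal is correct, but it handles the pressure block differently from the paper.

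The paper chooses the gauge $\bar p=p_0=0$, so the equilibrium $\mathbf w_0$ has identically zero pressure component. Then $\langle\mathbf w_0|\mathbf w\rangle=(\|\mathbf u\|_{L^2}/\|\mathbf w\|_{L^2})\,\langle\mathbf u_0|\mathbf u\rangle$, and since $\|\mathbf u\|_{L^2}\le\|\mathbf w\|_{L^2}$, the direct-sum overlap is dominated by the velocity overlap. The estimates from \cref{thm:lin-diverges}, \cref{lem:normalized-error} and the proof of \cref{thm:euler-lower-bound} therefore transfer with no control on $\|P(t)\|_{L^2}$ at all: extra pressure mass only lowers the overlap further.

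You instead keep a nonzero gauge with the weight split evenly, so you must stop the pressure from swamping the velocity block. Your Cauchy--Schwarz step $\langle\mathbf w_0|\mathbf w\rangle\le 1-a_0a\,(1-\langle\mathbf u_0|\mathbf u\rangle)$ does this. It works because of the $\epsilon$-uniform upper bound on $\|P(t)\|_{L^2}$ from \cref{lem:pressure-bounds}, with $\|\omega_0\|_{L^4}=O(1)$ since $\omega_0=\omega_{\mathbf u_0}+O(\epsilon)$. Your rescaling of $U_0$ is justified by the scaling symmetry of the Euler equations.

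What each route buys:
\begin{itemize}
\item The paper's route is lighter, since it never needs a pressure bound for the final-state result.
\item Your route imports the Calder\'on--Zygmund pressure estimate, which the paper only uses later for the history-state corollaries. In exchange, it shows the hard instance does not depend on the degenerate zero-pressure equilibrium, so the bound is robust to the gauge convention that Problem~\ref{prob:final-state-euler-direct-sum} leaves free.
\end{itemize}

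One small clarification: with your exact even split, the two solutions cannot carry literally the same gauge. They differ at order $\epsilon^2$, as in \cref{thm:lin-diverges}. What your argument actually uses is that each gauge is fixed in time and bounded away from zero.
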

\begin{proof}
    We pick the equilibrium state
    \begin{equation}
        \mathbf w_0 (0) = \begin{bmatrix}
            \frac{1}{\sqrt{2\pi^2}}\sin(my) \\
            0 \\
            0
        \end{bmatrix},
    \end{equation}
    where we have set $\bar p = p_0 = 0$. From \cref{thm:stability}, we pick $\aleph$ such that $\|\tilde{\mathbf v}\|_{L^2}^2+\|\tilde p\|_{L^2}^2 = 1$. Then, for the perturbed state, we choose
    \begin{equation}
        \mathbf w(0) = \begin{bmatrix}
            \sqrt{\frac{1-\epsilon^2}{2\pi^2}}\sin(my) + \epsilon \tilde v_x \\
            \epsilon \tilde v_y \\
            \epsilon \tilde p
        \end{bmatrix},
    \end{equation}
    where $0< \epsilon < 1$. It is easy to check that $\|\mathbf w_0(0)\|_{L^2} = \|\mathbf w(0)\|_{L^2}=1$. Since the pressure gauge (Eq.~\eqref{eq:pressure_gauge}) is zero, the inner product of the two states only takes into account the overlap of the velocity components. The rest of the proof then follows from Theorem~\ref{thm:lin-diverges}, Lemma~\ref{lem:normalized-error}, and Theorem~\ref{thm:euler-lower-bound}.
\end{proof}

\begin{figure}[ht!] 
    \centering
    \begin{subfigure}[t]{0.48\textwidth}
        \centering
        \includegraphics[width=\textwidth]{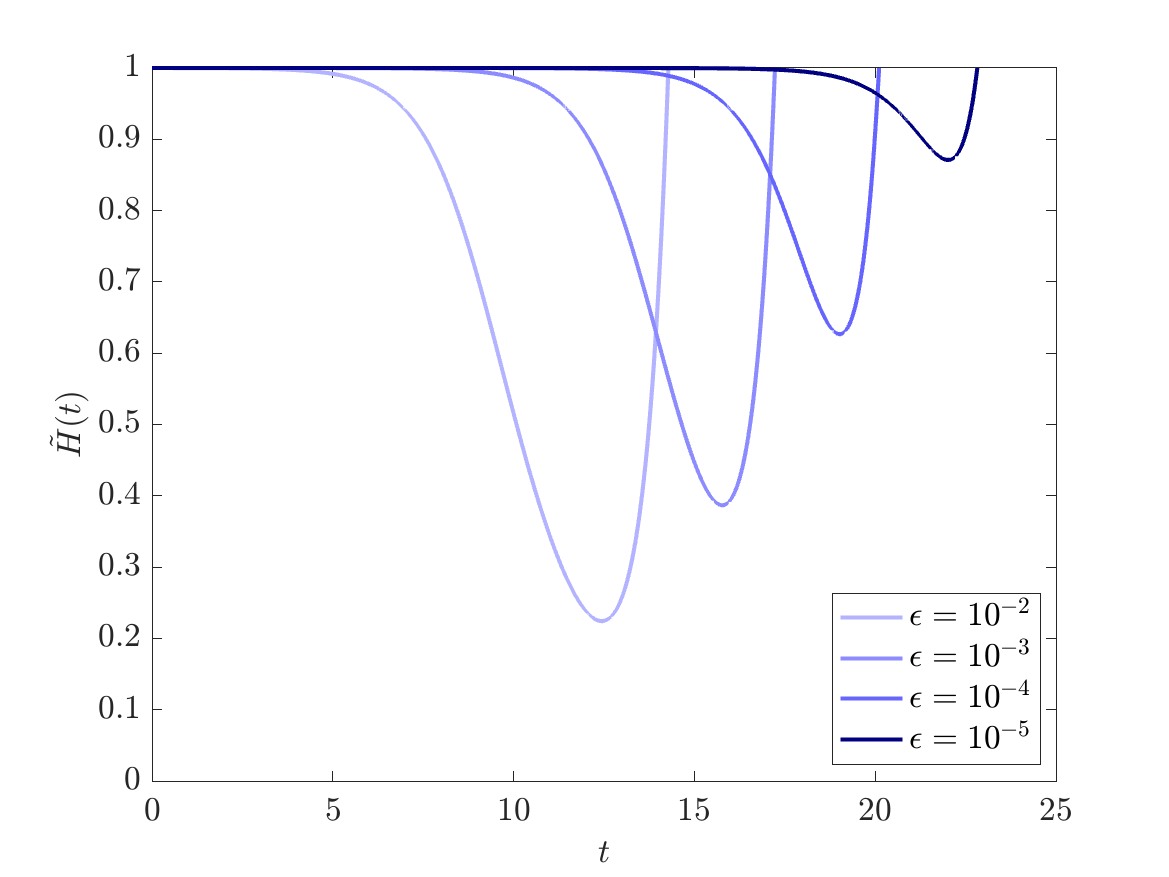}
        \caption{ $K=2$.}
        \label{fig:min}
    \end{subfigure}
    \begin{subfigure}[t]{0.48\textwidth}
        \centering
        \includegraphics[width=0.905\textwidth]{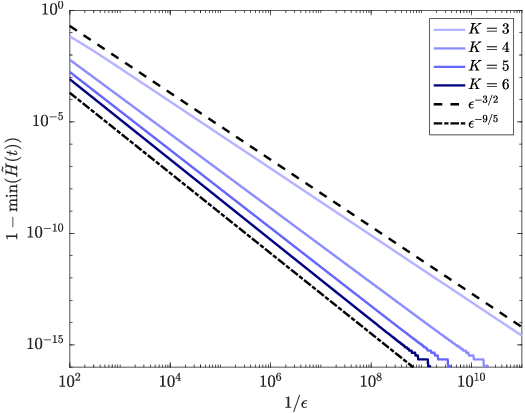}
        \caption{}
        \label{fig:eps-scaling}
    \end{subfigure}
    \caption{Numerical confirmation of the proof of \cref{thm:euler-lower-bound}. $\tilde H(t)$ denotes the bound on the inner product of the equilibrium with the nonlinear solution, as given in Eq.~\eqref{eq:inner-product-bound}. Figure~\ref{fig:min} plots $\tilde H(t)$ as a function of simulation time $t$, verifying the existence of the local minimum. Figure~\ref{fig:eps-scaling} plots the maximum value of $1-\tilde H(t)$ as a function of $1/\epsilon$, showing that the asymptotic scalings match Eq.~\eqref{eq:asymptotics_euler}. In both figures, $m=2$, $k=1$, and we arbitrarily set $\kappa = 10^{-6}$.}
    \label{fig:numerics} 
\end{figure}

\section{Lower bound for history state preparation} \label{sec:history-state}

The lower bounds in the previous section apply for preparing the solution at the final time. Instead of outputting the final-time state, some quantum algorithms prepare a history state, which contains the solution as a superposition over timesteps. In this section we show that a lower bound for final state preparation implies a lower bound for history state preparation, as one can post-select on the history state to obtain the final-time state.\footnote{Note that there is also a reduction in the other direction: an algorithm for creating the state at a desired final time can be used to produce a history state. We can do this by preparing the ancilla state $\sum_{j=1}^{m}|jh\rangle/\sqrt{m}$ using $O(\log_2m)$ gates, and then applying the final state preparation algorithm with the evolution time conditioned on the value $jh$ in this ancilla, resulting in the history state. Therefore the complexity of history state preparation is at most the complexity of final state preparation, with no overhead.} We use the fact that the KdV and Euler equations are norm-preserving to prove the lower bound.

We define the history state as follows. Let $|\psi(t)\rangle$ be the solution, as described in \Cref{sec:probdef}, to a differential equation at time $t$, where $t\in[0,T]$ with $T$ being the final simulation time. We discretize the interval $[0,T]$ into $m\in \mathbb N$ intervals of equal length $h\coloneq T/m$. The history state takes the form
\begin{equation}
    |\Psi\rangle \propto \sum_{j=1}^{m} |j\rangle |\psi(jh)\rangle.
\end{equation}
\noindent The following theorem enables lower bounds for history-state preparation.

\begin{thm} \label{thm:hist-state-lower-bound}
    Suppose we are given $k$ copies of $|\psi(0)\rangle$, which is an encoding of the initial condition of a differential equation. Suppose the solution to the differential equation satisfies $\||\psi(t)\rangle \| = \Theta(1)$ for all $t \in [0,T]$. If preparing the final-time solution, $|\psi(T)\rangle$, of the differential equation requires $\Omega(f(T))$ copies of the initial state, then preparing the history state requires $k=\Omega(f(T)/T)$ copies of the initial state.
\end{thm}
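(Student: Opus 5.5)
We argue by reduction: from a history-state preparation algorithm we build a final-state algorithm whose copy complexity is larger by at most a factor $O(T)$. Suppose some algorithm $\mathcal A$ prepares the (normalized) history state $|\Psi\rangle = \frac{1}{\mathcal N}\sum_{j=1}^m |j\rangle|\psi(jh)\rangle$ from $k$ copies of $|\psi(0)\rangle$. Here $\mathcal N^2 = \sum_j \||\psi(jh)\rangle\|^2$. Since $\||\psi(t)\rangle\| = \Theta(1)$ for all $t\in[0,T]$, every term is $\Theta(1)$, so $\mathcal N^2 = \Theta(m)$.

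The key step is post-selection on the clock register. We measure the clock in the computational basis and keep the outcome $j=m$. That outcome occurs with probability
\[
p = \||\psi(T)\rangle\|^2/\mathcal N^2 = \Theta(1/m).
\]
Conditioned on success, the remaining register is exactly the normalized $|\psi(T)\rangle$, which is the target of the final-state problem (for the flag encoding, or the tensor/direct-sum encodings of Problems~\ref{prob:final-state-euler-tensor-prod} and~\ref{prob:final-state-euler-direct-sum}, after renormalization). We repeat the history-state preparation until success. The expected number of repetitions is $O(m)$, and by Markov's inequality $O(m)$ repetitions succeed with constant probability. This consumes $O(mk)$ copies of the initial state.

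The success probability can be boosted to $1-\delta$ with a $\log(1/\delta)$ overhead. Alternatively, we can note that the lower bound of \Cref{lem:state-dist-comp} tolerates a constant failure probability, which only enters as an additional constant trace-distance error. An approximate history state with error $\epsilon'$ yields, after post-selection, a final state with error $O(\epsilon'\sqrt{m})$, since conditioning can amplify error by at most $1/\sqrt{p}$. We therefore take the history-state precision to be correspondingly smaller, or simply state the result for exact or sufficiently accurate preparation, as the paper does for the final-state bounds.

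Combining: the final-state algorithm uses $O(mk)$ copies, and by hypothesis it requires $\Omega(f(T))$ copies, so $k = \Omega(f(T)/m)$. The main obstacle is relating $m$ to $T$ to get $\Omega(f(T)/T)$. A history state is only meaningful with a fixed time resolution $h = \Theta(1)$, i.e.\ $m = \Theta(T)$ timesteps. If instead $m$ is larger, one can restrict attention to the last $\Theta(1/h)$ timesteps in a unit-length window ending at $T$. I would first try taking $h$ constant so that $m=O(T)$, which gives $k=\Omega(f(T)/T)$ directly. As a fallback, I would post-select on $j$ in the window $[T-1,T]$ and use that the dynamics over unit time changes the state only boundedly; that route needs an extra continuity argument and is less clean.
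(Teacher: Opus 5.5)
Your proposal is correct and follows the paper's argument: post-select the clock register on $j=m$, which succeeds with probability $\Theta(1/m)$ because $\mathcal N=\Theta(\sqrt{m})$, so $O(m)$ runs of the history-state algorithm yield the final state. With $m=O(T)$, which the paper also assumes implicitly, this gives $k=\Omega(f(T)/T)$; your remarks on error propagation through post-selection are care the paper omits, and the windowed fallback is not needed.
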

\begin{proof}
    For a contradiction, suppose we have an algorithm that prepares the history state using $o(f(T)/T)$ copies of the initial state. If $|\psi(jh)\rangle$ is the solution at timestep $j$, the history state takes the form
    \begin{equation}
        |\Psi\rangle = \frac{1}{\mathcal N}\sum_{j=1}^m |j\rangle |\psi(jh)\rangle,
    \end{equation}
    where $\mathcal N$ is the normalization factor. Since $\||\psi(t)\rangle \| = \Theta(1)$, $\mathcal N = \Theta(\sqrt{m})$. One can then use $O(m) = O(T)$ copies of the history state to post-select on the final solution, resulting in a final-state-preparation algorithm with complexity $o(f(T))$, which violates the lower bound. Thus the complexity of history state preparation is $\Omega(f(T)/T)$.
\end{proof}

\begin{corr} [History state lower bound for the KdV equation] \label{cor:kdv-hist}
    Any quantum algorithm producing the history state of the KdV equation requires $\Omega\left ( T \right )$ copies of the initial state.
\end{corr}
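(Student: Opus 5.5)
The corollary should follow by plugging the KdV final-state lower bound into the reduction of \cref{thm:hist-state-lower-bound}, with $f(T) = T^2$. There are two things to check: that the KdV final-state problem has lower bound $\Omega(T^2)$, and that the solutions in the hard instance keep norm $\Theta(1)$ for all $t\in[0,T]$.

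For the first, I will use \cref{thm:kdv-lowerbound}. That theorem says that preparing $\ket{\psi_\phi(T)}$ to constant error requires $\Omega(T^2)$ copies. The hard instance is the soliton pair $\phi,\phi'$ with speeds $1$ and $1+\delta$, where $\delta = 2/T$. By \cref{lem:kdv-init} their initial distance is $O(\delta)$, and by \cref{lem:kdv-final} their distance at time $T = 2/\delta$ is $\Omega(1)$. \Cref{cor:eq-evol-lower} then gives the $\Omega(T^2)$ bound.

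For the second, note that \cref{lem:kdv-norm-cons} conserves $\|\phi\|_{L^2}$. By Eq.~\eqref{eqn:kdv-norm}, the squared norms are $2/3$ and $\tfrac{2}{3}(1+\delta)^{3/2}$, which are $\Theta(1)$ and at most $1$ for small $\delta$. So the $\ket{\bot}$-free component $\ket{\phi(t)}$ has norm $\Theta(1)$ throughout. The full flag encoding $\ket{\psi_\phi(t)}$ is unit norm in any case, so $\mathcal N = \Theta(\sqrt m)$ in the proof of \cref{thm:hist-state-lower-bound}.

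Substituting $f(T)=T^2$ into \cref{thm:hist-state-lower-bound} gives $k = \Omega(T^2/T) = \Omega(T)$.

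The main subtlety is the post-selection step inside \cref{thm:hist-state-lower-bound}. Measuring the clock register on a copy of the history state yields $\ket{\psi(T)}$ with probability $1/m$, so one needs $O(m)$ history states, and hence $m = O(T)$ timesteps, with constant step size $h$. I would state explicitly that the corollary assumes $h = \Theta(1)$; otherwise the bound reads $\Omega(T^2/m)$. I also need the post-selected state to still approximate $\ket{\psi_\phi(T)}$ to within the constant error $1/C$ of \cref{thm:kdv-lowerbound}. For that, I would require the history-state error to be small enough that the conditional error on the final clock branch stays below $1/C$, which costs at most a constant (or $\sqrt m$) rescaling of the precision and does not change the copy-complexity scaling.
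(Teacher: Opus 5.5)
Your proposal is correct and matches the paper's proof. The paper likewise combines \cref{thm:kdv-lowerbound} (so $f(T)=T^2$) and the norm preservation of \cref{lem:kdv-norm-cons} with the post-selection reduction of \cref{thm:hist-state-lower-bound}. Your explicit caveats are accurate refinements of points the paper leaves implicit: the reduction needs $m=O(T)$, i.e.\ a constant step size (the paper silently writes $O(m)=O(T)$), and the history-state precision must be small enough, at worst $O(1/\sqrt{m})$, for the post-selected branch to stay within the constant error of \cref{thm:kdv-lowerbound}.
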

\begin{proof}
    Let $|\psi_{\phi}(t)\rangle$ be the solution to the KdV equation at time $t$, as defined in Problem~\ref{prob:final-state-kdv}. Note that $|\psi_{\phi}(t)\rangle$ is norm-preserving due to \cref{lem:kdv-norm-cons}. The proof then follows from \cref{thm:kdv-lowerbound} and \cref{thm:hist-state-lower-bound}.
\end{proof}
\begin{corr} [History state lower bound for the Euler equations, tensor-product encoding] \label{cor:euler-hist-tens-prod}
    Any quantum algorithm producing the history state of the Euler equations, as encoded in Problem~\ref{prob:final-state-euler-tensor-prod}, requires $e^{\Omega(T)}$ copies of the initial state.
\end{corr}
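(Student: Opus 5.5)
The corollary should follow from \cref{thm:hist-state-lower-bound}, using the final-state lower bound of \cref{thm:euler-lower-bound}. That theorem gives $f(T) = e^{\Omega(T)}$, so the history-state bound is $\Omega(f(T)/T) = e^{\Omega(T)}/T = e^{\Omega(T)}$, since the polynomial factor $T$ is absorbed by shrinking the constant in the exponent. So the plan is to check the hypotheses of \cref{thm:hist-state-lower-bound} for the tensor-product encoding of Problem~\ref{prob:final-state-euler-tensor-prod}.

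The only hypothesis to verify is that the encoded solution has norm $\Theta(1)$ for all $t\in[0,T]$.

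\textbf{Velocity register.} By \cref{lem:euler-norm-cons}, $\|\mathbf u\|_{L^2}$ is conserved on $\mathbb T^2$. It therefore stays equal to its initial value $1$.

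\textbf{Pressure register.} By \cref{lem:pressure-bounds}, $2\pi\bar p \le \|P\|_{L^2} \le 2\pi \bar p + C\|\omega_0\|_{L^4}^2$ at every time. The gauge $\bar p$ is a fixed constant, positive for the initial data used in \cref{thm:euler-lower-bound} (there $\bar p=\sqrt{1-q^2\epsilon^2}/(2\pi)$, bounded away from $0$ for small $\epsilon$). The vorticity $\omega_0$ is that of a smooth initial field and is bounded independently of $T$. Hence $\|P\|_{L^2}=\Theta(1)$.

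Consequently, the unnormalized tensor product $\mathbf u(t)\otimes P(t)$ has norm $\Theta(1)$ uniformly in $t$. The history state's normalization is then $\Theta(\sqrt m)$, and the post-selection argument of \cref{thm:hist-state-lower-bound} goes through.

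If the history state is instead built from normalized registers $|\mathbf u(jh)\rangle\otimes|P(jh)\rangle$, the norm condition holds trivially. Either way, post-selecting the clock register on $j=m$ succeeds with probability $\Theta(1/m)$.

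The main subtlety is the precision requirement. \cref{thm:euler-lower-bound} requires final-state error $e^{-CT}$, so the history-state algorithm must also be accurate enough that post-selection yields the final state to error $e^{-CT}$. I would handle this by asking the history-state preparation to have error $e^{-C'T}$ for a suitable constant $C'$. Post-selection with probability $\Theta(1/m)$ amplifies error by at most a factor $O(\sqrt m)=\mathrm{poly}(T)$, which is absorbed by adjusting the constant. The copy count then multiplies by only $O(m)=O(T)$, and the conclusion $k=e^{\Omega(T)}$ follows.
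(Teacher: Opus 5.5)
Your proposal is correct and follows the paper's proof: combine \cref{thm:euler-lower-bound} with \cref{thm:hist-state-lower-bound}, and verify the $\Theta(1)$ norm condition using \cref{lem:euler-norm-cons} for the velocity and \cref{lem:pressure-bounds} with a nonzero gauge $\bar p$ for the pressure. Your extra tracking of the $e^{-CT}$ precision through post-selection goes beyond what the paper does. One small correction there: pushing the post-selection failure probability down to $e^{-CT}$ takes $O(mT)$ repetitions rather than $O(m)$, but this is still polynomial in $T$ and is absorbed in the same way.
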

\begin{proof}
    Let $|\mathbf{u}(t)\rangle \otimes |p(t)\rangle $ be the solution to the Euler equations at time $t$, as defined in Problem~\ref{prob:final-state-euler-tensor-prod}. Note that $\||\mathbf{u}(t)\rangle\| = 1$ from \cref{lem:euler-norm-cons}. To properly encode $|p(t)\rangle$, we need $\bar p \neq 0$. From \cref{lem:pressure-bounds}, we then see that $\||p(t)\rangle \| = \Theta(1)$, so $\||\mathbf{u}(t)\rangle \otimes |p(t)\rangle\| = \Theta(1)$. The proof then follows from \cref{thm:euler-lower-bound} and \cref{thm:hist-state-lower-bound}.
\end{proof}

\begin{corr} [History state lower bound for the Euler equations, direct-sum encoding] \label{cor:euler-hist}
    Any quantum algorithm producing the history state of the Euler equations, as encoded in Problem~\ref{prob:final-state-euler-direct-sum}, requires $e^{\Omega(T)}$ copies of the initial state.
\end{corr}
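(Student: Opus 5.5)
The plan mirrors the proof of \cref{cor:euler-hist-tens-prod}: combine the direct-sum final-state lower bound of \cref{thm:euler-lower-bound-direct-sum} with the reduction in \cref{thm:hist-state-lower-bound}. The only hypothesis of \cref{thm:hist-state-lower-bound} that needs checking is that the encoded solution $|\mathbf w(t)\rangle$ has norm $\Theta(1)$ for all $t\in[0,T]$. Once that holds, the theorem gives a history-state lower bound of $\Omega(e^{\Omega(T)}/T)=e^{\Omega(T)}$, since the factor $1/T$ is absorbed into the exponent.

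To verify the norm condition, I will bound the two blocks of $\mathbf w$ separately, using
\begin{equation}
\|\mathbf w(t)\|_{L^2}^2=\|\mathbf u(t)\|_{L^2}^2+\|P(t)\|_{L^2}^2 .
\end{equation}
For the velocity block, \cref{lem:euler-norm-cons} says $\|\mathbf u(t)\|_{L^2}$ is conserved, so it stays equal to its initial value, which is $\Theta(1)$ for the states used in \cref{thm:euler-lower-bound-direct-sum}. This already gives the lower bound $\|\mathbf w(t)\|_{L^2}\ge \|\mathbf u(0)\|_{L^2}=\Omega(1)$; unlike the tensor-product case, no nonzero pressure gauge is needed for the lower bound. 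For the pressure block, \cref{lem:pressure-bounds} gives
\begin{equation}
\|P(t)\|_{L^2}\le 2\pi\bar p + C\|\omega_0\|_{L^4}^2 .
\end{equation}
With the gauge $\bar p=0$ chosen in \cref{thm:euler-lower-bound-direct-sum}, this reduces to $\|P(t)\|_{L^2}\le C\|\omega_0\|_{L^4}^2$, which is independent of $t$.

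The remaining step is to show $\|\omega_0\|_{L^4}=O(1)$. The initial velocity is $\sqrt{(1-\epsilon^2)/(2\pi^2)}\sin(my)\,\hat{\mathbf x}+\epsilon\tilde{\mathbf v}$. Its first part has vorticity of size $O(m)$, a constant. Its second part is $\epsilon\tilde{\mathbf v}$, where $\tilde{\mathbf v}$ is the fixed analytic perturbation from \cref{thm:stability}, whose vorticity $L^4$ norm is some constant independent of $\epsilon$ and $T$. Hence $\|\omega_0\|_{L^4}=O(1)$ uniformly for $0<\epsilon<1$, and so $\|\mathbf w(t)\|_{L^2}=\Theta(1)$ for all $t$.

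The hypotheses of \cref{thm:hist-state-lower-bound} are then satisfied with $f(T)=e^{\Omega(T)}$, and the corollary follows. The main subtlety is that the constants in the norm bound must not depend on $\epsilon$, because $\epsilon$ is tied to $T$ through $T=O(\log(1/\epsilon))$; the uniform bound on $\|\omega_0\|_{L^4}$ above is exactly what ensures this.
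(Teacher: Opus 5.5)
Your proposal is correct and follows the paper's proof. It checks $\|\mathbf w(t)\|_{L^2}=\Theta(1)$ using conservation of $\|\mathbf u\|_{L^2}$ (\cref{lem:euler-norm-cons}) and the time-independent pressure bound of \cref{lem:pressure-bounds}, then combines \cref{thm:euler-lower-bound-direct-sum} with \cref{thm:hist-state-lower-bound}; you also correctly verify that $\|\omega_0\|_{L^4}=O(1)$ uniformly in $\epsilon$, so the $\Theta(1)$ constants do not degrade as $T=O(\log(1/\epsilon))$ grows, a point the paper leaves implicit.
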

\begin{proof}
    Let $|\mathbf w(t)\rangle $ be the solution to the Euler equations at time $t$, as defined in Problem~\ref{prob:final-state-euler-direct-sum}. Note that $\||\mathbf w(t)\rangle \| = \Theta(1)$ from \cref{lem:euler-norm-cons,lem:pressure-bounds} for any $\bar p$. The proof then follows from \cref{thm:euler-lower-bound-direct-sum} and \cref{thm:hist-state-lower-bound}.
\end{proof}
\begin{remark}
    For the Euler equations, as the velocity and pressure are typically treated separately, another natural encoding of the history state is a tensor product of history states:
    \begin{equation}
        |\Psi\rangle \propto \left ( \sum_{j=1}^m |j\rangle |\mathbf u(jh) \rangle \right ) \otimes \left ( \sum_{j=1}^m |j\rangle |p(jh) \rangle \right ).
    \end{equation}
    It is easy to see that the same exponential lower bound applies for preparing this state as well, as one can simply discard the pressure history state and use \cref{lem:euler-norm-cons}, \cref{thm:euler-lower-bound}, and \cref{thm:hist-state-lower-bound}.
\end{remark}

\section{Discussion and future work} \label{sec:future-work}

In this paper we proved lower bounds for quantum algorithms that simulate fluid dynamical equations, namely the KdV and Euler equations. Both of these equations are widely used in numerical analysis of fluids. We proved lower bounds for both final state preparation and history state preparation. For the KdV equation we showed a polynomial (in simulation time $T$) lower bound, and for the Euler equations we showed an exponential lower bound. 

Note that both the KdV equation and the Euler equations have no dissipation. Using the language of~\cite{liu2021efficient}, $\mathrm R=\infty$ for both these equations when discretized. While the lack of dissipation may be responsible for the difficulty of simulating these equations, we find lower bounds that are exponentially separated. It would be interesting to find an exponential lower bound for the KdV equation, although from numerical simulations alone this seems unlikely. If such a lower bound cannot be shown, a fruitful research direction might be to develop a polynomial-time quantum algorithm for simulating the KdV equation.

It is also worth thinking about the limitations of our lower bound and whether there are ways of circumventing it. A main limitation of the lower bound for the Euler equations is the requirement for the error to be exponentially small. Ideally, we would like a lower bound for constant error, similar to the one we have for the KdV equation. One should be able to relax this constraint by obtaining a tighter bound on the linearization error. We leave this task to future work.

Another notable limitation is the fact that our lower bound for the Euler equations applies near unstable fixed points of dynamical systems. It is unclear whether a similar lower bound holds further away from the fixed point. If the dynamics are chaotic away from the fixed point, a similar exponential lower bound should apply as per~\cite{lewis2024limitations}, although this has yet to be rigorously proven. However, if the dynamics is not chaotic, perhaps there is not an exponential lower bound. If that is the case, perhaps one can develop a polynomial-time quantum algorithm for simulating nonlinear dynamical systems in regimes of phase space that are sufficiently far from instabilities, thereby circumventing our lower bound.

Another way to circumvent the lower bound is to note that certain properties of fluids do not depend sensitively on initial conditions. A notable example is the appearance of the Kolmogorov scaling law in turbulence, where two nearby initial trajectories in a turbulent fluid can diverge very quickly but exhibit the same energy spectrum~\cite{pope2000turbulent}. Can one give an efficient algorithm for fluid simulation under some assumption that captures this notion? This direction has been somewhat explored by~\cite{bravyi2025quantum} for stochastic nonlinear differential equations, and it merits further investigation.

While we consider amplitude-encoded states in our paper, one could consider encoding the solution digitally, where an $n$-dimensional vector $\mathbf{u}$ is encoded as 
\begin{equation}
    |\psi\rangle \frac{1}{\sqrt{n}} \sum_{j=0}^{n-1} |j\rangle |u_j\rangle,
\end{equation}
with $|u_j\rangle$ being a binary encoding of $u_j$. It turns out that digitally-encoded states are equivalent to amplitude-encoded ones~\cite{mitarai2019quantum}. Thus, our lower bound for amplitude-encoded states translates directly to the digital setting.

Note that the lower bounds for history-state preparation rely on the norm-preserving property of the KdV and Euler equations. While this is true in the continuous case, the discretized Euler and KdV equations do not necessarily preserve the norm. In fact, certain discretizations of these equations introduce ``numerical dissipation", whereby high-wavenumber modes get artificially damped. Thus, the norm of the solution would decay with time at the expense of offering better numerical properties (such as smoothness and stability). However, in such cases, the strength of the numerical dissipation decreases as the resolution increases. Thus, any ``sufficiently faithful" discretization of the original equations should inherit the same lower bound. Our lower bounds for the continuous equations are advantageous in that they are agnostic about the specifics of the discretization of the equations.

As we briefly discussed at the beginning of the paper, while our exponential lower bound applies to the incompressible Euler equations we also expect it to apply to the incompressible Navier-Stokes equations for sufficiently large Reynolds number. This is because the incompressible Euler equations correspond to the inviscid, infinite-Reynolds-number limit of the incompressible Navier-Stokes equations. We leave the task of finding a threshold Reynolds number above which our lower bound applies to future work. We expect that similar lower bounds apply to more complicated equations, such as the magnetohydrodynamic equations in plasma physics, though the exact parameter regimes that support such lower bounds would need to be determined. On the other hand, it is unknown whether a similar lower bound applies to other fluid dynamics models, such as large eddy simulations (LES) and Reynolds-averaged Navier-Stokes (RANS), which are typically preferred for simulating turbulent flows over direct numerical simulation~\cite{pope2000turbulent}.

From a broader perspective, this work reframes the quest for quantum CFD: rather than seeking universal speedups for realistic highly nonlinear flows, future progress is more likely to come from identifying restricted regimes (short times, strong dissipation, or special observables) where quantum speedups are viable. In this sense, our results sharpen our understanding of where quantum computers can and cannot help in scientific computing, guiding effort away from unattainable goals and toward problem formulations where genuine, physically meaningful quantum advantage may still exist.

\section*{Acknowledgements}
AA thanks Rodolfo Rosales for insight on the Euler equations, as well as Susan Friedlander for insight on the instability proof of the smooth Kelvin-Helmholtz flow. AMC thanks Xiangyu Li for discussions of quantum algorithms for fluid dynamics, including \cite{li2025potential}. The authors acknowledge support from the US Department of Energy (grant DE-SC0020264).

\bibliography{ref}

@book{hesthaven2008nodal,
  title={Nodal discontinuous Galerkin methods: Algorithms, analysis, and applications},
  author={Hesthaven, Jan S. and Warburton, Tim},
  year={2008},
  publisher={Springer},
  url={https://doi.org/10.1007/978-0-387-72067-8}
}

@article{monaghan1992smoothed,
  title={Smoothed particle hydrodynamics},
  author={Monaghan, Joe J.},
  journal={Annual Review of Astronomy and Astrophysics},
  volume={30},
  pages={543--574},
  year={1992},
  url={https://doi.org/10.1088/0034-4885/68/8/R01}
}

@book{kruger2017lattice,
  title={The lattice Boltzmann method},
  author={Kr{\"u}ger, Timm and Kusumaatmaja, Halim and Kuzmin, Alexandr and Shardt, Orest and Silva, Goncalo and Viggen, Erlend Magnus},
  series={Graduate Texts in Physics},
  volume={10},
  year={2017},
  publisher={Springer},
  url={https://doi.org/10.1007/978-3-319-44649-3}
}

@book{toro2013riemann,
  title={Riemann solvers and numerical methods for fluid dynamics: A practical introduction},
  author={Toro, Eleuterio F.},
  year={2013},
  publisher={Springer},
  url={https://doi.org/10.1007/b79761}
}

@book{durran2010numerical,
  title={Numerical methods for fluid dynamics: With applications to geophysics},
  author={Durran, Dale R.},
  series={Texts in Applied Mathematics},
  volume={32},
  year={2010},
  publisher={Springer},
  url={https://doi.org/10.1007/978-1-4419-6412-0}
}

@book{chandrasekhar2013hydrodynamic,
  title={Hydrodynamic and hydromagnetic stability},
  author={Chandrasekhar, Subrahmanyan},
  year={2013},
  publisher={Courier Corporation}
}

@book{ferziger2019computational,
  title={Computational methods for fluid dynamics},
  author={Ferziger, Joel H. and Peri{\'c}, Milovan and Street, Robert L},
  year={2019},
  publisher={Springer},
  url={https://doi.org/10.1007/978-3-642-56026-2}
}

@article{jennings2025end,
  title={An end-to-end quantum algorithm for nonlinear fluid dynamics with bounded quantum advantage},
  author={Jennings, David and Korzekwa, Kamil and Lostaglio, Matteo and Ashworth, Richard and Marsili, Emanuele and Rolston, Stephen},
  journal={arXiv preprint arXiv:2512.03758},
  year={2025},
  url={https://doi.org/10.48550/arXiv.2512.03758}
}

@article{itani2024quantum,
  title={Quantum algorithm for lattice Boltzmann (QALB) simulation of incompressible fluids with a nonlinear collision term},
  author={Itani, Wael and Sreenivasan, Katepalli R. and Succi, Sauro},
  journal={Physics of Fluids},
  volume={36},
  number={1},
  year={2024},
  publisher={AIP Publishing},
  url={https://doi.org/10.1063/5.0176569}
}

@article{itani2022analysis,
  title={Analysis of {C}arleman linearization of lattice {B}oltzmann},
  author={Itani, Wael and Succi, Sauro},
  journal={Fluids},
  volume={7},
  number={1},
  pages={24},
  year={2022},
  publisher={MDPI},
  url={https://doi.org/10.3390/fluids7010024}
}

@article{li2025potential,
  title={Potential quantum advantage for simulation of fluid dynamics},
  author={Li, Xiangyu and Yin, Xiaolong and Wiebe, Nathan and Chun, Jaehun and Schenter, Gregory K and Cheung, Margaret S. and M{\"u}lmenst{\"a}dt, Johannes},
  journal={Physical Review Research},
  volume={7},
  number={1},
  pages={013036},
  year={2025},
  publisher={APS},
  url={https://doi.org/10.1103/PhysRevResearch.7.013036}
}

@article{harrow2009quantum,
  title={Quantum algorithm for linear systems of equations},
  author={Harrow, Aram W. and Hassidim, Avinatan and Lloyd, Seth},
  journal={Physical Review Letters},
  volume={103},
  number={15},
  pages={150502},
  year={2009},
  publisher={APS},
  url={https://doi.org/10.1103/PhysRevLett.103.150502}
}

@article{childs2017quantum,
  title={Quantum algorithm for systems of linear equations with exponentially improved dependence on precision},
  author={Childs, Andrew M. and Kothari, Robin and Somma, Rolando D.},
  journal={SIAM Journal on Computing},
  volume={46},
  number={6},
  pages={1920--1950},
  year={2017},
  publisher={SIAM},
  url={https://doi.org/10.1137/16M1087072}
}

@article{costa2022optimal,
  title={Optimal scaling quantum linear-systems solver via discrete adiabatic theorem},
  author={Costa, Pedro C.~S. and An, Dong and Sanders, Yuval R. and Su, Yuan and Babbush, Ryan and Berry, Dominic W.},
  journal={PRX Quantum},
  volume={3},
  number={4},
  pages={040303},
  year={2022},
  publisher={APS},
  url={https://doi.org/10.1103/PRXQuantum.3.040303}
}

@article{fang2023time,
  title={Time-marching based quantum solvers for time-dependent linear differential equations},
  author={Fang, Di and Lin, Lin and Tong, Yu},
  journal={Quantum},
  volume={7},
  pages={955},
  year={2023},
  publisher={Verein zur F{\"o}rderung des Open Access Publizierens in den Quantenwissenschaften},
  url={https://doi.org/10.22331/q-2023-03-20-955}
}

@article{an2023linear,
  title={Linear combination of {H}amiltonian simulation for nonunitary dynamics with optimal state preparation cost},
  author={An, Dong and Liu, Jin-Peng and Lin, Lin},
  journal={Physical Review Letters},
  volume={131},
  number={15},
  pages={150603},
  year={2023},
  publisher={APS},
  url={https://doi.org/10.1103/PhysRevLett.131.150603}
}

@article{an2023quantum,
  title={Quantum algorithm for linear non-unitary dynamics with near-optimal dependence on all parameters},
  author={An, Dong and Childs, Andrew M. and Lin, Lin},
  eprint={arXiv:2312.03916},
  journal={Communications in Mathematical Physics},
  volume={407},
  pages={19},
  year={2026},
  url={https://doi.org/10.1007/s00220-025-05509-w}
}

@inproceedings{low2024quantum,
  title={Quantum eigenvalue processing},
  author={Low, Guang Hao and Su, Yuan},
  booktitle={65th Annual IEEE Symposium on Foundations of Computer Science (FOCS)},
  pages={1051--1062},
  year={2024},
  url={https://doi.org/10.1109/FOCS61266.2024.00070}
}

@article{shang2025designing,
  title={Designing a nearly optimal quantum algorithm for linear differential equations via {L}indbladians},
  author={Shang, Zhong-Xia and Guo, Naixu and An, Dong and Zhao, Qi},
  journal={Physical Review Letters},
  volume={135},
  number={12},
  pages={120604},
  year={2025},
  publisher={APS},
  url={https://doi.org/10.1103/cvl9-97qg}
}

@article{low2025optimal,
  title={Optimal quantum simulation of linear non-unitary dynamics},
  author={Low, Guang Hao and Somma, Rolando D.},
  journal={arXiv preprint arXiv:2508.19238},
  year={2025},
  url={https://doi.org/10.48550/arXiv.2508.19238}
}

@article{leyton2008quantum,
  title={A quantum algorithm to solve nonlinear differential equations},
  author={Leyton, Sarah K and Osborne, Tobias J.},
  journal={arXiv preprint arXiv:0812.4423},
  year={2008},
  url={https://doi.org/10.48550/arXiv.0812.4423}
}

@article{berry2014high,
  title={High-order quantum algorithm for solving linear differential equations},
  author={Berry, Dominic W.},
  journal={Journal of Physics A: Mathematical and Theoretical},
  volume={47},
  number={10},
  pages={105301},
  year={2014},
  publisher={IOP Publishing},
  url={https://doi.org/10.1088/1751-8113/47/10/105301}
}

@article{berry2017quantum,
  title={Quantum algorithm for linear differential equations with exponentially improved dependence on precision},
  author={Berry, Dominic W. and Childs, Andrew M. and Ostrander, Aaron and Wang, Guoming},
  journal={Communications in Mathematical Physics},
  volume={356},
  number={3},
  pages={1057--1081},
  year={2017},
  publisher={Springer},
  url={https://doi.org/10.1007/s00220-017-3002-y}
}

@article{krovi2023improved,
  title={Improved quantum algorithms for linear and nonlinear differential equations},
  author={Krovi, Hari},
  journal={Quantum},
  volume={7},
  pages={913},
  year={2023},
  publisher={Verein zur F{\"o}rderung des Open Access Publizierens in den Quantenwissenschaften},
  url={https://doi.org/10.22331/q-2023-02-02-913}
}

@article{lewis2024limitations,
  title={Limitations for quantum algorithms to solve turbulent and chaotic systems},
  author={Lewis, Dylan and Eidenbenz, Stephan and Nadiga, Balasubramanya and Suba{\c{s}}{\i}, Yi{\u{g}}it},
  journal={Quantum},
  volume={8},
  pages={1509},
  year={2024},
  publisher={Verein zur F{\"o}rderung des Open Access Publizierens in den Quantenwissenschaften},
  url={https://doi.org/10.22331/q-2024-10-24-1509}
}

@article{liu2021efficient,
  title={Efficient quantum algorithm for dissipative nonlinear differential equations},
  author={Liu, Jin-Peng and Kolden, Herman {\O}ie and Krovi, Hari and Loureiro, Nuno F. and Trivisa, Konstantina and Childs, Andrew M.},
  journal={Proceedings of the National Academy of Sciences},
  volume={118},
  number={35},
  pages={e2026805118},
  year={2021},
  publisher={National Academy of Sciences},
  url={https://doi.org/10.1073/pnas.2026805118}
}

@article{bravyi2025quantum,
  title={Quantum simulation of a noisy classical nonlinear dynamics},
  author={Bravyi, Sergey and Manson-Sawko, Robert and Zayats, Mykhaylo and Zhuk, Sergiy},
  journal={arXiv preprint arXiv:2507.06198},
  year={2025},
  url={https://doi.org/10.48550/arXiv.2507.06198}
}

@book{smith1995monotone,
  title={Monotone dynamical systems: An introduction to the theory of competitive and cooperative systems},
  author={Smith, Hal L.},
  series={Mathematical Surveys and Monographs},
  volume={41},
  year={1995},
  publisher={AMS},
  url={https://doi.org/10.1090/surv/041}
}

@article{friedlander1997nonlinear,
  title={Nonlinear instability in an ideal fluid},
  author={Friedlander, Susan and Strauss, Walter and Vishik, Misha},
  journal={Annales de l'Institut Henri Poincar{\'e} C, Analyse non lin{\'e}aire},
  volume={14},
  number={2},
  pages={187--209},
  year={1997},
  url={https://doi.org/10.1016/S0294-1449(97)80144-8}
}

@article{polyanin2009nonlinearnavier,
  title={Nonlinear instability of the solutions of the {N}avier-{S}tokes equations: Formulas for constructing exact solutions},
  author={Polyanin, Andrei Dmitrievich},
  journal={Theoretical Foundations of Chemical Engineering},
  volume={43},
  number={6},
  pages={881--888},
  year={2009},
  publisher={Springer},
  url={https://doi.org/10.1134/S0040579509060050}
}

@article{polyanin2009nonlinearhydro,
  title={On the nonlinear instability of the solutions of hydrodynamic-type systems},
  author={Polyanin, Andrei Dmitrievich},
  journal={JETP Letters},
  volume={90},
  number={3},
  pages={217--221},
  year={2009},
  publisher={Springer},
  url={https://doi.org/10.1134/S0021364009150120}
}

@book{drazin2004hydrodynamic,
  title={Hydrodynamic stability},
  author={Drazin, Philip G. and Reid, William Hill},
  year={2004},
  publisher={Cambridge University Press},
  url={https://doi.org/10.1017/CBO9780511616938}
}

@article{abrams1998nonlinear,
  title={Nonlinear quantum mechanics implies polynomial-time solution for {NP}-complete and {{\#{P}}} problems},
  author={Abrams, Daniel S. and Lloyd, Seth},
  journal={Physical Review Letters},
  volume={81},
  number={18},
  pages={3992},
  year={1998},
  publisher={APS},
  url={https://doi.org/10.1103/PhysRevLett.81.3992}
}

@article{childs2016optimal,
  title={Optimal state discrimination and unstructured search in nonlinear quantum mechanics},
  author={Childs, Andrew M. and Young, Joshua},
  journal={Physical Review A},
  volume={93},
  number={2},
  pages={022314},
  year={2016},
  publisher={APS},
  url={https://doi.org/10.1103/PhysRevA.93.022314}
}

@article{aaronson2005np,
      title={{NP}-complete Problems and Physical Reality}, 
      author={Scott Aaronson},
      year={2005},
      journal={ACM SIGACT News},
      url={https://doi.org/10.48550/arXiv.quant-ph/0502072}
}

@article{lloyd1996universal,
  title={Universal quantum simulators},
  author={Lloyd, Seth},
  journal={Science},
  volume={273},
  number={5278},
  pages={1073--1078},
  year={1996},
  publisher={American Association for the Advancement of Science},
  url={https://doi.org/10.1126/science.273.5278.1073}
}

@article{shor1999polynomial,
  title={Polynomial-time algorithms for prime factorization and discrete logarithms on a quantum computer},
  author={Shor, Peter W.},
  journal={SIAM Review},
  volume={41},
  number={2},
  pages={303--332},
  year={1999},
  publisher={SIAM},
  url={https://doi.org/10.1137/S0036144598347011}
}

@article{papageorgiou1991route,
  title={The route to chaos for the {K}uramoto-{S}ivashinsky equation},
  author={Papageorgiou, Demetrios T. and Smyrlis, Yiorgos S.},
  journal={Theoretical and Computational Fluid Dynamics},
  volume={3},
  number={1},
  pages={15--42},
  year={1991},
  publisher={Springer},
  url={https://doi.org/10.1007/BF00271514}
}

@article{smyrlis1991predicting,
  title={Predicting chaos for infinite dimensional dynamical systems: The {K}uramoto-{S}ivashinsky equation, a case study.},
  author={Smyrlis, Yiorgos S. and Papageorgiou, Demetrios T.},
  journal={Proceedings of the National Academy of Sciences},
  volume={88},
  number={24},
  pages={11129--11132},
  year={1991},
  url={https://doi.org/10.1073/pnas.88.24.11129}
}

@incollection{sivashinsky1988nonlinear,
  title={Nonlinear analysis of hydrodynamic instability in laminar flames—{I}. {D}erivation of basic equations},
  author={Sivshinsky, Gregory I.},
  booktitle={Dynamics of Curved Fronts},
  pages={459--488},
  year={1988},
  publisher={Elsevier},
  url={https://doi.org/10.1016/B978-0-08-092523-3.50048-4}
}

@article{michelson1977nonlinear,
  title={Nonlinear analysis of hydrodynamic instability in laminar flames—{II}. {N}umerical experiments},
  author={Michelson, Daniel M. and Sivashinsky, Gregory I.},
  journal={Acta Astronautica},
  volume={4},
  number={11-12},
  pages={1207--1221},
  year={1977},
  publisher={Elsevier},
  url={https://doi.org/10.1016/0094-5765(77)90097-2}
}

@article{kuramoto1978diffusion,
  title={Diffusion-induced chaos in reaction systems},
  author={Kuramoto, Yoshiki},
  journal={Progress of Theoretical Physics Supplement},
  volume={64},
  pages={346--367},
  year={1978},
  publisher={Oxford University Press},
  url={https://doi.org/10.1143/PTPS.64.346}
}

@book{linares2014introduction,
  title={Introduction to nonlinear dispersive equations},
  author={Linares, Felipe and Ponce, Gustavo},
  year={2014},
  publisher={Springer},
  url={https://doi.org/10.1007/978-1-4939-2181-2}
}

@book{strogatz2024nonlinear,
  title={Nonlinear dynamics and chaos: with applications to physics, biology, chemistry, and engineering},
  author={Strogatz, Steven H.},
  year={2024},
  publisher={Chapman and Hall/CRC},
  url={https://doi.org/10.1201/9780429398490}
}

@book{chen1984introduction,
  title={Introduction to plasma physics and controlled fusion},
  author={Chen, Francis F.},
  year={1984},
  publisher={Springer},
  url={https://doi.org/10.1007/978-3-319-22309-4}
}

@article{korteweg1895xli,
  title={{O}n the change of form of long waves advancing in a rectangular canal, and on a new type of long stationary waves},
  author={Korteweg, Diederik Johannes and de Vries, Gustav},
  journal={The London, Edinburgh, and Dublin Philosophical Magazine and Journal of Science},
  volume={39},
  number={240},
  pages={422--443},
  year={1895},
  publisher={Taylor \& Francis}
}

@book{nielsen2010quantum,
  title={Quantum computation and quantum information},
  author={Nielsen, Michael A. and Chuang, Isaac L.},
  year={2010},
  publisher={Cambridge University Press},
  url={https://doi.org/10.1017/CBO9780511976667}
}

@article{costa2025further,
  title={Further improving quantum algorithms for nonlinear differential equations via higher-order methods and rescaling},
  author={Costa, Pedro C.~S. and Schleich, Philipp and Morales, Mauro E.~S. and Berry, Dominic W.},
  journal={npj Quantum Information},
  volume={11},
  number={1},
  pages={141},
  year={2025},
  publisher={Nature Publishing Group UK London},
  url={https://doi.org/10.1038/s41534-025-01084-z}
}

@article{wu2025quantum,
  title={Quantum algorithms for nonlinear dynamics: Revisiting {C}arleman linearization with no dissipative conditions},
  author={Wu, Hsuan-Cheng and Wang, Jingyao and Li, Xiantao},
  journal={SIAM Journal on Scientific Computing},
  volume={47},
  number={2},
  pages={A943--A970},
  year={2025},
  publisher={SIAM},
  url={https://doi.org/10.1137/24M1665799}
}

@article{jennings2025quantum,
  title={Quantum algorithms for general nonlinear dynamics based on the {C}arleman embedding},
  author={Jennings, David and Korzekwa, Kamil and Lostaglio, Matteo and Sornborger, Andrew T. and Suba{\c{s}}{\i}, Yi{\u{g}}it and Wang, Guoming},
  journal={arXiv preprint arXiv:2509.07155},
  year={2025},
  url={https://doi.org/10.48550/arXiv.2509.07155}
}

@book{pope2000turbulent,
  title={Turbulent Flows},
  author={Pope, Stephen B},
  year={2000},
  publisher={Cambridge University Press},
  url={https://doi.org/10.1017/CBO9781316179475}
}

@article{mitarai2019quantum,
  title={Quantum analog-digital conversion},
  author={Mitarai, Kosuke and Kitagawa, Masahiro and Fujii, Keisuke},
  journal={Physical Review A},
  volume={99},
  number={1},
  pages={012301},
  year={2019},
  publisher={APS},
  url={https://doi.org/10.1103/PhysRevA.99.012301}
}

@book{drazin1989solitons,
  title={Solitons: an introduction},
  author={Drazin, Philip G and Johnson, Robin Stanley},
  volume={2},
  year={1989},
  publisher={Cambridge university press},
  url={https://doi.org/10.1017/CBO9781139172059}
}

@book{bahouri2011fourier,
  title={Fourier analysis and nonlinear partial differential equations},
  author={Bahouri, Hajer and Chemin, Jean-Yves and Danchin, Raphaël},
  year={2011},
  publisher={Springer},
  url={https://doi.org/10.1007/978-3-642-16830-7}
}

@book{chemin1998perfect,
  title={Perfect incompressible fluids},
  author={Chemin, Jean-Yves},
  year={1998},
  publisher={Oxford University Press},
  url={https://doi.org/10.1093/oso/9780198503972.001.0001}
}

@book{jardin2010computational,
  title={Computational methods in plasma physics},
  author={Jardin, Stephen},
  year={2010},
  publisher={CRC press},
  url={https://doi.org/10.1201/EBK1439810958}
}

\appendix
\section{Proof of Lemma~\ref{lem:laplace}} \label{app:laplace}
\laplace*
\begin{proof}
    First, let us decompose $\phi$ into a constant $\tilde \phi$ and a zero-mean part $\phi'$. In other words,
    \begin{equation}
        \phi = \tilde \phi + \phi',
    \end{equation}
    such that $\nabla^2 \tilde \phi = 0$, $\nabla^2 \phi' = f$, and
    \begin{align}
        \int_\Omega \phi' \, d\Omega = 0.
    \end{align}
    We write $\phi'$ and $f$ in the Fourier basis. Let $\mathbf r\coloneq (x_1,\dots, x_d) \in \mathbb T^d$. Then
    \begin{align}
        f = \sum_{\mathbf{j}\in\mathbb Z^{d}} c_{\mathbf j} e^{i\mathbf{j}\cdot \mathbf r} \\
        \phi' = \sum_{\mathbf j\in\mathbb Z^d} b_{\mathbf j}e^{i\mathbf{j}\cdot \mathbf r}.
    \end{align}
    In the Fourier basis, the Laplacian operator's action on $\phi$ yields
    \begin{align}
        \nabla ^2 \phi' =- \sum_{\mathbf j \in \mathbb Z^d}\|\mathbf{j}\|^2 b_{\mathbf j} e^{i\mathbf j \cdot \mathbf r}
    \end{align}
    Since $\phi'$ has zero mean, $b_0 = 0$. Then, solving for $b_{\mathbf j}$ gives:
    \begin{equation}
        b_{\mathbf j} = \begin{cases}
            0, & \mathbf j = 0 \\
            -\frac{1}{\|\mathbf j\|^2} c_{\mathbf j}, & \mathbf j \neq 0
        \end{cases}\ .
    \end{equation}
    Evaluating $\|\phi'\|_{L^2}$ gives
    \begin{align}
        \|\phi'\|_{L^2}^2 &= \sum_{j\in \mathbb Z^d} |b_{\mathbf j}|^2 \\
        &= \sum_{j \in \mathbb Z^d\setminus \{0\}} \frac{1}{\|\mathbf j\|^4}|c_\mathbf j|^2 \\
        &\leq \sum_{j\in \mathbb Z^d} |c_{\mathbf j}|^2 \\
        &= \|f\|^2_{L^2}.
    \end{align}
    As $\tilde \phi$ satisfies $\nabla^2 \tilde \phi = 0$, $\phi$ satisfies Eq.~\eqref{eq:laplace}. Putting things together, we get
    \begin{align}
        \|\phi\|_{L^2} &\leq \|\tilde \phi \|_{L^2} + \|\phi'\|_{L^2} \\
        &\leq \tilde \phi |\Omega|^{1/2} + \|f\|_{L^2}.
    \end{align}
    Setting $\bar \phi \coloneq \tilde \phi |\Omega|^{1/2}$ completes the proof.
\end{proof}

\section{Proofs of Lemmas~\ref{fact:euc-tr-rel} and \ref{fact:tr-prod-rel}}
\label{app:std-proofs}

\euclidtrace*
\begin{proof}
Recall that for pure states $\ket{\psi}, \ket{\phi}$ we have $D(\rho_\psi, \rho_\phi) = \sqrt{1-\abs{\braket{\psi \, | \, \phi}}^2}$. Letting $\alpha = \abs{\braket{\psi \, | \, \phi}}$ denote the absolute value of the overlap, $d=d(\ket{\psi}, \ket{\phi})$, and $D =D(\rho_\psi, \rho_\phi)$ we have \begin{align}
    \alpha &= 1-d^2/2 \\
    D &= \sqrt{1-(1-d^2/2)^2} \\
    &= d\sqrt{1-d^2/4} \\
    \frac{1}{\sqrt{2}} &\leq \sqrt{1-d^2/4} \leq 1,
\end{align}
where the last line follows from $0 \leq d \leq \sqrt{2}$, establishing the claim.
\end{proof}

\kcopies*
\begin{proof}
Observe that overlap is multiplicative over copies, i.e., $\abs{\bra{\psi}^{\otimes k} \ket{\psi}^{\otimes k}} = \alpha^k.$ Writing $\alpha=1-\epsilon$, we have
\begin{align}
    D(\rho_\psi, \rho_\phi) = \sqrt{1-\alpha^2} = \sqrt{2\epsilon - \epsilon^2}.
\end{align}
On the other hand, we have \begin{align}
    D(\rho_\psi^{\otimes k}, \rho_\phi^{\otimes k}) &=\sqrt{1-\alpha^{2k}} \\
    &= \sqrt{1-((1-\epsilon)^k)^2} \\
    &\leq \sqrt{1-(1-k\epsilon)^2} \\
    &= \sqrt{2k\epsilon - k^2 \epsilon^2} \\
    &= \sqrt{k} \cdot \sqrt{2\epsilon-k\epsilon^2} \\
    &\leq \sqrt{k} \cdot \sqrt{2\epsilon - \epsilon^2},
\end{align}
showing the claim.
\end{proof}

\section{Proof of \cref{thm:stability}}  \label{app:proof_stability}
\stability*
\begin{proof}
    \begin{enumerate}
        \item Recall that in Section~\ref{subsec:linear-stab-analysis}, we obtained the recurrence relation
    \begin{equation} \label{eq:recurrence-relation}
        \frac{2\gamma}{kU_0} (j^2 + k^2)c_j + (k^2-m^2+(j-m)^2)c_{j-m} - (k^2 - m^2 + (j+m)^2) c_{j+m} = 0
    \end{equation}
    for the Fourier coefficients $\{c_j\}$ of the stream function of the perturbation with $j\in \mathbb Z$. Using this as our starting point, let
    \begin{equation}
        a_j \coloneq \frac{2\gamma}{kU_0} \frac{j^2 + k^2}{j^2 + k^2 - m^2}.
    \end{equation}
    Note that $a_j = a_{-j}$ The denominator of $a_j$ is non-vanishing due to the constraint of the theorem. Furthermore, let
    \begin{equation}
        d_j \coloneq c_j (j^2 + k^2 - m^2).
    \end{equation}
    Using this, we can write Eq.~\eqref{eq:recurrence-relation} as
    \begin{equation} \label{eq:d}
        d_{j+m} = a_j d_j + d_{j-m}.
    \end{equation}
    For $j \in m\mathbb{Z}$, we construct a sequence $\{d_j\}$. We set $d_j = 0$ for $j \not\equiv 0 \ (\mathrm{mod} \ m)$. Let
    \begin{align}
        \rho_j &= \frac{d_j}{d_{j-m}}, \ j>0 \label{eq:rho} \\
        \tilde{\rho}_j &= \frac{d_{j-m}}{d_j}, \ j\leq 0. \label{eq:rhotilde}
    \end{align}
    From Eqs.~\eqref{eq:d} and~\eqref{eq:rho}, we obtain
    \begin{align}
        \rho_m &= \frac{-1}{a_m - \rho_{2m}} \\
        &= \frac{-1}{a_m + \frac{1}{a_{2m}-\rho_{3m}}} \\
        &= \frac{-1}{a_m + \frac{1}{a_{2m}+\frac{1}{a_{3m}+\dots}}} \\
        &\eqqcolon \frac{-1}{[a_m,a_{2m},\dots]}
    \end{align}
    for $\gamma$ real and positive. We extend this definition for all $p\in\mathbb N$:
    \begin{equation}
        \rho_{pm} = \rho_{pm}(\gamma) = \frac{-1}{[a_{pm},a_{(p+1)m},\dots]}
    \end{equation}
    for $\gamma$ real and positive. Note that the continued fraction is convergent as all elements are positive and bounded away from zero.

    Furthermore, from Eqs.~\eqref{eq:d} and~\eqref{eq:rhotilde} we obtain
    \begin{align}
        \tilde \rho_0 &= \frac{1}{a_{-m} + \tilde \rho_{-m}} \\
        &= \frac{1}{a_{-m} + \frac{1}{a_{-2m} + \rho_{-2m}}} \\
        &= \frac{1}{[a_{-m},a_{-2m},\dots]}
    \end{align}
    for $\gamma$ real and positive. Again, for $p \in \mathbb N_0$, we let
    \begin{equation}
       \tilde \rho_{-pm} = \tilde \rho_{-pm}(\gamma) = \frac{1}{[a_{-(p+1)m},a_{-(p+2)m},\dots]}
    \end{equation}
    for $\gamma$ real and positive, which is also convergent.

    Now we turn our attention to Eq.~\eqref{eq:d} for $j = 0$, which we write as
    \begin{equation} \label{eq:j-zero-relation}
        \rho_m = a_0 + \tilde{\rho}_{0}.
    \end{equation}
    It is easy to see that $\rho_m = -\tilde\rho_0$. Thus, we can rearrange Eq.~\eqref{eq:j-zero-relation} to obtain
    \begin{equation}
        -\rho_m (\gamma)= \frac{1}{[a_m,a_{2m},\dots]} = -\frac{a_0}{2}.
    \end{equation}
    We can bound $-a_0/2$ using the convergents of $-\rho_m$:
    \begin{equation}
        g(\gamma) \coloneq \frac{1}{a_{m}+\frac{1}{a_{2m}}} < -\frac{a_0}{2} < \frac{1}{a_{m}} \eqqcolon f(\gamma).
    \end{equation}
    The functional forms are as follows:
    \begin{align}
        -\frac{a_0(\gamma)}{2} &= \frac{\gamma}{U_0} \frac{k}{m^2-k^2} \\
        f(\gamma) &= \frac{kU_0}{2\gamma} \frac{k^2}{m^2+k^2} \\
        g(\gamma) &= \frac{1}{\frac{2\gamma}{kU_0} \frac{m^2 + k^2}{k^2} + \frac{kU_0}{2\gamma} \frac{3m^2+k^2}{4m^2+k^2}}.
    \end{align}
    Figure~\ref{fig:bounds} shows plots of these functions for different values of $k$. The intersection of $-a_0/2$ with $f$ and $g$ is the upper and lower bounds for the growth rate $\gamma$, which we denote by $\gamma_u$ and $\gamma_l$, respectively. We see that for smaller values of $k$, $\gamma_l>0$, so we are guaranteed to have an instability. However, for larger values of $k$, $\gamma_l = 0$, which is not sufficient to guarantee an instability.

    \begin{figure*}[ht!]
        \centering
        \begin{subfigure}[t]{0.49\textwidth}
            \centering
            \includegraphics[width=0.95\textwidth]{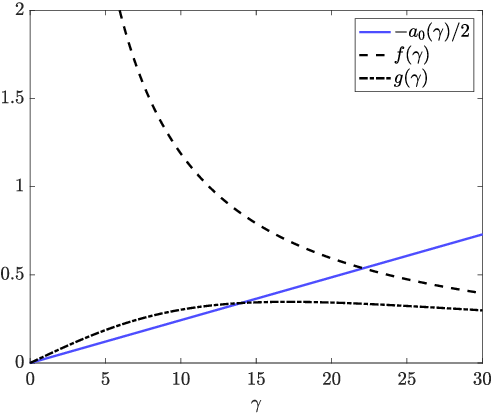}
            \caption{$k=63$.}
            \label{fig:1}
        \end{subfigure}%
        ~ 
        \begin{subfigure}[t]{0.49\textwidth}
            \centering
            \includegraphics[width=0.95\textwidth]{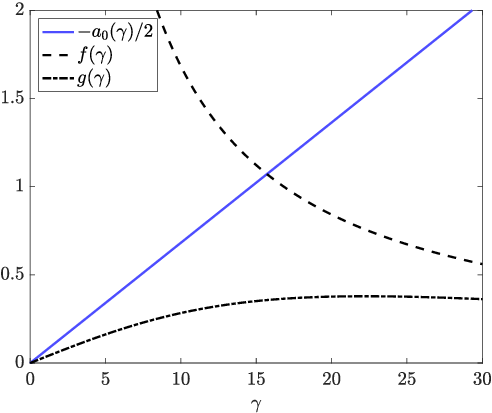}
            \caption{$k=74$.}
            \label{fig:2}
        \end{subfigure}
        \caption{Upper and lower bounds on the growth rate, with $U_0 = 1$ and $m=81$. When $k > k_{\mathrm{cutoff}} \approx 69$, the lower bound for the growth rate is $0$, preventing us from showing the existence of an instability.}
        \label{fig:bounds}
    \end{figure*}
    Formalizing this, we are guaranteed to have a positive $\gamma$ as long as the slope of $g(\gamma)$ is greater than that of $-a_0(\gamma)/2$ when $\gamma = 0$. In other words, $g'(0) > -\frac{a_0'(0)}{2}$, giving the  inequality
    \begin{equation}
        \frac{d}{d\gamma} \frac{\gamma}{kU_0} \frac{k^2}{m^2-k^2} \Bigg |_{\gamma = 0} < \frac{d}{d\gamma } \frac{1}{\frac{2\gamma}{kU_0} \frac{m^2+k^2}{k^2} + \frac{kU_0}{2\gamma} \frac{3m^2 + k^2}{4m^2 + k^2}} \Bigg |_{\gamma = 0}.
    \end{equation}
    Solving this inequality gives
    \begin{equation} \label{eq:inequality}
        k < k_{\mathrm{cutoff}} \coloneq \sqrt{\frac{\sqrt{177}-9}{6}}m \approx 0.85 m.
    \end{equation}
    \item The growth rate $\gamma$ of the instability can be upper- and lower-bounded using the inequalities
    \begin{equation}
        \frac{1}{\frac{2\gamma}{kU_0} \frac{m^2+k^2}{k^2} + \frac{kU_0}{2\gamma} \frac{3m^2 + k^2}{4m^2 + k^2}} < \frac{\gamma}{kU_0} \frac{k^2}{m^2-k^2} < \frac{kU_0}{2\gamma} \frac{k^2}{m^2+k^2},
    \end{equation}
    solving for $\gamma$ gives
    \begin{equation}
        \gamma_l < \gamma < \gamma_u,
    \end{equation}
    where
    \begin{align}
        \gamma_u &\coloneq U_0 k\sqrt{\frac{1}{2}\frac{m^2 - k^2}{m^2+k^2}}  \\
        \gamma_l &\coloneq U_0k \sqrt{\frac{1}{2} \frac{8m^4-9m^2k^2-3k^4}{(m^2+k^2)(8m^2+2k^2)}}.
    \end{align}
    \item Note that
    \begin{equation}
        \lim_{j\rightarrow \infty}a_{j} = \frac{2\gamma}{kU_0}.
    \end{equation}
    We see that
    \begin{align}
        \rho_\infty \coloneq \lim_{p\rightarrow \infty}\rho_{pm} &=  \frac{\gamma}{kU_0} - \sqrt{1+\frac{\gamma^2 }{k^2U_0^2}} \\
        \lim_{p\rightarrow \infty}\tilde \rho_{pm} &= -\frac{\gamma}{kU_0} + \sqrt{1+\frac{\gamma^2 }{k^2U_0^2}} = -\rho_\infty.
    \end{align}
    Furthermore, for real and positive $\gamma$,
    \begin{equation} \label{eq:rhoinfty}
       -1< \rho_\infty < 0.
    \end{equation}
    To construct the eigenfunction, we pick the sequence
    \begin{align}
        d_0 &= \Delta \label{eq:Delta} \\
        d_{pm} &= \prod_{n = 1}^{p}\rho_{nm} \\
        d_{-pm} &= \prod_{n = 0}^{p-1} \tilde{\rho}_{-nm},
    \end{align}
    where $\Delta$ is an arbitrary constant. The above sequence automatically satisfies Eq.~\eqref{eq:d}, and $d_{pm}, \, d_{-pm} \rightarrow 0$ exponentially because of Eq.~\eqref{eq:rhoinfty}, meaning that $\psi$ is smooth.
    
    We now determine the Fourier coefficients $\{b_j\}$ of the perturbed pressure $\tilde p$ and show that they also decay exponentially, yielding a smooth profile.
    
    We first construct the velocity perturbation, $\tilde{\mathbf{v}}$, via Eq.~\eqref{eq:stream-function-def}:
\begin{align}
    \tilde v_x &= -\frac{\partial \psi}{\partial y} \\
    &= -ie^{\gamma t + ikx} \sum_{j\in \mathbb Z} jc_je^{ijy} ,
\end{align}
and
\begin{align}
    \tilde{v}_y &= \frac{\partial \psi}{\partial x} \\
    &= ike^{\gamma t +ikx} \sum_{j\in \mathbb Z} c_j e^{ijy}.
\end{align}
The pressure perturbation is determined by taking the divergence of Eq.~\eqref{eq:lin-euler}, which gives
\begin{equation} \label{eq:pert-pressure-laplace}
    -\nabla^2 \tilde p = \nabla \cdot \left [(\mathbf{u}_0\cdot \nabla) \tilde{\mathbf{v}} + (\tilde{\mathbf{v}}\cdot \nabla) \mathbf{u}_0 \right ].
\end{equation}
Since the right-hand side is known, we obtain a Laplace equation, which we can solve to obtain $\tilde{p}$. We can write the left-hand side of Eq.~\eqref{eq:pert-pressure-laplace} as
\begin{equation}
    -\nabla^2 \tilde p = e^{\gamma t + ikx}\sum_{j\in \mathbb Z}(k^2+j^2) b_je^{ijy}.
\end{equation}
The right-hand side can be calculated as follows:
\begin{align}
    \nabla \cdot \left [(\mathbf{u}_0\cdot \nabla) \tilde{\mathbf{v}} + (\tilde{\mathbf{v}}\cdot \nabla) \mathbf{u}_0 \right ] &= \nabla \cdot \begin{bmatrix}
        U_0 \sin(my) \partial_x\tilde{v}_x +\tilde{v}_y\partial_y U_0 \sin(my) \\
         U_0 \sin(my) \partial_x\tilde{v}_y
    \end{bmatrix}\\
    &= U_0 \left [ \sin(my) \partial_{xx}\tilde v_x +  2m\cos(my) \partial_{x} \tilde v_y + \sin(my) \partial_{xy} \tilde v_y \right] \\
    &= -U_0 e^{\gamma t +ikx} \sum_{j \in \mathbb Z} 2k^2 m \cos(my)  c_j e^{ijy} \\
    &= -U_0 e^{\gamma t +ikx} \sum_{j\in \mathbb Z} k^2 m (c_{j-m} + c_{j+m}) e^{ijy}.
\end{align}
Plugging the results into Eq.~\eqref{eq:pert-pressure-laplace}, we obtain the following expression for the $b_j$ coefficients:
\begin{equation} \label{eq:pressure-coefficients}
    b_j = -U_0 \frac{mk^2}{j^2+k^2} (c_{j-m} + c_{j+m}).
\end{equation}
Since the $c_j$ coefficients decay exponentially, it is easy to see that the $b_j$ coefficients also decay exponentially. Thus, the profile for $\tilde p$ is smooth.
\item Since we have freedom in choosing $\Delta$ in Eq.~\eqref{eq:Delta}, we choose it so $\|\tilde{\mathbf{v}}\|_{L^2}=\aleph$ for some $\aleph >0$. It is easy to see, from Eq.~\eqref{eq:pressure-coefficients} that $\|\tilde p\|_{L^2} = q\aleph$ for some $q>0$.
    \end{enumerate}

\end{proof}

\section{Proofs of Lemmas~\ref{lem:error_bound} and~\ref{lem:vort_error_bound}} \label{app:error_bound}
\errorbound*
\begin{proof}
Consider the PDE for the error, which we have rewritten below for the sake of convenience:
\begin{gather}
    \frac{\partial \boldsymbol{\eta}}{\partial t} = -\nabla \left(\tilde{\mathbf {u}} \cdot \boldsymbol{\eta} + \frac{1}{2} \boldsymbol{\eta}\cdot \boldsymbol{\eta} + \eta_p \right) + \boldsymbol{\eta} \times (\nabla \times \tilde{\mathbf{u}} ) + \tilde{\mathbf{u}}\times(\nabla \times \boldsymbol{\eta}) + \boldsymbol{\eta}\times (\nabla \times \boldsymbol{\eta}) - \epsilon^2(\tilde{\mathbf{v}}\cdot \nabla) \tilde{\mathbf{v}} \label{eq:error_appendix}\\
    \nabla \cdot \boldsymbol{\eta} = 0 .\label{eq:error_div_appendix}
\end{gather}
The equation for the norm of the error can be found as follows:
\begin{align}
    \frac{d \|\boldsymbol{\eta}\|_{L^2}^2}{dt} &= \frac{d}{dt} \int_{\mathbb T^2} \boldsymbol{\eta}\cdot \boldsymbol{\eta}  \, d^2 \mathbf r \\
    &= 2 \int_{\mathbb T^2} \boldsymbol{\eta}\cdot \frac{\partial \boldsymbol{\eta}}{\partial t} d^2 \mathbf r \\
    &= 2\int_{\mathbb T^2} \boldsymbol{\eta}\cdot \left [ -\nabla \left(\tilde{\mathbf {u}} \cdot \boldsymbol{\eta} + \frac{1}{2} \boldsymbol{\eta}\cdot \boldsymbol{\eta}+\eta_p \right) + \tilde{\mathbf{u}}\times(\nabla \times \boldsymbol{\eta}) - \epsilon^2(\tilde{\mathbf{v}}\cdot \nabla) \tilde{\mathbf{v}} \right] d^2 \mathbf r,
\end{align}
where the second and fourth term of Eq.~(\ref{eq:error_appendix}) vanish due to orthogonality. Using vector calculus identities, we can rewrite the first term in the integral as
\begin{align}
    \boldsymbol{\eta}\cdot \nabla \left(\tilde{\mathbf {u}} \cdot \boldsymbol{\eta} + \frac{1}{2} \boldsymbol{\eta}\cdot \boldsymbol{\eta}+\eta_p \right) &= \nabla\cdot  \left ( \left [\tilde{\mathbf {u}} \cdot \boldsymbol{\eta} + \frac{1}{2} \boldsymbol{\eta}\cdot \boldsymbol{\eta}+\eta_p  \right] \boldsymbol{\eta} \right ) - \left (\tilde{\mathbf {u}} \cdot \boldsymbol{\eta} + \frac{1}{2} \boldsymbol{\eta}\cdot \boldsymbol{\eta}+\eta_p \right )\nabla \cdot \boldsymbol{\eta} \\
    &= \nabla\cdot  \left ( \left [\tilde{\mathbf {u}} \cdot \boldsymbol{\eta} + \frac{1}{2} \boldsymbol{\eta}\cdot \boldsymbol{\eta}+\eta_p  \right] \boldsymbol{\eta} \right ) , \label{eq:div_error}
\end{align}
where we have used the incompressibility of the error to remove the second term on the right-hand side. Integrating Eq.~(\ref{eq:div_error}) yields zero due to the divergence theorem, since the domain, $\mathbb T^2$, has no boundary. Thus, we obtain
\begin{align}
    \frac{d \|\boldsymbol{\eta}\|_{L^2}^2}{dt} &= 2 \int_{\mathbb T^2} \boldsymbol{\eta} \cdot \left [-\epsilon^2(\tilde{\mathbf{v}}\cdot \nabla) \tilde{\mathbf{v}}  + \tilde{\mathbf{u}}\times(\nabla \times \boldsymbol{\eta})   \right] d^2 \mathbf r.
\end{align}
This implies the following differential inequality:
\begin{align}
     \frac{d \|\boldsymbol{\eta}\|_{L^2}^2}{dt} &\leq 2 \left | \int_{\mathbb T^2} \boldsymbol{\eta} \cdot \left [-\epsilon^2(\tilde{\mathbf{v}}\cdot \nabla) \tilde{\mathbf{v}}  + \tilde{\mathbf{u}}\times(\nabla \times \boldsymbol{\eta})   \right] d^2 \mathbf r  \right | \\
     &\leq 2 \int_{\mathbb T^2} \left \| \boldsymbol{\eta} \cdot \left [-\epsilon^2(\tilde{\mathbf{v}}\cdot \nabla) \tilde{\mathbf{v}}  + \tilde{\mathbf{u}}\times(\nabla \times \boldsymbol{\eta})   \right]  \right \| \, d^2 \mathbf r \\
     &\leq 2\int_{\mathbb T^2} \|\boldsymbol{\eta}\| \ \|\epsilon^2(\tilde{\mathbf{v}}\cdot \nabla )\tilde{\mathbf{v}}\| \, d^2 \mathbf r+ 2\int_{\mathbb T^2} \|\boldsymbol{\eta}\| \ \|\tilde{\mathbf{u}}\times (\nabla \times \boldsymbol{\eta})\| \, d^2 \mathbf r  \\
     &\leq  2\int_{\mathbb T^2} \|\boldsymbol{\eta}\| \ \|\epsilon^2(\tilde{\mathbf{v}}\cdot \nabla )\tilde{\mathbf{v}}\| \, d^2 \mathbf r+ 2\int_{\mathbb T^2} \|\boldsymbol{\eta}\| \ \|\tilde{\mathbf{u}}\| \ \| \nabla \times \boldsymbol{\eta}\| \, d^2 \mathbf r.
     \end{align}
     Using Hölder's inequality, we obtain
     \begin{align}
         \frac{d \|\boldsymbol{\eta}\|_{L^2}^2}{dt} &\leq 2\left ( \int_{\mathbb T^2} \|\boldsymbol{\eta}\|^2 d^2 \mathbf r \right)^{1/2} \left [ \epsilon^2\left( \int_{\mathbb T^2} \|(\tilde{\mathbf{v}}\cdot \nabla )\tilde{\mathbf{v}}\|^2 d^2 \mathbf r \right)^{1/2} + \left (\int_{\mathbb T^2} \|\tilde{\mathbf{u}}\|^2 \ \|\nabla \times \boldsymbol{\eta}\|^2 d^2 \mathbf r \right)^{1/2}  \right ] \\
        &=2\|\boldsymbol{\eta}\|_{L^2} \left [\epsilon^2\|(\tilde{\mathbf{v}}\cdot \nabla)\tilde{\mathbf{v}}\|_{L^2} + \left (\int_{\mathbb T^2} \|\tilde{\mathbf{u}}\|^2 \ \|\nabla \times \boldsymbol{\eta}\|^2 d^2 \mathbf r \right)^{1/2}  \right ] \\
        &\leq 2\|\boldsymbol{\eta}\|_{L^2} \left [\epsilon^2 \|(\tilde{\mathbf{v}}\cdot \nabla)\tilde{\mathbf{v}}\|_{L^2} + \|\tilde{\mathbf{u}} \|_{L^\infty} \|\nabla \times \boldsymbol{\eta}\|_{L^2}  \right ].
     \end{align}
\end{proof}

\vorticity*
\begin{proof}
Recall \cref{eq:error_appendix,eq:error_div_appendix} for the error. We can modify them slightly using the vector calculus identity
\begin{equation}
    \frac{1}{2}\nabla(\mathbf{A}\cdot\mathbf{A}) = (\mathbf{A}\cdot \nabla )\mathbf A + \mathbf{A}\times(\nabla \times \mathbf A)
\end{equation}
to obtain
\begin{gather}
    \frac{\partial \boldsymbol{\eta}}{\partial t} = -\nabla \left(\tilde{\mathbf {u}} \cdot \boldsymbol{\eta} + \frac{1}{2}( \boldsymbol{\eta}\cdot \boldsymbol{\eta} + \epsilon^2\tilde{\mathbf{v}}\cdot\tilde{\mathbf{v}}) +\eta_p \right)+ \boldsymbol{\eta} \times (\nabla \times \tilde{\mathbf{u}} ) +\tilde{\mathbf{u}}\times(\nabla \times \boldsymbol{\eta}) + \boldsymbol{\eta}\times (\nabla \times \boldsymbol{\eta}) + \epsilon^2\tilde{\mathbf{v}}\times (\nabla\times\tilde{\mathbf{v}}) \\
    \nabla \cdot \boldsymbol{\eta} = 0.
\end{gather}
Let $\boldsymbol{\omega}_{\boldsymbol{\eta}}\coloneq  \nabla \times \boldsymbol{\eta}$ be the vorticity of the error. Let us also define $\boldsymbol{\omega}_{\tilde{\mathbf{u}}} \coloneq \nabla \times \tilde{\mathbf{u}}$ and $\boldsymbol{\omega}_{\tilde{\mathbf{v}}} \coloneq \nabla \times \tilde{\mathbf v}$. Note that all these quantities are incompressible by definition. The equation for the vorticity of the error is
\begin{align}
    \frac{\partial \boldsymbol{\omega}_{\boldsymbol{\eta}}}{\partial t} &= \nabla \times \left [ \boldsymbol{\eta} \times (\nabla \times \tilde{\mathbf{u}} ) +\tilde{\mathbf{u}}\times(\nabla \times \boldsymbol{\eta}) + \boldsymbol{\eta}\times (\nabla \times \boldsymbol{\eta}) + \epsilon^2\tilde{\mathbf{v}}\times (\nabla\times\tilde{\mathbf{v}}) \right ] \\
    &= \nabla \times \left [ \boldsymbol{\eta} \times \boldsymbol{\omega}_{\tilde{\mathbf{u}}} +\tilde{\mathbf{u}}\times\boldsymbol{\omega}_{\boldsymbol{\eta}} + \boldsymbol{\eta}\times \boldsymbol{\omega}_{\boldsymbol{\eta}} + \epsilon^2 \tilde{\mathbf{v}}\times \boldsymbol{\omega}_{\tilde{\mathbf{v}}} \right ] \\
    &= (\boldsymbol{\omega}_{\tilde{\mathbf{u}}}\cdot \nabla)\boldsymbol{\eta} -(\boldsymbol{\eta}\cdot \nabla) \boldsymbol{\omega}_{\tilde{\mathbf u}} +(\boldsymbol{\omega}_{\boldsymbol{\eta}}\cdot \nabla)\tilde{\mathbf u} - (\tilde{\mathbf{u}}\cdot \nabla) \boldsymbol{\omega}_{\boldsymbol{\eta}} \label{eq:vort_line_1} \\
    &\quad +(\boldsymbol{\omega}_{\boldsymbol{\eta}}\cdot \nabla)\boldsymbol{\eta} - (\boldsymbol{\eta}\cdot\nabla)\boldsymbol{\omega}_{\boldsymbol{\eta}} + \epsilon^2(\boldsymbol{\omega}_{\tilde{\mathbf{v}}}\cdot \nabla)\tilde{\mathbf{v}} - \epsilon^2(\tilde{\mathbf{v}}\cdot \nabla)\boldsymbol{\omega}_{\tilde{\mathbf{v}}} \label{eq:vort_line_2} \\
    &=-(\boldsymbol{\eta}\cdot \nabla) \boldsymbol{\omega}_{\tilde{\mathbf u}} -(\tilde{\mathbf{u}}\cdot \nabla) \boldsymbol{\omega}_{\boldsymbol{\eta}} - (\boldsymbol{\eta}\cdot\nabla)\boldsymbol{\omega}_{\boldsymbol{\eta}} - \epsilon^2(\tilde{\mathbf{v}}\cdot \nabla)\boldsymbol{\omega}_{\tilde{\mathbf{v}}},
\end{align}
where we have used the vector calculus identity
\begin{equation}
    \nabla \times (\mathbf A \times \mathbf B) = \mathbf A(\nabla \cdot \mathbf B) - \mathbf B(\nabla \cdot \mathbf A) + (\mathbf B \cdot \nabla ) \mathbf A - (\mathbf A \cdot \nabla ) \mathbf B,
\end{equation}
as well as the incompressibility conditions. We have also used the fact that the vorticities are in the $\mathbf{\hat{k}}$-direction, so the first and third terms of \cref{eq:vort_line_1,eq:vort_line_2} vanish. We take advantage of the orientation of the vorticities to define
\begin{align}
    \omega_{\boldsymbol{\eta}} &\coloneq (\nabla \times \boldsymbol{\eta}) \cdot \mathbf{\hat{k}} \\
    \omega_{{\tilde{\mathbf{u}}}} &\coloneq (\nabla \times \tilde{\mathbf{u}}) \cdot \mathbf{\hat{k}} \\
    \omega_{{\tilde{\mathbf{v}}}} &\coloneq (\nabla \times \tilde{\mathbf{v}}) \cdot \mathbf{\hat{k}},
\end{align}
which gives us a scalar PDE:
\begin{align}
    \frac{\partial \omega_{\boldsymbol{\eta}}}{\partial t} = -(\boldsymbol{\eta}\cdot \nabla) \omega_{\tilde{\mathbf u}} - (\tilde{\mathbf{u}}\cdot \nabla) \omega_{\boldsymbol{\eta}} - (\boldsymbol{\eta}\cdot\nabla)\omega_{\boldsymbol{\eta}} - \epsilon^2(\tilde{\mathbf{v}}\cdot \nabla)\omega_{\tilde{\mathbf{v}}}.
\end{align}
Now let us find a bound for $\|\omega_{\boldsymbol{\eta}}\|_{L^2}$:
\begin{align}
    \frac{d \|\omega_{\boldsymbol{\eta}}\|_{L^2}^2}{dt} &= \frac{d}{dt}\int_{\mathbb T^2} \omega_{\boldsymbol{\eta}}^2 d^2 \mathbf r \\
    &= 2\int_{\mathbb T^2} \omega_{\boldsymbol{\eta}} \frac{\partial \omega_{\boldsymbol{\eta}}}{\partial t} d^2 \mathbf r \\
    &= -2\int_{\mathbb T^2} \omega_{\boldsymbol{\eta}} \left [ (\boldsymbol{\eta}\cdot \nabla) \omega_{\tilde{\mathbf u}} + (\tilde{\mathbf{u}}\cdot \nabla) \omega_{\boldsymbol{\eta}} + (\boldsymbol{\eta}\cdot\nabla)\omega_{\boldsymbol{\eta}} + \epsilon^2(\tilde{\mathbf{v}}\cdot \nabla)\omega_{\tilde{\mathbf{v}}}\right] d^2 \mathbf r  \\
    &= -2 \int_{\mathbb T^2} \omega_{\boldsymbol{\eta}} (\boldsymbol{\eta}\cdot \nabla) \omega_{\tilde{\mathbf u}} + \omega_{\boldsymbol{\eta}} (\mathbf{u}\cdot \nabla) \omega_{\boldsymbol{\eta}} + \epsilon^2 \omega_{\boldsymbol{\eta}}(\tilde{\mathbf{v}}\cdot \nabla)\omega_{\tilde{\mathbf{v}}} \, d^2 \mathbf r .
\end{align}
We can rewrite the second term on the right-hand side as
\begin{align}
    \omega_{\boldsymbol{\eta}} (\mathbf{u}\cdot \nabla) \omega_{\boldsymbol{\eta}} &= \frac{1}{2}\mathbf{u}\cdot \nabla \omega_{\boldsymbol{\eta}}^2 \\
    &= \frac{1}{2} \nabla \cdot (\omega_{\boldsymbol{\eta}}^2 \mathbf u) - \frac{1}{2} \omega_{\boldsymbol{\eta}}^2 \nabla \cdot \mathbf u \\
    &= \frac{1}{2} \nabla \cdot (\omega_{\boldsymbol{\eta}}^2 \mathbf u).
\end{align}
Integrating this term yields zero due to the divergence theorem. Thus, we find
\begin{equation}
    \frac{d\|\omega_{\boldsymbol{\eta}}\|_{L^2}^2}{dt} = -2\int_{\mathbb T^2} \omega_{\boldsymbol{\eta}} \boldsymbol{\eta}\cdot \nabla \omega_{\tilde{\mathbf u}} +\epsilon^2 \omega_{\boldsymbol{\eta}}\tilde{\mathbf{v}}\cdot \nabla \omega_{\tilde{\mathbf{v}}}   d^2 \mathbf r.
\end{equation}
This yields the following differential inequality:
\begin{align}
    \frac{d\|\omega_{\boldsymbol{\eta}}\|_{L^2}^2}{dt}&\leq 2 \left | \int_{\mathbb T^2} \omega_{\boldsymbol{\eta}} \boldsymbol{\eta}\cdot \nabla \omega_{\tilde{\mathbf u}} + \epsilon^2 \omega_{\boldsymbol{\eta}}\tilde{\mathbf{v}}\cdot \nabla \omega_{\tilde{\mathbf{v}}}   d^2 \mathbf r  \right | \\
    & \leq 2  \int_{\mathbb T^2} \left \| \omega_{\boldsymbol{\eta}} \boldsymbol{\eta}\cdot \nabla \omega_{\tilde{\mathbf u}} + \epsilon^2 \omega_{\boldsymbol{\eta}}\tilde{\mathbf{v}}\cdot \nabla \omega_{\tilde{\mathbf{v}}} \right \|  d^2 \mathbf r  \\
    &\leq 2 \int_{\mathbb T^2}  \| \omega_{\boldsymbol{\eta}}\| \ \| \boldsymbol{\eta}\| \ \|\nabla \omega_{\tilde{\mathbf u}} \| d^2 \mathbf{r} + \epsilon^2  \| \omega_{\boldsymbol{\eta}}\| \ \| \tilde{\mathbf{v}}\cdot \nabla \omega_{\tilde{\mathbf{v}}} \|  d^2 \mathbf r.
\end{align}
Finally, Hölder's inequality gives
\begin{align}
    \frac{d\|\omega_{\boldsymbol{\eta}}\|_{L^2}^2}{dt}&\leq 2 \|\omega_{\boldsymbol{\eta}}\|_{L^2} \left [ \left (\int_{\mathbb T^2}  \|\boldsymbol{\eta}\|^2 \|\nabla \omega_{\tilde{\mathbf u}}\|^2 d^2 \mathbf r \right )^{1/2} + \epsilon^2 \|\tilde{\mathbf{v}}\cdot \nabla \omega_{\tilde{\mathbf{v}}}\|_{L^2} \right ] \\
    &= 2\|\omega_{\boldsymbol{\eta}}\|_{L^2} \left [ \|\nabla \omega_{\tilde{\mathbf{u}}} \|_{L^\infty}\|\boldsymbol{\eta}\|_{L^2} + \epsilon^2 \|\tilde{\mathbf{v}}\cdot \nabla \omega_{\tilde{\mathbf{v}}}\|_{L^2}   \right].
\end{align}
\end{proof}

\section{Proof of \cref{thm:error_bound_value}} \label{app:error_bound_value}
To prove this theorem, we need some background information on monotone dynamical systems~\cite{smith1995monotone}. Here we describe the key concepts and then use them to solve the differential inequality, thereby proving the theorem.

Let us begin by introducing the following notions and definition. We define the positive and negative orthants of $\mathbb R^n$ as
\begin{align}
    R_+^n &\coloneq \{\mathbf{x}\in\mathbb R^n: x_i \geq 0 \} \\
    R_-^n &\coloneq \{\mathbf{x}\in\mathbb R^n: x_i \leq 0 \}.
\end{align}
\begin{defn}
    Let $\mathbf{u},\mathbf{v}\in\mathbb R^{n}$ be vectors. The \textit{component-wise inequality}, denoted by $\leq_c$, is defined as follows:
    \begin{equation}
        \mathbf{u} \leq_c \mathbf v \iff u_i \leq v_i \ \forall i \in [n].
    \end{equation}
\end{defn}
This relation is sometimes referred to as the \textit{vector order} in $\mathbb R^n$. 
\begin{defn}
    Let $\mathbf{f}(\mathbf{r},t)\colon\mathbb R^n \times \mathbb R_+ \rightarrow \mathbb R^n$ be a vector field.  We say that $\mathbf{f}$ is \textit{quasimonotone nondecreasing} if, for all $\mathbf{u},\mathbf{v}\in\mathbb R^n$ and $t\geq 0$,
    \begin{equation}
        \mathbf{u} \leq_c \mathbf{v}
    \end{equation}
    implies
    \begin{equation}
        \mathbf{f}(t,\mathbf{u}) \leq_c \mathbf{f}(t,\mathbf{v}).
    \end{equation}
\end{defn}
This allows us to prove the following.

\begin{thm} \label{thm:monotone-bound}
    Let $\mathbf f(t,\mathbf{u})\colon\mathbb R^n \times \mathbb R_+ \rightarrow \mathbb R^n$ be a quasimonotone nondecreasing vector field. Suppose
    \begin{equation}
        \frac{d\mathbf{u}}{dt}\leq_c \mathbf f(t,\mathbf{u}), \ \frac{d\mathbf{v}}{dt} = \mathbf f(t,\mathbf{v}), \ \mathbf{u}(0) \leq_c \mathbf{v}(0).
    \end{equation}
    Then
    \begin{equation}
        \mathbf{u}(t) \leq_c \mathbf{v}(t) \ \forall t\geq 0.
    \end{equation}
\end{thm}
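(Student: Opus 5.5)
The plan is the classical Kamke--Müller comparison argument: first prove a \emph{strict} version of the inequality against a slightly inflated supersolution, then remove the inflation by a limiting argument. Throughout I will assume, as is implicit in the statement, that $\mathbf f$ is continuous in $t$ and locally Lipschitz in $\mathbf u$, and that $\mathbf u$ is differentiable. These are the standing hypotheses under which $\frac{d\mathbf v}{dt}=\mathbf f(t,\mathbf v)$ has unique solutions depending continuously on the initial data and on additive perturbations of the vector field. In the application to Theorem~\ref{thm:error_bound_value}, $\mathbf f$ comes from the pair of differential inequalities in Lemmas~\ref{lem:error_bound} and~\ref{lem:vort_error_bound} and is affine with nonnegative coupling coefficients, so these hypotheses hold globally.

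\textbf{Step 1 (strict comparison).} Fix $\delta>0$, let $\mathbf 1$ be the all-ones vector, and let $\mathbf v_\delta$ solve
\begin{equation}
\frac{d\mathbf v_\delta}{dt}=\mathbf f(t,\mathbf v_\delta)+\delta\mathbf 1, \qquad \mathbf v_\delta(0)=\mathbf v(0)+\delta\mathbf 1 .
\end{equation}
Then $\mathbf u(0)<\mathbf v_\delta(0)$ componentwise and strictly. I will show that $u_i(t)<v_{\delta,i}(t)$ for all $i$ and all $t\ge 0$ on the existence interval. Suppose not, and let $t^*>0$ be the first time at which equality holds in some component $i$. This is well defined by continuity and strictness at $t=0$. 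At $t^*$ we have $\mathbf u(t^*)\le_c\mathbf v_\delta(t^*)$ and $u_i(t^*)=v_{\delta,i}(t^*)$. Since $\mathbf f$ is quasimonotone nondecreasing,
\begin{equation}
\frac{du_i}{dt}(t^*)\le f_i(t^*,\mathbf u(t^*))\le f_i(t^*,\mathbf v_\delta(t^*))<\frac{dv_{\delta,i}}{dt}(t^*).
\end{equation}
On the other hand, $u_i-v_{\delta,i}$ is negative on $[0,t^*)$ and zero at $t^*$, so its derivative at $t^*$ is $\ge 0$. This is a contradiction. Note that I only use monotonicity in the $i$-th component with the $i$-th coordinates equal; this is the usual Kamke condition, and it is implied by the definition given in the paper.

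\textbf{Step 2 (limit $\delta\to 0$).} By continuous dependence of solutions on initial data and on the additive parameter $\delta$, which follows from Grönwall's inequality using the local Lipschitz bound, $\mathbf v_\delta(t)\to\mathbf v(t)$ uniformly on compact subintervals of $[0,\infty)$ where $\mathbf v$ exists. Passing to the limit in $\mathbf u(t)<\mathbf v_\delta(t)$ gives $\mathbf u(t)\le_c\mathbf v(t)$ for all $t\ge0$.

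\textbf{Main obstacle.} The only delicate point is Step 2: the statement as written does not list the regularity of $\mathbf f$, and without a Lipschitz-type condition the conclusion can fail through non-uniqueness. I would handle this by stating the Lipschitz assumption explicitly and checking that it holds for the specific linear system used in Appendix~\ref{app:error_bound_value}. Once that is settled, the first-touching-time argument of Step 1 is routine. A minor care point is that the first touching time is well defined because strict inequalities persist on a neighborhood of $t=0$.
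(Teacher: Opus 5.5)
Your proof is correct, and it is the standard rigorous version of the argument. The paper's route is a one-step first-exit argument that is not valid as written; your proposal repairs it.

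\textbf{The paper's argument.} The paper sets $\mathbf w=\mathbf u-\mathbf v$ and takes the first exit time $\tau$ from the nonpositive orthant, so some $w_i(\tau)=0$. It then asserts $\frac{dw_i}{dt}(\tau)>0$, which would contradict quasimonotonicity. That assertion is unjustified, because a component can leave $(-\infty,0]$ with zero derivative. The only information actually available is $\frac{dw_i}{dt}(\tau)\le f_i(\tau,\mathbf u(\tau))-f_i(\tau,\mathbf v(\tau))\le 0$, and this gives no contradiction.

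\textbf{The statement needs a regularity hypothesis.} The paper's proof never uses any regularity of $\mathbf f$, and the statement is in fact false without one. Take $n=1$, $f(t,u)=\sqrt{\max\{u,0\}}$ (nondecreasing in $u$), $v\equiv 0$ and $u(t)=t^2/4$. All hypotheses hold, yet $u(t)>v(t)$ for every $t>0$. This is exactly the non-uniqueness the paper itself notes for $du/dt=a\sqrt{u}$ in the proof of Theorem~\ref{thm:error_bound_value}. In this example the exit time is $\tau=0$, not $\tau>0$ as the paper claims.

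\textbf{What your proof supplies.} Your $\delta\mathbf 1$ inflation of both the vector field and the initial data gives the strict inequality at the touching time. Your explicit Lipschitz assumption is what lets you pass to the limit $\delta\to 0$, and the counterexample above shows it cannot be dropped. Step 1 uses only the Kamke condition, so you prove a slightly more general comparison principle. The paper's route gains nothing beyond brevity.

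\textbf{The application.} You correctly note that in the proof of Theorem~\ref{thm:error_bound_value} the comparison field is $\mathbf f(t,\mathbf y)=A\mathbf y+\epsilon^2e^{2\gamma_u t}\mathbf b$ with $A$ entrywise nonnegative. That field is globally Lipschitz and order-preserving, so your version of the theorem is enough there.
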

\begin{proof}
    Let
    \begin{equation}
        \mathbf{w} \coloneq \mathbf{u} - \mathbf{v}.
    \end{equation}
    Our goal is to show that $\mathbf{w} \leq_c 0$, i.e., $\mathbf{w} \in \mathbb R^n_-$ for $t\geq 0$. We prove this by contradiction. Suppose $\mathbf{w} \notin \mathbb R^n_-$. This means that there exists an exit time
    \begin{equation}
        \tau \coloneq \inf \{t\geq 0: \mathbf{w}(t) \notin \mathbb R^n_- \}.
    \end{equation}
    As $w$ is continuous and $\tau>0$ is well-defined, we have that  $\mathbf{w}(t) \in \mathbb R_-^n$ for $t\leq \tau$, and that at $t=\tau$, there is at least one $i$ satisfying
    \begin{equation}
        w_i(\tau) = 0,
    \end{equation}
    which is equivalent to
    \begin{equation}
        u_i(\tau) = v_i(\tau).
    \end{equation}
    Now, the equation for the $i$th component is
    \begin{align}
        \frac{dw_i}{dt} &= \frac{du_i}{dt} - \frac{dv_i}{dt} \\
        &\leq f_i(t,\mathbf{u}) - f_i(t,\mathbf{v}).
    \end{align}
    Since $\tau$ is the exit time, we must have $\frac{dw_i(\tau)}{dt} > 0$, meaning that
    \begin{equation}
        f_i(\tau,\mathbf{u}) > f_i(\tau,\mathbf{v}).
    \end{equation}
    However, by quasimonotonicity, $\mathbf{u}(\tau) \leq_c \mathbf{v}(\tau)$ implies
    \begin{equation}
        f_i(\tau,\mathbf{u}) \leq f_i(\tau,\mathbf{v}),
    \end{equation}
    which is a contradiction. Hence, $\mathbf{w}\in\mathbb R_-^n$ for all $t\geq 0$, so
    \begin{equation}
        \mathbf u(t) \leq_c \mathbf{v}(t) \ \forall t\geq 0.
    \end{equation}
\end{proof}
\noindent We are now ready to prove \cref{thm:error_bound_value}.
\errorboundvalue*
\begin{proof}
Combining Lemmas~\ref{lem:error_bound} and~\ref{lem:vort_error_bound}, we have two coupled differential inequalities:
\begin{equation} \label{eq:coupled-diff-ineqs}
    \begin{aligned}
            \frac{d\|\boldsymbol{\eta}\|_{L^2}^2}{dt}&\leq 2\|\boldsymbol{\eta}\|_{L^2} \left [\|\tilde{\mathbf{u}} \|_{L^\infty} \|\omega_{\boldsymbol{\eta}}\|_{L^2}  + \epsilon^2 \|(\tilde{\mathbf{v}}\cdot \nabla)\tilde{\mathbf{v}}\|_{L^2}  \right ] \\
        \frac{d\|\omega_{\boldsymbol{\eta}}\|_{L^2}^2}{dt}&\leq 2\|\omega_{\boldsymbol{\eta}}\|_{L^2} \left [ \|\nabla \omega_{\tilde{\mathbf{u}}} \|_{L^\infty}\|\boldsymbol{\eta}\|_{L^2} + \epsilon^2 \|\tilde{\mathbf{v}}\cdot \nabla \omega_{\tilde{\mathbf{v}}}\|_{L^2}   \right] \\
    \end{aligned}
\end{equation}
with initial condition $\|\boldsymbol{\eta}\|_{L^2}(0) = \|\omega_{\boldsymbol{\eta}}\|_{L^2}(0) = 0$. We would like a differential inequality that bounds the time derivatives of $\|\boldsymbol{\eta}\|_{L^2}$ and $\|\omega_{\boldsymbol{\eta}}\|_{L^2}$ as opposed to their squares. As an interlude, consider the following ODE:
\begin{equation}
    \frac{du}{dt} = a\sqrt{u}, \ u(0) = 0.
\end{equation}
Note that the right-hand side is not Lipschitz at the origin. This means that there is not a unique solution to the ODE. In fact, $u(t) = 0$ and $u(t) = a^2t^2/4$ are both valid solutions, with the latter also being a solution\footnote{
If we allow $u$ to be $C^0$, we can have infinitely many solutions, which take the form:
\begin{equation*}
    u(t) = \begin{cases}
    0, \ t\leq \tau\\
    \frac{a^2t^2}{4}, \ t> \tau
    \end{cases}
\end{equation*}
for any $\tau \geq 0$.
}
to $d\sqrt{u}/dt = a/2$.
Returning to Eq.~\eqref{eq:coupled-diff-ineqs}, we see that the right-hand side is also not Lipschitz at the origin, so a similar non-uniqueness problem arises. Clearly, one possible solution is
\begin{equation}
    \|\boldsymbol{\eta}\|_{L^2}(t) = \|\omega_{\boldsymbol{\eta}}\|_{L^2}(t) = 0.
\end{equation}
This would imply that Eqs.~\eqref{eq:incomp-euler} and~\eqref{eq:lin-euler-total-vel} have identical solutions given the same domain and initial condition, which is clearly false. Hence, this is not a physical solution. Similarly, one can rule out solutions where either $\|\boldsymbol{\eta}\|_{L^2}(t) = 0$ or $\|\omega_{\boldsymbol{\eta}}\|_{L^2}(t) = 0$. We then resort to the other branch of solutions, given by the following coupled differential inequalities:
\begin{align}
    \frac{d\|\boldsymbol{\eta}\|_{L^2}}{dt} &\leq \|\tilde{\mathbf{u}}\|_{L^\infty}  \|\omega_{\boldsymbol{\eta}}\|_{L^2} + \epsilon^2\|(\tilde{\mathbf{v}}\cdot \nabla)\tilde{\mathbf{v}}\|_{L^2}  \\
    \frac{d\|\omega_{\boldsymbol{\eta}}\|_{L^2}}{dt} &\leq  \|\nabla \omega_{\tilde{\mathbf{u}}}|\|_{L^\infty}\|\boldsymbol{\eta}\|_{L^2} + \epsilon^2\|\tilde{\mathbf{v}}\cdot \nabla \omega_{\tilde{\mathbf{v}}}\|_{L^2}.
\end{align}
These inequalities can be obtained by setting $\|\boldsymbol{\eta}\|_{L^2}(0) = \|\omega_{\boldsymbol{\eta}}\|_{L^2}(0) = \delta$ for some $\delta>0$, simplifying Eq.~\eqref{eq:coupled-diff-ineqs}, and then taking the limit of $\delta \rightarrow 0$. We can write these inequalities in the form
\begin{equation}
    \frac{d\mathbf{y}}{dt} \leq_c A(t)\mathbf{y} + \mathbf{b}(t),
\end{equation}
where
\begin{align}
    \mathbf y &= 
    \begin{bmatrix}
        \|\boldsymbol{\eta}\|_{L^2} \\
        \|\omega_{\boldsymbol{\eta}}\|_{L^2}
    \end{bmatrix}, ~ \mathbf{y}(0) = 0 \\
    A(t) &= 
    \begin{bmatrix}
        0 & \|\tilde{\mathbf{u}}\|_{L^\infty}  \\
        \|\nabla \omega_{\tilde{\mathbf{u}}} \|_{L^\infty} & 0
    \end{bmatrix} \\
    \mathbf b(t) &= \epsilon^2
    \begin{bmatrix}
        \|(\tilde{\mathbf{v}}\cdot \nabla)\tilde{\mathbf{v}}\|_{L^2} \\
        \|\tilde{\mathbf{v}}\cdot \nabla \omega_{\tilde{\mathbf{v}}}\|_{L^2}
    \end{bmatrix}.
\end{align}
With some algebra, we find that
\begin{align}
    \|\tilde {\mathbf u} \|_{L^\infty} &\leq \sqrt{\alpha_0 + \epsilon\alpha_1 e^{\gamma_u t} + \epsilon^2\alpha_2 e^{2\gamma_u t}} \\
    \|\nabla \omega_{\tilde{\mathbf u}} \|_{L^\infty} &\leq \sqrt{\beta_0 + \epsilon \beta_1 e^{\gamma_u t} + \epsilon^2\beta_2 e^{2\gamma_u t}}  \\
    \|(\tilde{\mathbf{v}}\cdot \nabla)\tilde{\mathbf{v}}\|_{L^2} &\leq b_1 e^{2\gamma_u t} \\
    \|\tilde{\mathbf{v}}\cdot \nabla \omega_{\tilde{\mathbf{v}}}\|_{L^2} &\leq b_2 e^{2\gamma_u t},
\end{align}
with all the coefficients being positive.

For the rest of the analysis, we consider evolution times up to $t \leq \frac{1}{\gamma_u} \log(1/\epsilon)$. This allows us to impose a constant bound on the matrix elements in the differential inequality. The bounds are as follows:
\begin{align}
    \|\tilde {\mathbf u} \|_{L^\infty} &\leq \sqrt{\alpha_0 + \alpha_1 + \alpha_2} \\
    \|\nabla \omega_{\tilde{\mathbf u}} \|_{L^\infty}&\leq \sqrt{\beta_0 + \beta_1 + \beta_2}.
\end{align}
Let $\alpha,\beta$ be positive constants such that
\begin{align}
    \alpha^2 &> \max \{\sqrt{\alpha_0 + \alpha_1 + \alpha_2}, 2\gamma_u \} \\
    \beta^2 &> \max \{\sqrt{\beta_0 + \beta_1 + \beta_2}, 2\gamma_u\},
\end{align}
so $\alpha\beta > 2\gamma_u$. The differential inequality can then be written as
\begin{equation}
    \frac{d}{dt}
        \begin{bmatrix}
        \|\boldsymbol{\eta}\|_{L^2} \\
        \|\omega_{\boldsymbol{\eta}}\|_{L^2}
    \end{bmatrix}  \leq_c \begin{bmatrix}
        0 & \alpha^2 \\
        \beta^2 & 0
    \end{bmatrix}
    \begin{bmatrix}
        \|\boldsymbol{\eta}\|_{L^2} \\
        \|\omega_{\boldsymbol{\eta}}\|_{L^2}
    \end{bmatrix} 
    +\epsilon^2e^{2\gamma_u t} \begin{bmatrix}
        b_1 \\
        b_2
    \end{bmatrix}.
\end{equation}
It is easy to verify that the right-hand side is a quasimonotone nondecreasing vector field. Using \cref{thm:monotone-bound}, we find the solution to be
\begin{equation}
    \mathbf{y} \leq_c \int_0^t e^{A(t-s)}\mathbf{b}(s) \, ds.
\end{equation}
This yields the following bound for $\|\boldsymbol{\eta}\|_{L^2}$:
\begin{align}
    \|\boldsymbol{\eta}\|_{L^2} &\leq \frac{\epsilon^2}{2} \left [\int_{0}^{t}e^{2\gamma_u s}\left (e^{\alpha\beta(t-s)}b_1 + e^{-\alpha\beta(t-s)}b_1 + e^{\alpha\beta(t-s)}\frac{\alpha b_2}{\beta} - e^{-\alpha\beta(t-s)}\frac{\alpha b_2}{\beta}\right )  \, ds \right ] \\
    &\leq \frac{\epsilon^2}{2}\left [ e^{\alpha\beta t} \int_{0}^{t} e^{2\gamma_u s-\alpha\beta s} \left ( b_1 + \frac{\alpha b_2}{\beta} \right ) \, ds + e^{-\alpha \beta t} \int_{0}^{t} e^{2\gamma_u s + \alpha\beta s} b_1 \, ds  \right ] . \label{eq:bound-intermediate}
\end{align}
Note that for $t\geq 0$,
\begin{equation}
    e^{-\alpha \beta t} \int_0^t e^{2\gamma_u s + \alpha \beta s} \, ds \leq e^{\alpha \beta t} \int_0^t e^{2\gamma_u s - \alpha \beta s} \, ds.
\end{equation}
This can be shown as follows:
\begin{align}
    \frac{e^{\alpha\beta t}\int_0^t e^{2\gamma_u s - \alpha \beta s} \, ds}{ e^{-\alpha \beta t} \int_0^t e^{2\gamma_u s + \alpha \beta s} \, ds} &=\frac{\frac{1}{\alpha \beta - 2\gamma_u} \left [e^{\alpha \beta t } - e^{2\gamma_u t}\right ]}{\frac{1}{\alpha \beta + 2\gamma_u} \left [e^{2\gamma_u t} - e^{-\alpha \beta t}\right]}  \\
    &=\frac{\frac{1}{\alpha \beta - 2\gamma_u} \left [e^{(\alpha \beta-2\gamma_u) t } - 1\right ]}{\frac{1}{\alpha \beta + 2\gamma_u} \left [1 - e^{-(\alpha \beta+2\gamma_u) t}\right]} \eqcolon \frac{n(t)}{d(t)}.
\end{align}
Since $\alpha \beta > 2\gamma_u$, both the numerator and denominator are monotonically increasing. L'Hôpital's rule yields $n(0)/d(0) = 1$. We can also see that
\begin{equation}
    \frac{n'(t)}{d'(t)} = e^{2\alpha \beta t} \geq 1.
\end{equation}
So, $n(t)/d(t) \geq 1$. Using this to bound Eq.~\eqref{eq:bound-intermediate}, we get
\begin{align}
    \|\boldsymbol{\eta}\|_{L^2} &\leq \epsilon^2\left (b_1 + \frac{\alpha b_2}{2\beta} \right ) e^{\alpha \beta t}\int_0^t e^{(2\gamma_u - \alpha \beta) s } \, ds \\
    &=  \epsilon^2 \kappa  e^{\alpha \beta t} \left [ 1- e^{(2\gamma_u - \alpha \beta)t} \right] \\
    &\leq \epsilon^2 \kappa  \left ( e^{\alpha \beta t} - 1 \right ),
\end{align}
where $\kappa  \coloneq \frac{1}{\alpha \beta - 2\gamma_u } \left (b_1 + \frac{\alpha b_2}{2\beta} \right )$. As $\alpha\beta > 2\gamma_u$, we have that $\kappa >0$.
\end{proof}

\end{document}